\def\independenT#1#2{\mathrel{\rlap{$#1#2$}\mkern2mu{#1#2}}}
\newcommand\independent{\protect\mathpalette{\protect\independenT}{\perp}}
\newcommand{\pr}[1]{\mathbb{P}\left(#1\right)}
\newcommand{\prwrt}[2]{\mathbb{P}_{#1}\left(#2\right)}
\newcommand{\prst}[2]{\mathbb{P}\left(#1 \ \middle| \ #2\right)}
\newcommand{\E}[1]{\mathbb{E}\left[#1\right]}
\newcommand{\Est}[2]{\mathbb{E}\left[#1 \ \middle| \ #2\right]}
\newcommand{\Ewrt}[2]{\mathbb{E}_{#1}\left[#2\right]}
\DeclareMathOperator*{\argmin}{arg\,min}
\def\R{\mathbb{R}}
\newcommand{\One}[1]{\mathbbm{1}\left\{#1\right\}}
\newcommand{\one}[1]{\mathbbm{1}_{#1}}
\newcommand{\norm}[1]{\|#1\|}
\newcommand{\opnorm}[1]{\|#1\|_{\textnormal{op}}}
\newcommand{\fronorm}[1]{\|#1\|_{\textnormal{F}}}
\newcommand{\bignorm}[1]{\big\|#1\big\|}
\newcommand{\Bignorm}[1]{\Big\|#1\Big\|}
\newcommand{\Xcal}{\mathcal{X}}
\newcommand{\thetah}{\widehat{\theta}}
\newcommand{\thetahi}{\thetah_{\textnormal{init}}}
\newcommand{\thetamle}{\widehat{\theta}_{\textnormal{MLE}}}
\newcommand{\Xt}{\widetilde{X}}
\newcommand{\Pt}{\widetilde{P}}
\newcommand{\bigo}{\mathcal{O}}
\newcommand{\bigot}{\widetilde{\mathcal{O}}}
\newcommand{\littleo}{\mathrm{o}}
\newcommand{\bigop}{\mathcal{O}_{\mathrm{P}}}
\newcommand{\dtv}{\mathsf{d}_{\mathsf{TV}}}
\newcommand{\dex}{\mathsf{d}_{\mathsf{exch}}}
\newcommand{\iidsim}{\stackrel{\textnormal{iid}}{\sim}}
\newcommand{\eps}{\varepsilon}
\newcommand{\Ecal}{\mathcal{E}}
\newcommand{\Lcal}{\mathcal{L}}
\newcommand{\ident}{\mathbf{I}}
\newcommand{\normal}{\mathcal{N}}
\newcommand{\ball}{\mathbb{B}}
\newcommand{\giv}{\,|\,}
\newcommand{\leb}{\textnormal{Leb}}
\newcommand{\pen}{\mathcal{R}}
\newtheorem{theorem}{Theorem}
\newtheorem{lemma}{Lemma}
\newtheorem{assumption}{Assumption}
\newtheorem{definition}{Definition}
\newtheorem{setting}{Model Class}
\newtheorem{problem}{Problem Domain}
\newtheorem{example}{Example}
\numberwithin{equation}{section}
\title{Testing goodness-of-fit and conditional independence\\ with approximate co-sufficient sampling}
\author{Rina Foygel Barber\thanks{Department of Statistics, University of Chicago} \ and Lucas Janson\thanks{Department of Statistics, Harvard University}}
\date{}
\begin{document}
\maketitle

%%%%%%%%%%%%%%%%%%%%%%%%%%%%%%%%%%%%%%%%%%%%%%%%%%%%%%%%%%%
%% abstract
%%%%%%%%%%%%%%%%%%%%%%%%%%%%%%%%%%%%%%%%%%%%%%%%%%%%%%%%%%%

\begin{abstract}
Goodness-of-fit (GoF) testing is ubiquitous in statistics, with direct ties to model selection, confidence interval construction, conditional independence testing, and multiple testing, just to name a few applications. While testing the GoF of a simple (point) null hypothesis provides an analyst great flexibility in the choice of test statistic while still ensuring validity, most GoF tests for composite null hypotheses are far more constrained, as the test statistic must have a tractable distribution over the entire null model space. A notable exception is \emph{co-sufficient sampling} (CSS): resampling the data conditional on a sufficient statistic for the null model guarantees valid GoF testing using any test statistic the analyst chooses. But CSS testing requires the null model to have a compact (in an information-theoretic sense) sufficient statistic, which only holds for a very limited class of models; even for a null model as simple as logistic regression, CSS testing is powerless. In this paper,
we leverage the concept of \emph{approximate sufficiency} to generalize CSS testing to essentially any parametric model with an asymptotically-efficient estimator; we call our extension ``approximate CSS'' (aCSS) testing. We quantify the finite-sample Type I error inflation of aCSS testing and show that it is vanishing under standard maximum likelihood asymptotics, for any choice of test statistic. We apply our proposed procedure both theoretically and in simulation to a number of models of interest to demonstrate its finite-sample Type I error and power.
\end{abstract}

\noindent{\bf Keywords}: Goodness-of-fit test, approximate sufficiency, co-sufficiency, conditional randomization test, model-X, conditional independence testing, high-dimensional inference.

%%%%%%%%%%%%%%%%%%%%%%%%%%%%%%%%%%%%%%%%%%%%%%%%%%%%%%%%%%%
%% intro
%%%%%%%%%%%%%%%%%%%%%%%%%%%%%%%%%%%%%%%%%%%%%%%%%%%%%%%%%%%

\section{Introduction}\label{sec:intro}
Suppose we observe data $X$ belonging to some sample space $\Xcal$, and would like to test whether it comes from some parametric null model $\{P_\theta:\theta\in\Theta\}$, where $\Theta\subseteq\R^d$, versus a more complex (usually higher-dimensional) model. This problem of so-called ``goodness-of-fit" (GoF) testing is one of the most fundamental in statistics, with a vast literature exhibiting applications and theoretical and methodological development. We pause here to highlight a few of the many areas of statistics to which GoF testing is directly applicable, including some problems that are not obviously or commonly associated with  GoF.

\begin{problem}[Standard goodness-of-fit testing]\label{ex:goft}
GoF testing is commonly used to test a postulated model or distributional property, often as a precursor to further statistical analysis that assumes the postulated model/property to be correct. Such null models/properties include standard distributional families, nonparametric properties such as symmetry or log-concavity, time-series properties such as stationarity, and relational properties such as independence.
\end{problem}

\begin{problem}[Model selection]\label{ex:ms}
GoF testing can also be used to select a best-fitting model through simultaneously testing a family of models. For instance, this could be choosing a sparse model in regression, selecting the number of clusters or principle components in unsupervised learning, or identifying change points in a time series.
\end{problem}

\begin{problem}[Confidence interval construction]\label{ex:cic}
Suppose the data $X$ is distributed according to a known model $\{P_{\gamma,\theta}: (\gamma,\theta)\in\Gamma\times\Theta\}$, where $\Gamma\subseteq\R^m$ and $\Theta\subseteq\R^d$, and the goal is to construct a confidence region for $\gamma$ in the presence of the nuisance parameter $\theta$. 
If for any $\gamma_0$, we can construct a GoF test for the null model $\{P_{\gamma_0,\theta}:\theta\in\Theta\}$, then the set of $\gamma_0$ at which we fail to reject the test constitutes a valid confidence region. 
\end{problem}

\begin{problem}[Conditional independence testing]\label{ex:cit}
In many regression and graphical modeling problems, the primary question of interest is whether a given triple of random variables $(X,Y,Z)$ satisfies conditional independence, i.e., $Y\independent X\giv Z$. 
If $X\giv Z$ is distributed according to a known model $\{P_\theta(\cdot\giv Z):\theta\in\Theta\}$, then testing conditional independence can be formulated as a GoF test where $\{P_\theta(\cdot\giv Z):\theta\in\Theta\}$ is the null model for $X\giv Z,Y$. (Note that, when we apply a GoF test to the conditional independence problem in this way,
we implicitly treat $Y$ and $Z$ as fixed, and check for goodness-of-fit of $X$'s conditional model.)
\end{problem}

\begin{problem}[Multiple testing]\label{ex:mt}
In multiple testing, the goal is to reject a subset of a fixed family of null hypotheses. When each hypothesis test is a GoF test (i.e., its null is lower-dimensional than its alternative), testing any intersection of null hypotheses (i.e., testing the ``global null" for any subset of hypotheses) constitutes a GoF test as well, and combining such intersection GoF tests through a closed testing procedure \citep{marcus1976closed} produces a subset to reject that controls the familywise error rate.
\end{problem}

The key challenges of any GoF testing problem are to find a test that is valid, in that it controls the Type I error at a prespecified significance level, and that is powerful, in that it rejects the null model as often as possible when it does not hold. 
For parametric null models (the focus of this paper), there are many existing methods for testing GoF, with canonical choices including the popular score, likelihood ratio, and Wald tests. The standard approach for these tests and many others is to prescribe a test statistic (chosen to be powerful under a given alternative model) and derive a (often asymptotic) null distribution for it. Such tests require certain regularity conditions on the alternative model (in order to construct a well-behaved test statistic) and on the null model (in order to establish the validity of the null distribution for the test statistic) that are generally quite similar to those needed for the maximum likelihood estimator under both the null and alternative to be asymptotically normal. While these tests are extremely popular and have been fruitfully applied through much of the history of statistics to many problems, the regularity assumptions placed on the alternative distribution in particular limit the ability to fully leverage domain knowledge to maximize the statistical power. To elaborate, consider the following cases which often arise in practice when applying parametric GoF tests.
\begin{itemize}
    \item {\bf Some prior information is available about the relative plausibility of different regions of the alternative model.} Ideally we would like to incorporate this prior information into our test statistic in order to maximize power (e.g., through a test statistic derived from Bayesian inference), but standard GoF tests only provide the null distribution for a test statistic which is determined by the entire alternative space, and give little flexibility to incorporate prior knowledge into that test statistic while still retaining a valid null distribution. An extreme case would be when certain regions of the alternative are known to be completely implausible, i.e., we would like to remove them from the alternative model entirely, yet removing them from the model would violate the regularity conditions required for the alternative model. For example, we may know that under the alternative some $k$-dimensional parameter has at most $d<k$ non-zero entries, but we do not know which ones. Such a sparse alternative model would violate the usual assumption that the parameter space is convex, forcing one to ignore the sparsity and instead operate under (and hence derive a test statistic from) the larger, mostly implausible, $k$-dimensional alternative model. As we see in the next scenario, if $k$ is too large, even this route is not feasible.
    \item {\bf The alternative model is high- or even infinite-dimensional.} Since standard GoF tests treat their prescribed test statistics as fixed in the asymptotic regime in which they prove validity, those test statistics can only be designed to be powerful against fixed- (and finite-) dimensional alternatives. If we have a high-dimensional alternative (i.e., whose dimension is not assumed negligible relative to the sample size, which includes any nonparametric alternative), we would ideally like to choose a test statistic which changes with the sample size to be powerful against a sequence of alternatives which changes as the sample size grows asymptotically. But standard GoF tests cannot accommodate such a choice, forcing one to instead operate under a fixed-dimensional alternative that may represent a vanishing fraction of plausible alternatives, or a very coarse approximation to the space of realistic alternatives.
\end{itemize}
The common thread in these cases is that the test statistic that would be most powerful to use given the domain knowledge at hand 
is not accompanied by theoretical guarantees or any known (exact or approximate) null distribution. In our simulations in Section~\ref{sec:examples}, we will study some examples where 
standard tests
can be applied (and will compare aCSS to the score test for those examples), and others where, as in the scenarios above, 
standard tests
cannot be applied and thus a more flexible method like aCSS is necessary.

However, constructing a valid test around an arbitrary test statistic $T(X)$ is possible only in very limited settings.
In particular, if
$\Theta$ is simple, i.e., it contains only a single point so that there are no unknown/nuisance parameters in the null model, then any test statistic $T(X)$'s null distribution can be arbitrarily-well approximated computationally by repeatedly independently sampling copies $\Xt$ of $X$ from the null distribution and recomputing the same test statistic on the copies. 
To be concrete, if the statistic $T(X)$ is chosen such that larger (positive) values are seen as evidence against the null, we can draw $M$ i.i.d.~copies $\Xt^{(1)},\dots,\Xt^{(M)}$ from the null distribution,
and define a (discretized) p-value
\begin{equation}\label{eqn:pval}
\textnormal{pval} = \textnormal{pval}_T(X,\Xt^{(1)},\dots,\Xt^{(M)}) = \frac{1}{M+1}\left(1+\sum_{m=1}^M \One{T(\Xt^{(m)})\ge T(X)}\right),
\end{equation}
which is guaranteed to satisfy $\pr{\textnormal{pval}\leq \alpha}\leq \alpha$ under the null, for any predefined rejection level $\alpha$.

More generally, when $\Theta$ is not a singleton set (i.e., the null hypothesis is composite), in principle we can 
still construct a p-value of the form~\eqref{eqn:pval}
as long as we are able to sample a set of copies $\Xt^{(1)},\dots,\Xt^{(M)}$ of $X$ so that $X,\Xt^{(1)},\dots,\Xt^{(M)}$ are \emph{exchangeable} under the null.
We emphasize that this exchangeability property continues to enable the analyst to use any desired test statistic $T(X)$, as the validity of the p-value is 
unaffected. Of course, to achieve high power, we should aim to choose a function $T(X)$ that is likely to be large under the alternative hypothesis.
Note that we absorb everything that is not $X$ into the definition of the function $T$, e.g., for testing conditional independence $X\independent Y \giv Z$ as in Problem Domain~\ref{ex:cit}, $T$ can depend arbitrarily on $Y$ and $Z$ as well (since, after conditioning 
on $Y$ and $Z$, they are treated as fixed and nonrandom).

To summarize, we have seen that\[\begin{array}{l}\textnormal{The problem of GoF testing with arbitrary test statistics can be reduced to}\\\textnormal{one of sampling copies of $X$ that are exchangeable under the null.}\end{array}\]
These copies $\Xt^{(1)},\dots,\Xt^{(M)}$ then act as a ``control group'' for the real data $X$, and we can
compare the real statistic $T(X)$ against the ``control group'' values $T(\Xt^{(m)})$ to test the null.
Of course, aside from the setting of a simple null,
sampling exchangeable copies is not necessarily a straightforward task. 
In particular,
in order to have power, we must ensure our null-exchangeable copies do not remain exchangeable under the alternative; for instance sampling $\Xt^{(1)}=\cdots=\Xt^{(M)}=X$ trivially satisfies the exchangeability property under the null, but also under any alternative, and hence clearly equation~\eqref{eqn:pval} produces a useless $p$-value of 1 with probability 1 (for any choice of $T$).

One way to sample exchangeable copies when $\Theta$ is composite is by conditioning on a sufficient statistic $S(X)$ for $\theta$, since then by definition the conditional distribution $X\giv S(X)$ does not depend on $\theta$. By drawing the copies $\Xt^{(m)}$ from this conditional distribution, we achieve exchangeability of $X,\Xt^{(1)},\dots,\Xt^{(M)}$ (more concretely, $X$ and its copies are i.i.d.~conditional on $S(X)$). This approach is known as \emph{co-sufficient sampling} (CSS) \citep{stephens2012goodness}. However, this approach is viable in
only a limited range of settings. 
In particular, many null models do not admit a compact (in an information-theoretic sense) sufficient statistic, meaning any sufficient statistic for the null model will remain sufficient for many alternative models as well. In such cases, which we term \emph{degenerate}, CSS testing runs into the problem described at the end of the preceding paragraph---the copies $\Xt^{(m)}$ will still be exchangeable with $X$ under the alternative, resulting in a  powerless test. This situation arises quite often, and we will give a number of common examples shortly in Section~\ref{sec:applications}.

As an alternative approach, we might consider the {\em parametric bootstrap} \citep{efron1994introduction}, where after estimating the true parameter $\theta$ via some estimator $\thetah$ (e.g., the maximum likelihood estimate), the copies $\Xt^{(1)},\dots,\Xt^{(M)}$ are sampled from $P_{\thetah}$. While this widely-used approach often works well in practice, the parametric bootstrap does not create exchangeable copies of the data, and is not guaranteed to achieve the desired Type I error level when paired with an arbitrary test statistic $T$---in fact, it may even lead to substantial error inflation. To take a simple example, consider a Gaussian linear regression setting where $P_\theta$ is given by the distribution $X\sim \normal(\theta\cdot Z, \mathbf{I}_d )$, where $Z\in\R^n$ is a fixed covariate. Suppose we are interested in testing whether $X$ in fact has more dependence with another covariate $Y\in\R^n$, and so our test statistic is given by 
\[T(X) = \frac{(X^\top Y)^2}{(X^\top Z)^2}.\]
Figure~\ref{fig:parboot} compares the parametric bootstrap against co-sufficient sampling (full details for this simulation are given in Appendix~\ref{app:details_parboot}). The results show that CSS results in a uniform distribution of p-values, while the parametric bootstrap results in a highly non-uniform distribution, and could lead to substantially inflated Type I error.
 Therefore, we would instead prefer to extend the CSS framework in order to enjoy theoretically guaranteed error control.

\begin{figure}\centering
\includegraphics[width=\textwidth]{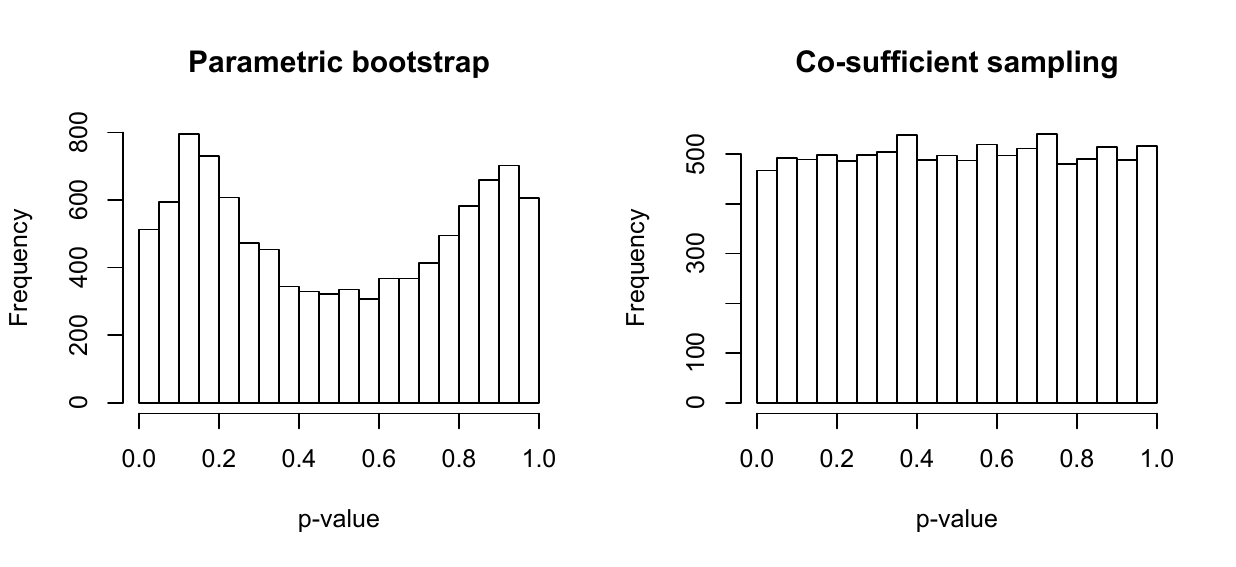}
\caption{Comparison of the parametric bootstrap versus CSS, in a Gaussian linear model example where the null hypothesis is true. We can see that CSS yields uniformly distributed p-values, but the parametric bootstrap does not, resulting in an inflated Type I error rate. (See Section~\ref{sec:intro} for details.)}
\label{fig:parboot}
\end{figure}

\subsection{Our contribution}
In this paper,
we demonstrate how to escape the problem of zero power in the degenerate setting,
by introducing a new generalization of CSS testing that conditions only on an approximately sufficient statistic \citep{lecam1960locally,van2000asymptotic,le2012asymptotic}.
We call such a test an ``approximate co-sufficient sampling'' test. 
This paper makes four main contributions:
\begin{enumerate}
\item We propose \emph{approximate co-sufficient sampling} (aCSS), which samples approximately exchangeable copies of the data by conditioning on an approximately sufficient statistic and plugging in a consistent estimator for the unknown parameter.
\item Under weak conditions closely resembling those for standard maximum likelihood asymptotics, we provide a finite-sample upper-bound for the total variation (TV) distance from exchangeability of our aCSS samples.
\item We show that the aforementioned TV bound translates directly to a bound on the Type I error inflation of an aCSS test that holds uniformly over the choice of test statistic, and we apply this bound to a number of important models to prove the inflation vanishes asymptotically as the sample size approaches infinity.
\item We provide general algorithms for aCSS and demonstrate their use in a series of simulations that exhibit the validity and power of aCSS testing.
\end{enumerate} 

\subsection{Applications}\label{sec:applications}

The problem of zero power for CSS testing arises surprisingly often---while we call such settings ``degenerate'',
they are not extreme cases but rather constitute a large portion of  common statistical models of interest.
To illustrate this, we will consider the following settings in which CSS testing is powerless, while aCSS testing can still be quite powerful and remains asymptotically valid for any test statistic. 
(Our theoretical results later on will quantify its finite-sample Type I error).

\begin{setting}[Data with associated covariates]\label{cl:dwac}
Suppose $X=(X_1,\dots,X_n)$ where the $X_i$'s are independent, and
each $X_i$ has an associated covariate $Z_i$ (i.e., the null distribution of each $X_i$ depends on this $Z_i$). In this setting, the distribution of $X$ 
(i.e., the joint distribution of $X_1,\dots,X_n$) will often have the data $X$ itself as a minimal sufficient statistic. This is even true when $X_i\giv Z_i$ follows 
logistic regression: for generic values of the $Z_i$'s (e.g., if each $Z_i\in\R^d$ is drawn from some continuous distribution), the minimal sufficient statistic is equivalent to $X$ itself because the value of $Z^\top X\in\R^d$ determines $X\in\{0,1\}^n$ uniquely, and hence is sufficient under any alternative as well (and hence any CSS test must be powerless). This problem class applies not just for GoF testing for conditional models for $X$ (including the logistic model), but also for conditional independence testing as described in Problem Domain~\ref{ex:cit}.
\end{setting}

\begin{setting}[Curved exponential families]\label{cl:cef}
Consider a null model that is a curved exponential family, i.e., a full-rank $k$-parameter exponential family with an added constraint that reduces the dimension of the parameter space to some $d<k$ and is nonlinear in the canonical parameters. In this setting, the minimal sufficient statistic is generally the same as that for the unconstrained (full-rank) exponential family. 
This means that any CSS test must be powerless against any alternative that lies in the larger exponential family, for example, if we want to test whether the parameter constraint
holds or not. A classical example is the Behrens--Fisher problem of testing equality of (unknown) means between independent normal samples having different (unknown) variances: any CSS test will be powerless for every alternative pair of means and variances.
The same issue arises in, 
e.g., the study of contingency tables (where the canonical family is multinomial and the null hypothesis imposes a constraint on its probabilities), and spatial and time-series models (where the null hypothesis imposes a spatial or temporal structure on the canonical parameters of an exponential family).
\end{setting}

\begin{setting}[Heavy-tailed models]\label{cl:htm}
Many heavy-tailed models are not exponential families and do not admit compact sufficient statistics. For instance, the Cauchy location family's minimal sufficient statistic is given by the order statistics.
Any CSS test of this null is therefore powerless against any i.i.d. alternative, since the order statistics are sufficient for this alternative as well. As another
example, the Student's $t$ scale family's minimal sufficient statistic is the order statistics of the absolute values, so any CSS test for it must be powerless against any i.i.d. symmetrical alternative.
\end{setting}

\begin{setting}[Models with latent variables]\label{cl:mwlv}
Many popular models capture domain-specific properties through latent variables. In such models,
if we condition on the latent variable, then the data often comes from a well-behaved distribution with compact minimal sufficient statistic. However, when the latent variable is unobserved, we are forced to perform inference unconditionally, and the unconditional model rarely has a compact minimal sufficient statistic. Examples include hidden Markov models, mixture distributions, data with missing values, errors-in-variables models, and factor models.
\end{setting}

\noindent Later on in Section~\ref{sec:examples}, we will return to
Model Classes~\ref{cl:dwac},~\ref{cl:cef}, and~\ref{cl:htm} and give concrete examples of models where aCSS can be applied.
We leave Model Class~\ref{cl:mwlv} for future work.

\subsection{Related work}\label{sec:relatedwork}
GoF testing dates back to the very early days of the field of statistics, and the literature even on just parametric GoF is far too numerous to cite. Instead, we focus our literature review on the subfield of CSS testing, which distinguishes itself within the broader field of parametric GoF testing by guaranteeing Type I error control with \emph{any} test statistic under a parametric null model, the potential advantages of which have been described earlier in this section. In fact, some of the most foundational nonparametric tests can be thought of as CSS tests, including the permutation test (conditions on the order statistics for an i.i.d. null). The formal idea of a CSS test seems to date back to at least \cite{bartlett1937properties}, although the value of sufficient statistics for GoF testing in the presence of nuisance parameters has also been used in many other ways, e.g., \cite{durbin1961some,kumar1977sufficiency,bell1984inference} decompose the data into a minimal sufficient statistic and an ancillary statistic and construct a GoF test based on the parameter-free distribution of the ancillary statistic. 

CSS testing has gained substantial interest in the last 30 years, though with a focus on non-degenerate hypothesis testing settings \citep{agresti1992,engen1997stochastic,agresti2001exact,o2006conditional,lockhart2007use,lockhart2009exact,lindqvist2011monte,broniatowski2012conditional,lockhart2012conditional,stephens2012goodness,lindqvist2013exact,hazra2013exact,beltran2019goodness,santos2019metropolis,contreras2019use}. Our work differs from the existing work in CSS testing by allowing for degenerate (and non-degenerate) models by conditioning only on an approximately sufficient statistic.
Similar techniques have been used to obtain exact confidence intervals in the presence of nuisance parameters \citep{lillegard1999exact}, again in non-degenerate settings. As an example, \cite{RP:1984,kolassa2003algorithms} study conditional independence testing of $X\independent Y\giv Z$ where $X\giv Z$ follows a logistic regression model and $Z$ is discrete; logistic regression is degenerate when $Z$ has a continuous distribution but non-degenerate when $Z$ is discrete. 

For conditional independence testing (Problem Class~\ref{ex:cit}), in the setting where $H_0$ is a simple null hypothesis (i.e., $X\giv Z$ has a known 
distribution), \cite{EC-ea:2018} study procedures of the form~\eqref{eqn:pval} under the name ``conditional randomization test'';
their work also constructs the model-X knockoffs framework for simultaneously testing multiple conditional independence hypotheses (i.e., variable selection in a multivariate regression). This construction provides an exact ``swap-exchangeability'' property that enables variable selection, i.e., simultaneous conditional independence testing of many covariates
when the multivariate covariates $X$ come from a non-degenerate model, and  leads to exact false discovery rate control  \citep{barber2015controlling}. Generalizing to the setting where $H_0$ is not simple,
\cite{DH-LJ:2019} construct model-X knockoffs \citep{EC-ea:2018} conditional on a sufficient statistic, retaining the exact ``swap-exchangeability'' property and exact false discovery rate control; we can think of this as a knockoffs-analogue of CSS testing.

Moving beyond CSS testing, we are only aware of a few works which take a similarly approximate approach to that of the present paper.
First, the most related to our approach is the work of \cite{lillegaard2001tests}, 
where they mention the possibility of an aCSS-type test to solve the Behrens--Fisher problem
(i.e., testing for a  difference of means between two Gaussian samples, described above in Model Class~\ref{cl:cef}), but
conclude that such an approach would be computationally intractable; they instead propose a heuristic sampling procedure which they support with simulations but no theory. Second, 
\cite{kalbfleisch1970application,cox1987parameter}  focus on parametric likelihood-based testing in the presence of nuisance parameters,
but study the case where the nuisance parameters are orthogonal or asymptotically orthogonal to the parameter of interest.
Finally, approximate Bayesian computation, also known as likelihood-free inference, conducts Bayesian inference conditional on a compact non-sufficient statistic, but for computational, as opposed to statistical reasons, since in the Bayesian framework there is no statistical downside to conditioning on as much as possible (see, e.g., \cite{kousathanas2016likelihood} for such a paper that explicitly addresses the role of sufficiency).

\subsection{Notation}
We will write $\norm{\!\cdot\!}$ to denote the usual Euclidean ($\ell_2$) norm on vectors, and to denote the operator norm (i.e., spectral norm) on matrices.
For a matrix $M$, $\lambda_{\max}(M)$ denotes its largest eigenvalue in the positive direction.
We write $(x)_+$ to denote $\max\{x,0\}$. 
We will write $\mathbb{E}_{\theta}$ and $\mathbb{P}_{\theta}$ to denote expectation or probability taken
with respect to $X\sim P_\theta$, where the parametric family $\{P_\theta : \theta\in\Theta\}$ is our null model.

%%%%%%%%%%%%%%%%%%%%%%%%%%%%%%%%%%%%%%%%%%%%%%%%%%%%%%%%%%%
%% method
%%%%%%%%%%%%%%%%%%%%%%%%%%%%%%%%%%%%%%%%%%%%%%%%%%%%%%%%%%%

\section{Method}
The goal of approximate co-sufficient sampling is to generate copies $\Xt^{(1)},\dots,\Xt^{(M)}$ of the observed data $X$ such that if the null hypothesis 
\begin{equation}\label{eqn:H0}H_0 : X\sim P_\theta\textnormal{ for some $\theta\in\Theta$}\end{equation}
is true, then the random variables $X,\Xt^{(1)},\dots,\Xt^{(M)}$ are approximately exchangeable.
Recalling the p-value defined in~\eqref{eqn:pval}, we can then test the null hypothesis 
using any desired test statistic $T(X)$. The choice of statistic is completely unconstrained, and this flexibility enables us to 
design very powerful tests in many settings. 
Note that, although this flexibility allows us to design any form of function $T$, $T$ itself (as a function, i.e., before seeing its argument) cannot depend on $X$. For example, if $T(X)$ uses $X$ to tune parameters in a neural network and then computes a statistic of that neural network applied to $X$, then $T(\Xt^{(m)})$ cannot compute a statistic on the same ($X$-tuned) neural network applied to $\Xt^{(m)}$, but must use $\Xt^{(m)}$ to tune the parameters of a new neural network and compute a statistic of that ($\Xt^{(m)}$-tuned) neural network applied to $\Xt^{(m)}$.

To quantify our goal of generating approximately exchangeable copies of the data, we begin by defining a ``distance to exchangeability'':
\begin{definition}\label{def:dex}
For any integer $k\geq1$ and any set of random variables $A_1,\dots,A_k$ with a joint distribution, define
\[\dex(A_1,\dots,A_k) = \inf\left\{\dtv\big((A_1,\dots,A_k), (B_1,\dots,B_k)\big) : \textnormal{$B_1,\dots,B_k$ are exchangeable}\right\}.\]
Here $\dtv$ denotes the total variation distance,
and  the infimum is taken over all sets of $k$ random variables $B_1,\dots,B_k$ with an exchangeable joint distribution.
\end{definition}
\noindent Of course, if $A_1,\dots,A_k$ are exchangeable, then $\dex(A_1,\dots,A_k)=0$.
When we say informally that variables $A_1,\dots,A_k$ are ``approximately exchangeable'', we mean that the distance to exchangeability is small.

Now we will see how this distance $\dex$ relates to the problem of testing the null hypothesis~\eqref{eqn:H0}
(\cite{berrett2019conditional} use a similar argument in a permutation test setting).
Fix a threshold $\alpha\in[0,1]$ and a function $T:\Xcal\rightarrow\R$ (the test statistic).
For any exchangeable random variables $(B_0,\dots,B_M)$, by definition of exchangeability we have $\pr{\textnormal{pval}_T(B_0,B_1,\dots,B_M)\leq \alpha}\leq \alpha$,
and therefore,
\[\pr{\textnormal{pval}_T(X,\Xt^{(1)},\dots,\Xt^{(M)})\leq\alpha}
\leq \alpha + \dtv\big((X,\Xt^{(1)},\dots,\Xt^{(M)}),(B_0,B_1\dots,B_M)\big).\]
Taking an infimum over all exchangeable distributions on $(B_0,B_1\dots,B_M)$, we have shown that
\[\pr{\textnormal{pval}_T(X,\Xt^{(1)},\dots,\Xt^{(M)})\leq\alpha}
\leq \alpha +\dex(X,\Xt^{(1)},\dots,\Xt^{(M)})\]
under the null hypothesis $H_0$.

Therefore, we can see that, if we are able to construct copies of the data such that $\dex(X,\Xt^{(1)},\dots,\Xt^{(M)})$ is small,
then we can construct an approximately-valid test of $H_0$ using any desired test statistic $T$.
From this point on, then, our task is to determine how we can use approximate sufficiency to generate such copies.

\subsection{Overview}
Consider any function $S = S(X)$ of the data, which is sufficient under the null hypothesis that $X\sim P_\theta$ for some $\theta\in\Theta$.
Let $P(\cdot\giv s)$ be the conditional distribution of $X$ given $S=s$ 
(sufficiency of $S(X)$ ensures that this distribution does not depend on $\theta$).
As described in Section~\ref{sec:intro},
the co-sufficient sampling (CSS) method operates by drawing copies from this conditional distribution. That is,
the joint distribution of the data and the copies, under the CSS method, is given by:
\[\begin{cases}
X\sim P_{\theta_0},\\
S = S(X),\\
\Xt^{(1)},\dots,\Xt^{(M)}\mid X, S \iidsim P(\cdot\giv S),\end{cases}\]
where $\theta_0$ is the unknown true parameter.
Clearly, the real and fake data $X,\Xt^{(1)},\dots,\Xt^{(M)}$ are i.i.d.~conditional on $S$, and are therefore exchangeable,
meaning that the $\Xt^{(m)}$'s provide a valid ``control group'' for the real data $X$ regardless of the unknown $\theta_0$.

As discussed above, this framework is limited to only certain specific models, since many common models are ``degenerate''
(such as the model classes described in Section~\ref{sec:applications}), where
 any sufficient statistic $S=S(X)$ reveals so much information about $X$ that it leads to a completely powerless procedure against the alternative hypothesis of interest.
We can instead consider statistics $S=S(X)$ that are not sufficient, but are {\em approximately sufficient}, meaning
that the distribution $P_{\theta}(\cdot\giv S)$ is {\em approximately} unaffected by the value of $\theta$---more concretely, if we can estimate
$\theta$ with a consistent estimator $\thetah$, then we only need to ensure that $P_{\thetah}(\cdot\giv S)\approx P_{\theta}(\cdot\giv S)$. 
In fact, for many settings, maximum likelihood estimation is known to provide an asymptotically sufficient statistic \citep{lecam1960locally,van2000asymptotic,le2012asymptotic}. Thus, we can take $S=S(X)$ to simply
be $\thetamle(X)$, or more generally, any other estimator of $\theta_0$ that is asymptotically sufficient. 

In this setting, we write $P_{\theta_0}(\cdot\giv\thetah)$ to denote the conditional distribution of $X\giv\thetah$ when the data is 
distributed as $X\sim P_{\theta_0}$ and we calculate $\thetah = \thetamle(X)$. Of course, we cannot draw the copies from this distribution since $\theta_0$ is unknown,
but if $\thetah=\thetamle(X)$ is approximately sufficient,
then the distribution $P_{\theta_0}(\cdot\giv\thetah)$ should depend only slightly on $\theta_0$. In particular, we will use $\thetah$ itself
as a plug-in estimate for $\theta_0$, 
leading to the joint model
\[\begin{cases}
X\sim P_{\theta_0},\\
\thetah = \thetamle(X),\\
\Xt^{(1)},\dots,\Xt^{(M)}\mid X,  \thetah \iidsim P_{\thetah}(\cdot\giv\thetah).\end{cases}\]
These copies form an approximately-valid control group as long as 
$P_{\thetah}(\cdot\giv\thetah)\approx P_{\theta_0}(\cdot\giv\thetah)$.

In our aCSS algorithm, 
we will replace the deterministic step $\thetah=\thetah(X)$ with a randomized estimator (essentially, adding a small random perturbation into the likelihood maximization problem).
Adding noise is beneficial for computational reasons, since the set of $x\in\Xcal$ whose MLE is exactly
equal to $\thetamle(X)$ may be a challenging set to sample from.  For certain examples, adding noise can also be beneficial from the statistical point of view, as for, e.g., the logistic regression setting, described
in Model Class~\ref{cl:dwac}, where conditioning on the exact MLE, $\thetamle(X)$, may lead to a zero-power scenario. (We will discuss the role of $\sigma$ further in Section~\ref{sec:choosing_sigma} below.) In addition, we will also allow adding a twice-differentiable regularization function $\pen(\theta)$ to the likelihood maximization
problem, for instance $\pen(\theta)\propto \norm{\theta}^2$ for ridge regression, which may be beneficial in some applications.

Informally, our proposed aCSS algorithm takes the following form:
\[\begin{cases}
X\sim P_{\theta_0},\\
W\sim \normal(0,\tfrac{1}{d}\ident_d),\\
\thetah = \thetah(X,W) = \argmin_{\theta\in\Theta}\left\{-\log f(X;\theta) + \pen(\theta) + \sigma W^\top\theta\right\},\\
\Xt^{(1)},\dots,\Xt^{(M)}\mid X,  \thetah \iidsim P_{\thetah}(\cdot\giv\thetah),\end{cases}\]
where again $P_{\theta_0}(\cdot\giv\thetah)$ denotes the conditional distribution of $X\giv\thetah$ when the data is 
distributed as $X\sim P_{\theta_0}$, and $P_{\thetah}(\cdot\giv\thetah)$ is a plug-in estimate.

However, in many settings 
 the penalized negative log-likelihood may not be strongly convex,
or might even be nonconvex, in which case
we will need to modify this procedure---while it is the case that, in many statistical problems,
many tools exist that are likely to find the (perturbed) MLE
with high probability (e.g., by carefully choosing a good initialization point), we will need to account for the fact
that finding the global optimum is not guaranteed. Furthermore, in order to construct the copies $\Xt^{(1)},\dots,\Xt^{(M)}$,
 we are implicitly assuming that we are able to generate i.i.d.~samples from the conditional distribution of $X\giv \thetah$. In practice,
 sampling directly from this density may be impossible, so we may need to turn to techniques such as Markov Chain Monte Carlo (MCMC),
which can introduce dependence between the samples.
Our next task, then, is to develop a more general and rigorous form of this simple algorithm, so that we can provide a practical method that can be deployed 
in a broad range of settings.

\subsection{Algorithm for approximate co-sufficient sampling}

In this section, we will formally define our aCSS algorithm.
Below, we define our noisy estimator $\thetah$ (Section~\ref{sec:draw_thetah}), calculate the conditional 
distribution of $X\giv \thetah$ (Section~\ref{sec:compute_conditional}), and 
describe how to sample the copies $\Xt^{(1)},\dots,\Xt^{(M)}$ from the estimated conditional distribution
(Section~\ref{sec:sample_copies}).

\subsubsection{Sampling the estimator}\label{sec:draw_thetah}
Recall that the  estimator $\thetah$ is intended to be approximately equal to the MLE,
even though it includes a regularization function and a random perturbation into the likelihood maximization problem. Writing
\[\Lcal(\theta;x)  = -\log f(x;\theta) + \pen(\theta) ,\]
consider the optimization problem
\begin{equation}\label{eqn:thetah_def} \argmin_{\theta\in\Theta}\Lcal(\theta;X,W)\textnormal{\  \ where \ } \Lcal(\theta;x,w)=\Lcal(\theta;x)  + \sigma w^\top\theta,\end{equation}
where $W\sim\normal(0,\tfrac{1}{d}\ident_d)$ is independent Gaussian noise, $\sigma>0$ determines the noise level of the random
perturbation, and $\pen:\Theta\rightarrow\R$ is a twice-differentiable regularization function.
In order to accommodate the penalized and unpenalized estimators with a single unified presentation, we can view the unpenalized 
version as a special case by simply taking $\pen(\theta)\equiv 0$.
(This type of randomly perturbed log-likelihood 
was previously studied by \cite{XT-JT:2018}, with the different aim of enabling selective inference on a high-dimensional parameter $\theta$.
In their work, the object of interest is the distribution of $\thetah$, to enable inference on $\theta_0$,
whereas in our setting $\theta_0$ is essentially a nuisance parameter.)

In the general setting where the negative log-likelihood might be nonconvex, the optimization problem~\eqref{eqn:thetah_def} may be challenging---in particular, in the presence of nonconvexity, how would we find a global minimizer, and is a global minimizer even guaranteed to exist? In many settings, any available algorithm would only be able to guarantee that we find a first-order stationary point to~\eqref{eqn:thetah_def} (if it even converges at all). 
To address this, we modify our procedure to allow $\thetah$ to only \emph{usually} be a well-behaved \emph{local} optimum of \eqref{eqn:thetah_def}. This enables aCSS to draw on the vast literature on optimizing penalized maximum likelihoods. Although the random perturbation by $W$ makes \eqref{eqn:thetah_def} slightly non-standard for penalized maximum likelihood, the perturbation is linear in $\theta$ and hence has no impact on Hessians or convexity and only adds a fixed, trivially-computable constant vector to the gradient. Thus, although large linear perturbations can ``tip over" an otherwise well-behaved basin of attraction, our theory will ensure this never happens asymptotically and in practice one can always choose $\sigma$ small enough to make this astronomically unlikely; see Appendix~\ref{app:optimization} for a more detailed discussion. In summary, we expect that any algorithm that empirically-often (it need not be provably-often) finds a local optimum for the unperturbed penalized maximum likelihood problem will suffice with almost no modification to solve \eqref{eqn:thetah_def} for the purposes required by the theory in this paper.

In particular, we will define $\thetah$ to be any measurable function mapping a (data, noise) pair $(x,w)$ to an estimate, i.e., 
\[\thetah: \Xcal\times\R^d\rightarrow\Theta,\]
and we will later assume this map is likely to return a strict second-order stationary point (SSOSP) of the minimization problem~\eqref{eqn:thetah_def}.
Here we say that $\theta$ is a SSOSP of $\Lcal(\theta;x,w)$ if two conditions are satisfied:
\begin{itemize}
\item $\theta$ is a first-order stationary point (FOSP) of $\Lcal(\theta;x,w)$, meaning that $\nabla_\theta \Lcal(\theta;x,w) = 0$
or equivalently $w=-\frac{\nabla_\theta \Lcal(\theta;x)}{\sigma}$.
\item The objective function is strictly convex at $\theta$, i.e., $\nabla_\theta^2\Lcal(\theta;x,w)\succ 0$
or equivalently $\nabla_\theta^2\Lcal(\theta;x)\succ 0$.
\end{itemize}
We should think of $\thetah(x,w)$ as the output of some optimization algorithm, such as gradient descent, being run to convergence on the 
minimization problem~\eqref{eqn:thetah_def}.

From this point on, abusing notation, depending on context we may write $\thetah$ to denote the map $\thetah:\Xcal\times\R^d\rightarrow\Theta$, or may
also write $\thetah$ to denote $\thetah(X,W)$, the random variable obtained by applying this map to the data.

\subsubsection{Calculating the distribution conditioned on the estimator}\label{sec:compute_conditional}

Our next step is to calculate the conditional distribution of $X\giv\thetah$, where $\thetah = \thetah(X,W)$
for random Gaussian noise $W\sim\normal(0,\tfrac{1}{d}\ident_d)$.
As it turns out, it is generally not possible to do this exactly---in the rare degenerate case where $\thetah(X,W)$ may fail to find a SSOSP
of the optimization problem~\eqref{eqn:thetah_def}, we do not know the distribution of $\thetah\giv X$ and therefore cannot calculate the distribution of $X\giv \thetah$.
We will avoid this degeneracy by conditioning on the event that $\thetah(X,W)$ returns a SSOSP.

First, we assume some standard conditions on the parametric family, and a differentiability condition on the model and the regularization function (we
 will also assume implicitly that all the functions defined so far, namely, $\thetah$, $p$, $\Lcal$ and its derivatives,
 are measurable with respect to $\nu_\Xcal\times\leb$ or $\nu_\Xcal$ or $\leb$, as appropriate):
\begin{assumption}[Regularity conditions]\label{asm:family}
The family $\{P_\theta:\theta\in\Theta\}$ and regularization function $\pen(\theta)$ satisfy:
\begin{itemize}
\item $\Theta\subseteq\R^d$ is a convex and open subset;
\item For each $\theta\in\Theta$, $P_\theta$ has density $f(x;\theta)>0$ with respect to the base measure $\nu_\Xcal$;
\item For each $x\in\Xcal$, the function $\theta\mapsto \Lcal(\theta;x)=-\log f(x;\theta)+\pen(\theta)$ is continuously twice differentiable.
\end{itemize}
\end{assumption}
\noindent We are now ready to calculate the conditional distribution of $X\giv\thetah$. 
\begin{lemma}\label{lem:compute_conditional}
Suppose Assumption~\ref{asm:family} holds.
Fix any $\theta_0\in\Theta$, and let $(X,\thetah)$ be drawn from the joint model
\begin{equation}\label{eqn:joint_model_XWthetah}\begin{cases} X\sim P_{\theta_0},\\ W\sim \normal(0,\tfrac{1}{d}\ident_d),\\\thetah = \thetah(X,W).\end{cases}\end{equation}
Suppose the event that $\thetah$ is a SSOSP of $\Lcal(\theta;X,W)$ has positive 
probability. 

Then, conditional on this event, the conditional distribution of $X\giv \thetah$ has density 
\begin{equation}\label{eqn:density_realX}p_{\theta_0}(\cdot\giv \thetah) \propto f(x;\theta_0)\cdot \exp\left\{ - \frac{\norm{\nabla_\theta\Lcal(\thetah;x)}^2}{2\sigma^2/d}\right\}\cdot \det\left(\nabla^2_\theta \Lcal(\thetah;x)\right)\cdot\one{x\in\Xcal_{\thetah}}\end{equation}
with respect to
 the base measure $\nu_\Xcal$,
where
\begin{equation}\label{eqn:density_support}
\Xcal_\theta = \left\{x\in\Xcal : \textnormal{ for some $w\in\R^d$, $\theta=\thetah(x,w)$ is a SSOSP of $\Lcal(\theta;x,w)$}\right\}.\end{equation} 
\end{lemma}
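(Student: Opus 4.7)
The plan is to compute the joint density of $(X, \thetah)$ restricted to the SSOSP event by a change of variables from $W$ to $\thetah$ for each fixed $x$, and then read off the conditional density of $X \giv \thetah$. The key observation is that, on the SSOSP event, the first-order stationarity condition $\nabla_\theta \Lcal(\thetah; X, W) = 0$ forces
\[
W \;=\; -\frac{\nabla_\theta \Lcal(\thetah; X)}{\sigma},
\]
so that $W$ is a \emph{deterministic} function of $(X, \thetah)$ on this event. This pins down a natural candidate inverse for the map $w \mapsto \thetah(x,w)$.

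First I would fix $x \in \Xcal$ and define $U_x = \{w \in \R^d : \thetah(x,w) \text{ is a SSOSP of } \Lcal(\cdot; x, w)\}$ and $\Phi_x : U_x \to \Theta$ by $\Phi_x(w) = \thetah(x,w)$. The map $\Phi_x$ is injective on $U_x$: if $\Phi_x(w_1) = \Phi_x(w_2) = \theta$, the FOSP condition at each $w_i$ forces $\sigma w_i = -\nabla_\theta \Lcal(\theta;x)$, so $w_1 = w_2$. By definition, the image $\Phi_x(U_x)$ is precisely $\{\theta \in \Theta : x \in \Xcal_\theta\}$, and the inverse $\Psi_x(\theta) := -\nabla_\theta \Lcal(\theta; x)/\sigma$ is $C^1$ by Assumption~\ref{asm:family}, with Jacobian $-\nabla_\theta^2 \Lcal(\theta;x)/\sigma$ whose absolute determinant equals $\det(\nabla^2_\theta \Lcal(\theta;x))/\sigma^d$ wherever $\nabla^2_\theta \Lcal(\theta;x) \succ 0$ (which holds on $\Phi_x(U_x)$ by SSOSP).

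Next I would test against an arbitrary bounded measurable $g(x,\theta)$. Using that $(X,W)$ has joint density $f(x;\theta_0)\phi_d(w)$ w.r.t.\ $\nu_\Xcal \otimes \leb$, where $\phi_d$ is the $\normal(0, \tfrac{1}{d}\ident_d)$ density,
\[
\E{g(X,\thetah) \one{\thetah \text{ SSOSP}}} = \int_\Xcal \int_{U_x} g\bigl(x, \Phi_x(w)\bigr)\, f(x;\theta_0)\, \phi_d(w)\, dw\, d\nu_\Xcal(x).
\]
For each $x$, apply the change of variables $w = \Psi_x(\theta)$, push the $w$-integral onto $\Phi_x(U_x) = \{\theta : x\in\Xcal_\theta\}$, substitute the explicit form $\phi_d(w) = (d/2\pi)^{d/2}\exp(-d\norm{w}^2/2)$, and then swap the order of integration by Fubini. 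This produces
\[
\int_\Theta \int_{\Xcal_\theta} g(x,\theta)\, f(x;\theta_0)\, \exp\!\Bigl\{-\tfrac{\norm{\nabla_\theta \Lcal(\theta;x)}^2}{2\sigma^2/d}\Bigr\}\, \det\bigl(\nabla^2_\theta \Lcal(\theta;x)\bigr)\, d\nu_\Xcal(x)\, d\theta,
\]
up to the constant $(d/2\pi)^{d/2}/\sigma^d$. This identifies the joint density of $(X,\thetah)$ on the SSOSP event w.r.t.\ $\nu_\Xcal\otimes\leb$, and dividing by the marginal density of $\thetah$ (i.e., normalizing in $x$ for each fixed $\theta$) yields the claimed formula \eqref{eqn:density_realX}.

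The main technical obstacle is justifying the change of variables: one has to verify that $\Phi_x$ is a $C^1$ bijection onto its image (injectivity from FOSP, image $=\Phi_x(U_x)$, smoothness of $\Psi_x$ from Assumption~\ref{asm:family}) and that the Jacobian determinant is positive on the SSOSP set. Once this geometric/analytic setup is in place, the rest is bookkeeping with Fubini and the Gaussian density. A minor care point is that $\thetah$ need not be $C^1$ as a function of $(x,w)$ (it is only a measurable selection of local optima), but the proof sidesteps this by passing through the smooth inverse $\Psi_x$ rather than differentiating $\thetah$ directly.
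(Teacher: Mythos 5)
Your plan is the same as the paper's: fix $x$, observe that on the SSOSP event the FOSP condition pins down $w = -\nabla_\theta\Lcal(\thetah;x)/\sigma$ as a function of $(x,\thetah)$, use this bijection to change variables from $w$ to $\theta$, and read off the conditional density. Your setup is correct — injectivity of $\Phi_x$ from the FOSP condition matches the paper's Lemma~\ref{lem:bijection}, the smooth inverse $\Psi_x(\theta) = -\nabla_\theta\Lcal(\theta;x)/\sigma$ is the right object to differentiate, and the Jacobian computation and bookkeeping with $\phi_d$ and Fubini are all fine.

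The gap is in the step you call ``the rest is bookkeeping.'' Having established that $\Psi_x$ is $C^1$ with positive Jacobian on $\Phi_x(U_x)$ and that it is a bijection from $\Phi_x(U_x)$ onto $U_x$ is \emph{not} enough to invoke the classical change-of-variables theorem, which requires a $C^1$ diffeomorphism between \emph{open} sets. Neither $U_x$ nor its image $\Phi_x(U_x) = \{\theta : x\in\Xcal_\theta\}$ need be open — $\thetah$ is only a measurable selection of local optima, and its graph can be extremely irregular. Moreover, $\Psi_x$ is generally \emph{not} injective on the open set $\{\theta : \nabla_\theta^2\Lcal(\theta;x)\succ 0\}$; it is injective only on the thin subset $\Phi_x(U_x)$ (because $\thetah$ is single-valued), so you cannot simply enlarge to an open neighborhood. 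The paper devotes a full argument (Lemma~\ref{lem:change_of_vars}) precisely to this point: it passes to nested open sets $A_{x,\lambda}$ where the Hessian stays positive definite on a $\lambda$-neighborhood, chops them into cells $A_{x,\lambda,k}$ of diameter at most $\lambda$ so that $\Psi_x$ is injective on each cell by a mean-value argument, applies the classical diffeomorphism theorem cell-by-cell, and sums and takes limits by monotone convergence. Alternatively you could invoke the area formula for $C^1$ maps restricted to a measurable set on which the map is injective, but that requires importing a nontrivial result from geometric measure theory and should be cited explicitly rather than folded into ``bookkeeping.''
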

\noindent The proof of this lemma is given  in Appendix~\ref{app:proof_lem:compute_conditional}.
For intuition, we can consider the terms appearing in the calculation~\eqref{eqn:density_realX}: the first term $f(x;\theta_0)$ expresses the original distribution of $X$ (before conditioning), the second term $\exp\{\dots\}$ comes from the density of the multivariate normal distribution of $W$, the third term $\det(\dots)$ arises from a change-of-variables calculation when we move from the joint distribution of $(X,W)$ to that of $(X,\thetah)$, and the final term $\one{x\in\Xcal_{\thetah}}$ handles potential technical issues such as failure to find a SSOSP. In particular, the form of the second term is due to our choice of the multivariate normal distribution for the noise $W$; if we instead chose a different noise distribution, the results of this lemma would still hold if we make the appropriate changes to this second term (and the method would yield the same types of theoretical results as long as the distribution of $W$ is continuous, supported everywhere on $\R^d$, and has similar concentration properties for $\norm{W}$). In this work, we choose a multivariate normal distribution since the outcome of the procedure will therefore be invariant to rotations of the parameter space $\Theta$; in settings where the choice of the basis for $\Theta$ is meaningful (e.g., we expect sparsity), it may be interesting to instead consider a non-rotationally-invariant noise distribution.

\subsubsection{Sampling the copies}\label{sec:sample_copies}
We next need to specify how to sample the copies $\Xt^{(1)},\dots,\Xt^{(M)}$.
Below we describe several different approaches---which one we use will depend on the computational complexity of the problem at hand.

\paragraph{The i.i.d.~sampling case}
In order to construct copies $\Xt^{(1)},\dots,\Xt^{(M)}$ that are exchangeable with the data $X$,
we would like to sample the copies $\Xt^{(1)},\dots,\Xt^{(M)}$ i.i.d. from the density $p_{\theta_0}(\cdot\giv \thetah)$,
which by Lemma~\ref{lem:compute_conditional} specifies the exact conditional distribution of $X\giv\thetah$.
Since $\theta_0$ is unknown we will use $\thetah$ as a plug-in estimator.
Our procedure is the following:
after observing the data $X$, 
\begin{equation}\label{eqn:alg_density_iid}
\begin{cases} \textnormal{Draw $W\sim\normal(0,\tfrac{1}{d}\ident_d)$ and define $\thetah = \thetah(X,W)$.}\\
\textnormal{If $\thetah$ is a SSOSP of $\Lcal(\theta;X,W)$, then draw $\Xt^{(1)},\dots,\Xt^{(M)}\iidsim p_{\thetah}(\cdot\giv \thetah)$,}\\
\textnormal{\hspace{2.17in} otherwise
 return $\Xt^{(1)}= \dots \Xt^{(M)} = X$.}\end{cases}
\end{equation}
Here our estimated density for the conditional distribution of $X\giv\thetah$ is given by 
\begin{equation}\label{eqn:density_hat} p_{\thetah}(x\giv \thetah) \propto f(x;\thetah)\cdot \exp\left\{ - \frac{\norm{\nabla_\theta\Lcal(\thetah;x)}^2}{2\sigma^2/d}\right\}\cdot \det\left(\nabla^2_\theta \Lcal(\thetah;x)\right)\cdot\one{x\in\Xcal_{\thetah}}\end{equation}
with respect to
 the base measure $\nu_\Xcal$. (Lemma~\ref{lem:density_hat}, in Appendix~\ref{app:check_density}, will verify that this expression indeed defines a valid density.)

Of course, in order to implement the sampling algorithm given in~\eqref{eqn:alg_density_iid}, we are implicitly assuming that it is computationally
feasible to generate i.i.d.~samples from $p_{\thetah}(\cdot|\thetah)$. To avoid making this assumption, we next consider a more general 
framework.

\paragraph{The MCMC sampling case}
In the general case where sampling directly from $p_{\thetah}(\cdot\giv \thetah)$ may not be possible, we can instead use MCMC or any other strategy
that ensures exchangeability. To be concrete, 
we will consider two schemes from \cite{besag1989generalized} for constructing the copies with MCMC sampling.
Given $\thetah$, let $\Pi(\cdot;x)$ be any collection of transition distributions, such that the density $p_{\thetah}(\cdot\giv \thetah)$ defines a stationary distribution.
Assume that $\Pi$ defines a reversible Markov chain. Given $\Pi$, we define two different schemes for generating the copies. (See Figure~\ref{fig:mcmc} for an illustration
of these schemes.)
\begin{itemize}
\item {\bf Hub-and-spoke sampler}. Given $X$ and $\thetah$, we sample the copies as follows:
\begin{itemize}
\item Initialize at $X$, and run the Markov chain for $L$  steps to define the ``hub'' $\Xt^*$.
\item Independently for $m=1,\dots,M$, initialize at $\Xt^*$ and  run the Markov chain for $L$  steps to define the ``spoke'' $\Xt^{(m)}$.
\end{itemize}
\item {\bf Permuted serial sampler}. Given $X$ and $\thetah$, we sample the copies as follows:
\begin{itemize}
\item Draw a uniform permutation $\pi$ on $\{0,\dots,M\}$ and find $m^*\in\{0,\dots,M\}$ such that $\pi(m^*)=0$.
\item Initialize at $X$, and run the Markov chain for $Lm^*$  steps, stopping every $L$-th step to define the copies $\Xt^{(\pi(m^*-1))},\dots,\Xt^{(\pi(0))}$.
\item Independently, initialize at $X$, and run the Markov chain for $L(M-m^*)$  steps, stopping every $L$-th step to define the copies $\Xt^{(\pi(m^*+1))},\dots,\Xt^{(\pi(M))}$.
\end{itemize}
\end{itemize}
Later on, we will give concrete examples of how to implement these sampling schemes for specific models.

\begin{figure}[t] 
\begin{tikzpicture}

\begin{scope}[xshift=0cm,yshift=0cm]

\node (X) at (-2,0) {};
\node (Xhub) at (0,0) {};
\node (X1) at (-1.41,1.41) {};
\node (X2) at (0,2) {};
\node (X3) at (1.41,1.41) {};
\node (Xdots) at (2,0) {};
\node (XM2) at (1.41,-1.41) {};
\node (XM1) at (0,-2) {};
\node (XM) at (-1.41,-1.41) {};
\node (Xtext) at (-2,0) {$X$};
\node (Xhubtext) at (0,0) {$\Xt^*$};
\node (X1text) at (-1.41,1.41) {$\Xt^{(1)}$};
\node (X2text) at (0,2) {$\Xt^{(2)}$};
\node (X3text) at (1.41,1.41) {$\Xt^{(3)}$};
\node (Xdotstext) at (2,0) {$\dots$};
\node (XM2text) at (1.41,-1.41) {$\Xt^{(M-2)}$};
\node (XM1text) at (0,-2) {$\Xt^{(M-1)}$};
\node (XMtext) at (-1.41,-1.41) {$\Xt^{(M)}$};
\node (hublabel) at (2,2.5) {\footnotesize\color{red} latent hub};
\draw[line width=0.5mm,shorten >=5pt,shorten <=5pt] (X.east) -- (Xhub.west);
\draw[line width=0.5mm,shorten >=5pt,shorten <=5pt] (X1.south east) -- (Xhub.north west);
\draw[line width=0.5mm,shorten >=5pt,shorten <=5pt] (X2.south) -- (Xhub.north);
\draw[line width=0.5mm,shorten >=5pt,shorten <=5pt] (X3.south west) -- (Xhub.north east);
\draw[line width=0.5mm,shorten >=5pt,shorten <=5pt] (Xdots.west) -- (Xhub.east);
\draw[line width=0.5mm,shorten >=5pt,shorten <=5pt] (XM2.north west) -- (Xhub.south east);
\draw[line width=0.5mm,shorten >=5pt,shorten <=5pt] (XM1.north) -- (Xhub.south);
\draw[line width=0.5mm,shorten >=5pt,shorten <=5pt] (XM.north east) -- (Xhub.south west);
\draw [<-,red] (Xhub.north east) to [out=60,in=210] (hublabel.south west);
\end{scope}

\begin{scope}[xshift=1.5cm,yshift=0cm]

\node (P1) at (0.4,0) {};
\node (P2) at (2,0) {};
\node (P3) at (3.45,0) {};
\node (P4) at (4.8,0) {};
\node (P5) at (6.4,0) {};
\node (P6) at (8,0) {};
\node (P7) at (9.6,0) {};
\node (P2text) at (2,0) {$\Xt^{(2)}$};
\node (P3text) at (3.45,0) {$X$};
\node (P4text) at (4.8,0) {$\Xt^{(1)}$};
\node (P5text) at (6.4,0) {$\dots$};
\node (P6text) at (8,0) {$\Xt^{(M)}$};
\node (P7text) at (9.6,0) {$\Xt^{(3)}$};
\node (Pleft) at (1.5,0) {};
\node (Pright) at (10.1,0) {};

\draw[line width=0.5mm,shorten >=3pt,shorten <=8pt] (P2.east) -- (P3.west);
\draw[line width=0.5mm,shorten >=8pt,shorten <=3pt] (P3.east) -- (P4.west);
\draw[line width=0.5mm,shorten >=8pt,shorten <=8pt] (P4.east) -- (P5.west);
\draw[line width=0.5mm,shorten >=8pt,shorten <=8pt] (P5.east) -- (P6.west);
\draw[line width=0.5mm,shorten >=8pt,shorten <=8pt] (P6.east) -- (P7.west);
\draw[decoration={brace,mirror,raise=20pt},decorate,color=red] (Pleft) -- node[below=25pt] {\footnotesize\color{red} Random permutation of $M+1$ positions} (Pright);
\end{scope}

\end{tikzpicture}
\caption{Left: the hub-and-spoke sampler. Right: the permuted serial sampler. In both diagrams, each thick black line represents running the reversible Markov chain for $L$ steps.}
\label{fig:mcmc}
\end{figure}
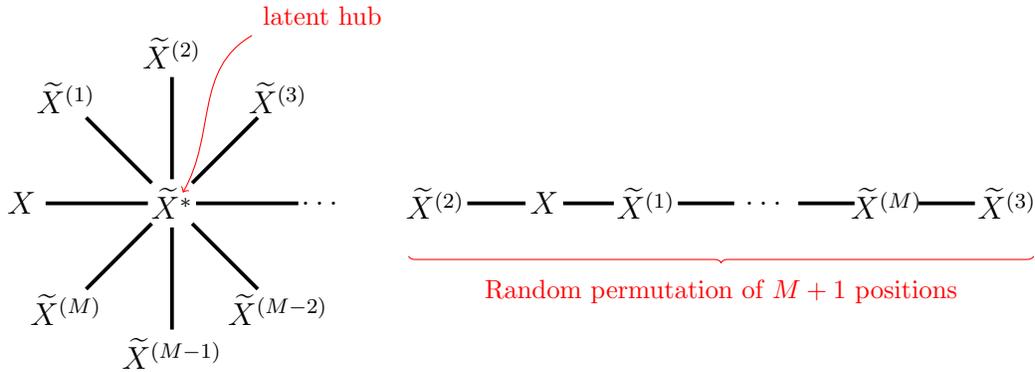

\paragraph{A unified definition}
To generalize our various options (i.i.d.~sampling, hub-and-spoke MCMC sampling, and permuted serial MCMC sampling),
we will write $\Pt_M(\cdot;X,\thetah)$ to denote 
the distribution of the collection of copies $(\Xt^{(1)},\dots,\Xt^{(M)})$ conditional on $X$ and $\thetah$.
For all three cases, our aCSS procedure for sampling the copies is the following:
\begin{equation}\label{eqn:aCSS_alg}
\begin{cases} \textnormal{Draw $W\sim\normal(0,\tfrac{1}{d}\ident_d)$ and define $\thetah = \thetah(X,W)$.}\\
 \textnormal{If $\thetah$ is a SSOSP of $\Lcal(\theta;X,W)$, then draw $(\Xt^{(1)},\dots,\Xt^{(M)})\sim \Pt_M(\cdot;X,\thetah)$,}\\
\textnormal{\hspace{2.17in} otherwise
 return $\Xt^{(1)}= \dots \Xt^{(M)} = X$.}\end{cases}
\end{equation}

In the i.i.d.~sampling case, $\Pt_M(\cdot;X,\thetah)$ is simply equal to sampling from the product density $p_{\thetah}(\cdot|\thetah)\times \dots \times p_{\thetah}(\cdot|\thetah)$,
and therefore depends on $\thetah$ but not on $X$,
while for the two MCMC samplers, there is dependence between the data and the copies
even after conditioning on $\thetah$ (although, if the chain length $L$ is sufficiently long, we would expect this dependence to be weak).
Despite this dependence, all three of these sampling schemes satisfy the following exchangeability condition:
for all $\theta\in\Theta$ with $\nu_\Xcal(\Xcal_\theta)>0$,
\begin{equation}\label{eqn:exch_some_theta}
\textnormal{\begin{tabular}{c}
If $X\sim p_\theta(\cdot\giv \theta)$ and $(\Xt^{(1)},\dots,\Xt^{(M)})\mid X\sim  \Pt_M(\cdot;X,\theta)$, then\\
 the random vector $(X,\Xt^{(1)},\dots,\Xt^{(M)})$ is exchangeable.\end{tabular}}\end{equation}
Note that $\Pt_M(\cdot;X,\theta)$ replaces \emph{all} instances of $\thetah$ in the definition of $\Pt_M(\cdot;X,\thetah)$ with $\theta$'s. Of course, it may be of interest to examine other sampling schemes, aside from the three described above. Our theoretical results
below apply to any algorithm of the form~\eqref{eqn:aCSS_alg} as long as the distribution $\Pt_M$ for drawing the copies
is chosen to satisfy~\eqref{eqn:exch_some_theta}.

\section{Theoretical results}

In this section, we
present our main result, proving a bound on the excess Type I error of any aCSS testing procedure.

\subsection{Main result: Type I error bound}
Before presenting the theorem, we will need a few more assumptions on the model
and on the noisy estimator $\thetah$.  
First,
we need to assume
that $\thetah$ is (typically) an accurate estimator of the unknown true $\theta_0$, and that $\thetah$ will (typically) return a SSOSP for the optimization problem~\eqref{eqn:thetah_def}:
\begin{assumption}\label{asm:thetah}
For any $\theta_0\in\Theta$, the estimator $\thetah:\Xcal\times\R^d\rightarrow\Theta$ satisfies
\begin{equation}\label{eqn:asm:thetah}
\pr{\norm{\thetah(X,W)-\theta_0}\leq r(\theta_0)\textnormal{, and $\thetah(X,W)$ is a SSOSP of $\Lcal(\theta;X,W)$}}\geq 1 - \delta(\theta_0),
\end{equation}
where the probability is taken with respect to the distribution  $(X,W)\sim P_{\theta_0} \times \normal(0,\tfrac{1}{d}\ident_d)$.
\end{assumption}
\noindent For many parametric families,
the maximum likelihood estimator (or a penalized MLE) is typically shown to satisfy this type of condition with $r(\theta_0)=\bigot(n^{-1/2})$  (here $\bigot(\cdot)$ denotes that the scaling holds up to powers of $\log n$).
  This assumption has essentially the same flavor,
except that our estimator $\thetah$ is a random perturbation
of the penalized MLE. We discuss this assumption in more detail in Appendix~\ref{app:proofs_for_examples}.

Next, we place some assumptions on the derivatives of the log-likelihood.
Let $H(\theta;x) = -\nabla^2_\theta\log f(x;\theta)$ and
let $H(\theta) = \Ewrt{\theta_0}{H(\theta;X)}$ (in particular, $H(\theta_0)$ is the Fisher information).
\begin{assumption}\label{asm:hessian}
For any $\theta_0\in\Theta$, the expectation $H(\theta)$ exists for all $\theta\in\ball(\theta_0,r(\theta_0))\cap\Theta$, and furthermore
\begin{equation}\label{eqn:asm:hessian_conc1}\Ewrt{\theta_0}{\sup_{\theta\in\ball(\theta_0,r(\theta_0))\cap\Theta}r(\theta_0)^2\cdot\big(\lambda_{\max}(H(\theta)-H(\theta;X))\big)_+} \leq \eps(\theta_0)\end{equation}
and
\begin{equation}\label{eqn:asm:hessian_conc2}\Ewrt{\theta_0}{\exp\left\{\sup_{\theta\in\ball(\theta_0,r(\theta_0))\cap\Theta}r(\theta_0)^2\cdot\big(\lambda_{\max}(H(\theta;X)-H(\theta))\big)_+\right\}} \leq e^{\eps(\theta_0)}.\end{equation}
\end{assumption}
\noindent 
Here $r(\theta_0)$ is the same constant as appears in Assumption~\ref{asm:thetah} (which, as mentioned above, will scale as 
$r(\theta_0)=\bigot( n^{-1/2})$ in many settings).
To interpret our assumption, we  note that assumptions of the form
\[\norm{H(\theta;X)-H(\theta)} = \bigop(n^{1/2})\]
are standard for establishing classical results such as asymptotic normality of the MLE; even with a bound as weak as $r(\theta_0) = \littleo(n^{-1/4})$,
 this type of assumption will immediately imply that the
first bound~\eqref{eqn:asm:hessian_conc1} holds. 
However, this type of condition is not quite sufficient for the theoretical arguments we need to establish, and we instead need 
the condition~\eqref{eqn:asm:hessian_conc2}, which implies the same
rate of convergence but with stronger control of the tails.

With our assumptions in place, we state the main result, which bounds the distance to exchangeability---and therefore, the Type I error---of any aCSS 
procedure.
\begin{theorem}\label{thm:main}
Suppose Assumptions~\ref{asm:family},~\ref{asm:thetah}, and~\ref{asm:hessian} all hold.
After observing the data $X$, suppose we run the aCSS algorithm~\eqref{eqn:aCSS_alg},
where the distribution $\Pt_M$ is chosen to satisfy~\eqref{eqn:exch_some_theta}.
Then, if $X\sim P_{\theta_0}$ for some $\theta_0\in\Theta$, 
 the copies $\Xt^{(1)},\dots,\Xt^{(M)}$ are approximately exchangeable with $X$, satisfying
\[\dex(X,\Xt^{(1)},\dots,\Xt^{(M)}) \leq 3\sigma \cdot r(\theta_0) + \delta(\theta_0)+ \eps(\theta_0).\]
In particular, this implies that for any predefined test statistic $T:\Xcal\rightarrow\R$ and rejection threshold $\alpha\in[0,1]$,
the p-value defined in~\eqref{eqn:pval} satisfies
\[\pr{\textnormal{pval}_T(X,\Xt^{(1)},\dots,\Xt^{(M)})\leq \alpha} \leq \alpha + 3\sigma \cdot r(\theta_0) + \delta(\theta_0)+ \eps(\theta_0).\]
\end{theorem}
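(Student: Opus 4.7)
My plan is to compare the aCSS joint distribution of $(\thetah, X, \Xt^{(1)}, \ldots, \Xt^{(M)})$ to an idealized distribution whose copies are exactly exchangeable with $X$, tracking three sources of error: the estimator failure probability $\delta(\theta_0)$, a conditional-density gap that produces the $\sigma\cdot r(\theta_0)$ term, and a Hessian fluctuation producing $\eps(\theta_0)$. First I would define the good event $E^\star = \{\thetah\text{ is a SSOSP of }\Lcal(\cdot;X,W)\text{ and }\|\thetah-\theta_0\|\le r(\theta_0)\}$; Assumption~\ref{asm:thetah} gives $\prwrt{\theta_0}{E^{\star c}}\le \delta(\theta_0)$, and Lemma~\ref{lem:compute_conditional} identifies $p_{\theta_0}(\cdot\mid\thetah)$ as the conditional density of $X\mid\thetah$ under $P_{\theta_0}\mid E^\star$.

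Next, I would introduce an ideal coupling $Q$ that keeps the same $\thetah$-marginal as $P_{\theta_0}\mid E^\star$ but draws $X\mid\thetah$ from the \emph{plug-in} density $p_{\thetah}(\cdot\mid\thetah)$ and then samples the copies from $\Pt_M(\cdot;X,\thetah)$. Applying~\eqref{eqn:exch_some_theta} at $\theta=\thetah$ under $Q$ makes $(X,\Xt^{(1)},\ldots,\Xt^{(M)})$ exchangeable conditional on $\thetah$, and therefore also unconditionally as a mixture of exchangeable laws. Because the actual distribution and $Q$ share both the $\thetah$-marginal (on $E^\star$) and the copy-drawing step, the chain rule for total variation distance gives
\[
\dex\bigl(X,\Xt^{(1)},\ldots,\Xt^{(M)}\bigr)\le \delta(\theta_0)+\mathbb{E}_{\mu}\!\left[\dtv\!\left(p_{\theta_0}(\cdot\mid\thetah),\,p_{\thetah}(\cdot\mid\thetah)\right)\right],
\]
with $\mu$ the $\thetah$-marginal under $P_{\theta_0}\mid E^\star$. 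The p-value statement then follows directly from the derivation presented just after Definition~\ref{def:dex}, which converts any $\dex$ bound into a Type~I error bound.

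The main step, and the principal obstacle, is showing that the inner expected TV is at most $3\sigma r(\theta_0)+\eps(\theta_0)$. The key observation is that $p_{\theta_0}(\cdot\mid\thetah)$ and $p_{\thetah}(\cdot\mid\thetah)$ share the common nonnegative factor $G(x,\thetah)=\exp\{-\|\nabla_\theta\Lcal(\thetah;x)\|^2/(2\sigma^2/d)\}\det(\nabla_\theta^2\Lcal(\thetah;x))\one{x\in\Xcal_{\thetah}}$, so their Radon--Nikodym derivative equals $e^{\ell(x)}/\mathbb{E}_{p_{\thetah}(\cdot\mid\thetah)}[e^{\ell}]$ with $\ell(x)=\log f(x;\theta_0)-\log f(x;\thetah)$, and any part of $\ell$ that is constant in $x$ cancels. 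I would expand $\ell$ to second order about $\thetah$ and use the SSOSP identity $\nabla\log f(x;\thetah)=\nabla\pen(\thetah)+\sigma W(x)$ valid on $\Xcal_{\thetah}$, after which the only $x$-dependent pieces of $\ell$ are the linear noise term $\sigma\langle W(x),\theta_0-\thetah\rangle$ and the centered Hessian fluctuation $\tfrac12(\theta_0-\thetah)^\top[H(\tilde\theta;x)-H(\tilde\theta)](\theta_0-\thetah)$.

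Using the identity $\dtv=\tfrac12\mathbb{E}[|e^{\ell}/\mathbb{E}[e^{\ell}]-1|]$ together with the elementary inequality $|e^{u}-1|\le|u|e^{|u|}$, the noise term contributes the $\sigma r(\theta_0)$ scaling (with the factor $3$ absorbing constants from the normalization and from converting one-sided into two-sided control), while the quadratic Hessian term is bounded by Assumption~\ref{asm:hessian}, giving $\eps(\theta_0)$. The bookkeeping subtlety I expect to wrestle with is that Assumption~\ref{asm:hessian} gives expectations under $P_{\theta_0}$ whereas the conditional TV above is computed under $p_{\thetah}(\cdot\mid\thetah)$; to handle this I would pull the expectation over $\mu$ back into a single joint expectation under $(X,\thetah)\sim P_{\theta_0}\mid E^\star$ before invoking~\eqref{eqn:asm:hessian_conc1}. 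The one-sided exponential-moment bound~\eqref{eqn:asm:hessian_conc2} is what keeps the $e^{|u|}$ multiplicative factor in the $L^1$-to-TV conversion under control on average, which is precisely why both halves of Assumption~\ref{asm:hessian} are needed rather than only a first-moment version.
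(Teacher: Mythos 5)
Your proposal follows essentially the same route as the paper's proof: restrict to the good event, compare the actual law of $(X,\Xt^{(1)},\dots,\Xt^{(M)})$ to the idealized plug-in law (exchangeable by~\eqref{eqn:exch_some_theta}), reduce to the expected total variation between $p_{\theta_0}(\cdot\mid\thetah)$ and $p_{\thetah}(\cdot\mid\thetah)$, note that the common factor $G$ cancels leaving only the likelihood ratio, Taylor-expand to separate the $\sigma W$ linear term from the Hessian fluctuation, and finish with Assumptions~\ref{asm:thetah} and~\ref{asm:hessian}. The only deviations are cosmetic: the paper splits the SSOSP and ball events rather than combining them into one $E^\star$, and it uses the one-sided TV identity $\dtv(g,h)=\mathbb{E}_g\bigl[(1-h/g)_+\bigr]$ together with the inequality $1-ab\le\log(1/a)+(1-b)$ instead of the two-sided $L^1$ form with $|e^u-1|\le|u|e^{|u|}$, which makes the normalization constant $\mathbb{E}[e^\ell]$ easier to control and yields the clean constant $3$.
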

\noindent The proof of this theorem is given in Appendix~\ref{app:proof_thm:main}.

\subsection{The asymptotic view}\label{sec:theory_asymptotic_view}
The theoretical guarantee given in Theorem~\ref{thm:main} is nonasymptotic, but
 it typically implies asymptotic control of the Type I error.
In particular, in many standard settings where the observed data
arises from an independent sample of size $n$, 
the terms $r(\theta_0)$, $\delta(\theta_0)$, and $\eps(\theta_0)$ are all vanishing, and in particular
we will expect to see  $r(\theta_0)=\bigot( n^{-1/2})$. Thus, if we choose noise level $\sigma \asymp n^a$ for some $a<\tfrac{1}{2}$, this
will lead to asymptotic Type I error control, i.e.,  $\pr{\textnormal{pval}\leq \alpha}  = \alpha + \littleo(1)$. 

 Furthermore, the Type I error bound in Theorem~\ref{thm:main} gives insight into the role of approximate (or asymptotic) sufficiency in the method--- $\thetah(X,W)$ is essentially a MLE (assuming $\sigma=\littleo(n^{1/2})$ as before)---this is because the size of the perturbation of the negative log-likelihood,
 $\norm{\nabla_\theta\Lcal(\theta;X,W)-\nabla_\theta\Lcal(\theta;X)}=\sigma \norm{W} = \littleo(n^{1/2})$, is vanishing relative to $\norm{\nabla_\theta\Lcal(\theta_0;X)}\asymp n^{1/2}$. Thus under standard assumptions,  $\thetah(X,W)$
 is asymptotically efficient, and inherits the asymptotic sufficiency properties of the MLE.
At a high level, this means that the distributions $p_{\theta_0}(\cdot\giv\thetah)$ of $X\giv \thetah$ and $p_{\thetah}(\cdot\giv\thetah)$ of $\Xt^{(m)}\giv\thetah$
are asymptotically equal (i.e., the total variation distance between them is vanishing), 
leading to asymptotic exchangeability between $X$ and its copies, and consequently an asymptotic Type I error bound
at the nominal level $\alpha$ as shown in Theorem~\ref{thm:main}.

\subsection{Choosing $\sigma$}\label{sec:choosing_sigma}
It may seem odd that we have advocated for $\sigma>0$ and yet the Type 1 error bound in our main result gets \emph{worse} as $\sigma$ increases. Indeed, increasing $\sigma$ will generally degrade the Type 1 error of aCSS testing due to the fact that, as $\sigma$ is increased, the method moves farther from conditioning on a sufficient statistic. And in fact, taking the limit as $\sigma\rightarrow 0$ in Theorem 1 gives the tightest possible Type 1 error bound (only Assumption~\ref{asm:thetah} depends on $\sigma$, and in general we would expect it to be even more plausible for smaller $\sigma$). In addition, as discussed in Section~\ref{sec:draw_thetah} and in Appendix~\ref{app:optimization}, increasing $\sigma$ can decrease the probability of finding an SSOSP for the optimization \eqref{eqn:thetah_def}, which will not negatively impact the Type 1 error, but will decrease the power of the test by increasing the probability of returning a p-value of 1. However, despite these two downsides, there are two critical reasons why it is advantageous, and arguably necessary, to take $\sigma>0$, and this is why we allow for it in Theorem~\ref{thm:main}.

First, note that if we took $\sigma=0$, aCSS would need to sample from a distribution supported on a level set of the MLE function of $X$. This level set is a low-dimensional (and hence measure-zero) subset of $\Xcal$, and thus it is generally computationally intractable to sample from exactly. There is some work on sampling a random variable conditional on the value of a function of it (e.g., \cite{diaconis2013sampling}), but only in very limited settings. Thus in most applications of aCSS, we are not aware of a computationally tractable approach that does not take $\sigma>0$. Once we accept that $\sigma>0$ is computationally necessary, the choice of its value represents a power-computation trade-off within the MCMC samplers we propose in this paper. This trade-off is discussed more in Appendix~\ref{app:sampling_randomizations}, but essentially as $\sigma$ approaches zero, it will take increasingly many MCMC steps (and associated computation) for the sampler to move ``away" from the original $X$ towards conditional independence. The more the sampler can move ``away" from $X$, the higher the power of aCSS testing will tend to be, since a small p-value is obtained exactly when $X$ stands out among the sampled copies.

Second, for models in which the MLE is sufficient for $\theta$ as well as for the parameters in a higher-dimensional supermodel of $\{P_{\theta}:\theta\in\Theta\}$ (e.g., in the logistic regression example the MLE is equivalent to $X$ itself and thus is sufficient for all the parameters in any model), taking $\sigma=0$ would lead to a completely powerless test for all alternatives in that supermodel. Exactly how large $\sigma$ needs to be to break this degeneracy will likely need to be worked out on a case-by-case basis, and we defer a general treatment to future work. 
However, we see in Section~\ref{sec:examples} that for the logistic regression setting, described
in Model Class~\ref{cl:dwac}, we can easily achieve high power with a $\sigma$ value that is still sufficiently small to have no visible impact on the Type 1 error.

%%%%%%%%%%%%%%%%%%%%%%%%%%%%%%%%%%%%%%%%%%%%%%%%%%%%%%%%%%%
%% examples
%%%%%%%%%%%%%%%%%%%%%%%%%%%%%%%%%%%%%%%%%%%%%%%%%%%%%%%%%%%

\section{Examples}\label{sec:examples}
To provide further insight into the generality and practicality of aCSS testing, we establish that the necessary assumptions hold for four specific models.
Example~\ref{ex:canonical_glm} (generalized linear models with canonical parameters) is an example
of a regression model containing data with associated covariates, as discussed in Model Class~\ref{cl:dwac}.
Example~\ref{ex:behrens_fisher} (the Behrens--Fisher problem) and Example~\ref{ex:gaussian_spatial} (a Gaussian spatial process) are both examples of curved exponential families,
discussed earlier in Model Class~\ref{cl:cef}.
Example~\ref{ex:multivariate_t} (a multivariate t distribution) is a heavy-tailed model, and is thus an instance of Model Class~\ref{cl:htm}.

In each case, we will see that the assumptions of Theorem~\ref{thm:main} are satisfied
with $r(\theta_0) = \bigot(n^{-1/2})$, and with vanishing $\delta(\theta_0)$ and $\eps(\theta_0)$.
In particular, choosing a noise level $\sigma\asymp n^a$ for any $a<\tfrac{1}{2}$ is sufficient to ensure that the Type I error is asymptotically bounded by the nominal
level $\alpha$. We will then show simulation results for each of the four examples in Section~\ref{sec:examples_sims} below.

\subsection{Canonical generalized linear models (GLMs)}\label{sec:ex1}
\begin{example}\label{ex:canonical_glm}\normalfont
We begin with the setting of a generalized linear model (GLM) with canonical parameters.
Consider a logistic regression model with covariates $Z_i\in\R^d$ associated with each $X_i\in\R$, so that 
\[f(x;\theta) = \prod_{i=1}^n \left(\frac{e^{Z_i^\top\theta}}{1+e^{Z_i^\top\theta}}\right)^{x_i}\cdot\left(\frac{1}{1+e^{Z_i^\top\theta}}\right)^{1-x_i},\]
parametrized by $\theta\in\Theta = \R^d$. (We interpret $f(x;\theta)$ as a density with respect to the base
measure $\nu_\Xcal$ on $\Xcal=\R^n$ that places mass 1 on each point $x\in\{0,1\}^n$.) We can rewrite this in the notation of a generalized linear model (GLM),
\[f(x;\theta) = \exp\left\{ x^\top Z \theta - \sum_{i=1}^n \log(1 + e^{Z_i^\top\theta})\right\},\]
where $Z\in\R^{n\times d}$ is the matrix with rows $Z_i$.
As discussed above, for $X\sim P_\theta$, the random vector $S(X) = Z^\top X\in\R^d$ provides a sufficient statistic; however,
if the rows $Z_i$ are in general position, then $Z^\top X$ will determine $X\in\{0,1\}^n$ uniquely, meaning that
$X$ is no longer random after we condition on $S(X) = Z^\top X$. In other words, co-sufficient sampling (CSS) would lead to zero power, and we
therefore need to turn to aCSS testing.

More generally, we can consider any canonical GLM, of the form
\[f(x;\theta) = \exp\left\{ x^\top Z \theta - \sum_{i=1}^n a(Z_i^\top\theta)\right\},\]
with respect to some  base measure $\nu_\Xcal = \mu\times\dots\times \mu$ on $\Xcal = \R^d$, where $\mu$ is a measure
on $\R$. The function $a$ is known as the {\em partition function}, and is strictly convex on its domain, which must be an open subset of $\R$.
As for logistic regression, $Z^\top X$ is a sufficient statistic for $X\sim P_\theta$, but in the case of a discrete distribution (e.g., Poisson),
CSS will again lead to zero power and so we should instead consider aCSS.

Suppose that the sample size $n$ tends to infinity,
while the parameter $\theta_0$ is held constant (in particular, this implies that dimension $d$ is held constant---we
leave the high-dimensional setting for future work). 
For this example, and all the others below, we will consider the unpenalized version of the method, i.e., $\pen(\theta)\equiv 0$.
Assume the covariates are entrywise bounded, i.e., $\max_{i,j}\norm{Z_{ij}}_{\infty}$ is bounded by a constant,
and $\frac{1}{n} Z^\top Z\succeq \lambda_0 \ident_d$ for a positive constant $\lambda_0$.
We treat the covariates as fixed (i.e., the theory holds conditional on the covariates).
Then, as we will show in Appendix~\ref{app:proofs_for_examples}, for an appropriately-chosen initial estimator
 this example satisfies Assumptions~\ref{asm:family},~\ref{asm:thetah}, and~\ref{asm:hessian}
with $r(\theta_0)=\bigot(n^{-1/2})$, $\delta(\theta_0)=\bigo(n^{-1})$, and $\eps(\theta_0) = 0$.
\end{example}

\subsection{The Behrens--Fisher problem}\label{sec:ex2}
\begin{example}\label{ex:behrens_fisher}\normalfont
Next we consider the classical example of the Behrens--Fisher problem. 
Consider data 
\[X^{(0)}_1,\dots,X^{(0)}_{n^{(0)}} \iidsim \normal(\mu^{(0)},\gamma^{(0)}),\quad X^{(1)}_1,\dots,X^{(1)}_{n^{(1)}}\iidsim \normal(\mu^{(1)},\gamma^{(1)}),\]
with the two samples drawn independently.
We are interested in testing the null hypothesis $H_0: \mu^{(0)} = \mu^{(1)}$, and therefore the family of distributions
can be parameterized by $\theta = (\mu,\gamma^{(0)},\gamma^{(1)}) \in \Theta = \R\times\R_+\times\R_+\subseteq\R^3$, yielding a  family
$\{P_\theta : \theta\in\Theta\}$ where $P_\theta$ has density
\[f(x;\theta)  = f(x;(\mu,\gamma^{(0)},\gamma^{(1)})) =  \prod_{i=1}^{n^{(0)}} \frac{1}{\sqrt{2\pi\gamma^{(0)}}}e^{-(X^{(0)}_i-\mu)^2 / 2\gamma^{(0)}} \cdot \prod_{i=1}^{n^{(1)}} \frac{1}{\sqrt{2\pi\gamma^{(1)}}}e^{-(X^{(1)}_i-\mu)^2 / 2\gamma^{(1)}}\]
with respect to the Lebesgue measure on $\Xcal = \R^{n^{(0)}+n^{(1)}}$.

This problem is an example of a curved exponential family (Problem Domain~\ref{cl:cef}), 
for which the larger model is parametrized by $(\mu^{(0)},\gamma^{(0)},\mu^{(1)},\gamma^{(1)})$---note that 
the constraint $\mu^{(0)} = \mu^{(1)}$ is a nonlinear constraint once
we transform to the canonical parameters, which are given by $(\gamma^{(\ell)})^{-1}\mu^{(\ell)}$, $(\gamma^{(\ell)})^{-1}$ for each $\ell\in\{0,1\}$.
For this problem, under the null model (i.e., parametrized by $\theta=(\mu,\gamma^{(0)},\gamma^{(1)})$), the minimal sufficient statistic is nonetheless four-dimensional---for example,
the sample means and sample standard deviations of $\{X^{(0)}_i\}$ and of $\{X^{(1)}_i\}$ form a minimal sufficient statistic.
Of course, this statistic is also sufficient for the larger alternative model (where $\mu^{(0)}\neq \mu^{(1)}$); once we condition on this sufficient statistic,
the remaining randomness in the data carries no information about the parameters $\mu^{(0)}$ and $\mu^{(1)}$. Therefore, CSS would lead to 
a completely powerless procedure, and we instead turn to aCSS.
(As mentioned earlier in Section~\ref{sec:relatedwork}, \cite{lillegaard2001tests} mention the possibility of, but do not pursue, an aCSS-like procedure
for this specific example.) 

Suppose that the sample size $n$ tends to infinity,
while the parameter $\theta_0$ is held constant and  the ratio $\frac{\max\{n^{(0)},n^{(1)}\}}{\min\{n^{(0)},n^{(1)}\}}$ is bounded
by a constant.
Then, as we will show in Appendix~\ref{app:proofs_for_examples}, for an appropriately-chosen initial estimator
 this example satisfies Assumptions~\ref{asm:family},~\ref{asm:thetah}, and~\ref{asm:hessian}
with $r(\theta_0)\asymp \bigot(n^{-1/2})$, $\delta(\theta_0)\asymp \bigo(n^{-1})$, and $\eps(\theta_0) = \bigot(n^{-1})$.
\end{example}

\subsection{A Gaussian spatial process}\label{sec:ex3}
\begin{example}\label{ex:gaussian_spatial}\normalfont
For our next example, we will work in a dependent data setting---unlike the other three examples, we do not have independent observations.
Our model is a Gaussian spatial process. 
Suppose that $X\in\R^n$ is distributed according to a multivariate Gaussian,
\[X \sim \normal(0,\Sigma_\theta),\]
where the covariance matrix $\Sigma_\theta$ is parametrized by a scalar $\theta\in\R$. Specifically, we will consider
a spatial Gaussian process where
\[(\Sigma_\theta)_{ij} = \exp\left\{ - \theta \cdot D_{ij} \right\},\]
where $(D_{ij})\in\R^{n\times n}$ is a pairwise distance matrix among $n$ spatial points. In other words, we can think of
the observation $X_i$ as corresponding to a location
$z_i\in\R^k$ for some ambient dimension $k$, and the correlation between $X_i$ and $X_j$ is a decaying function of the distance between locations $z_i$ and $z_j$, i.e., $D_{ij}=\norm{z_i-z_j}$.
We assume that the distances $D_{ij}$ are known, and the parameter $\theta\in \Theta=(0,\infty)\subseteq\R$ is the only unknown.
This example, like Example~\ref{ex:behrens_fisher}, is an instance of a curved exponential family. In this case, the larger model is given by $X\sim \normal(0,\Omega^{-1})$,
where the inverse covariance $\Omega$ is the canonical parameter. The nonlinear constraints introduced by the spatial model take the form
\[(D_{ij})^{-1} \log (\Omega^{-1})_{ij} =(D_{k\ell})^{-1} \log (\Omega^{-1})_{kl}\]
for all indices $i,j,k,\ell$ (since the expression on each side of this equation should be equal to the same value $\theta$).
As in Example~\ref{ex:multivariate_t}, the minimal sufficient statistic for our curved exponential null model is the same as that for the larger exponential family---in this case, it is given by the (uncentered) sample covariance---and therefore CSS would result in a powerless procedure
for testing against any mean-zero multivariate Gaussian alternative.

Now we turn to aCSS for this example. In this setting, the distribution $P_\theta$ has density
\[f(x;\theta) = \frac{1}{(2\pi)^{n/2}\det(\Sigma_\theta)^{1/2}}e^{-x^\top \Sigma_\theta^{-1}x / 2},\]
with respect to the Lebesgue measure on $\R^n$.
The negative log-likelihood $\theta\mapsto - \log f(x;\theta)$ is therefore nonconvex, due to the nature of the 
map $\theta\mapsto \Sigma_\theta$.

 It is known, however, that in the special case where the locations $z_i$ are on a regular integer 
lattice, standard results such as asymptotic normality of the MLE can be obtained \citep{bachoc2014asymptotic},
and so we will work in this setting. Consider the integer grid $\{z_1,\dots,z_n\} = \{1,\dots,N\}^k$, where $n=N^k$.
As above, the distances $D_{ij}$ are given by $\norm{z_i-z_j}$.
Suppose that the grid size $N$ tends to infinity,
while the dimension $k$ and the parameter $\theta_0$ are held constant.
Then, as we will show in Appendix~\ref{app:proofs_for_examples}, for an appropriately-chosen initial estimator
 this example satisfies Assumptions~\ref{asm:family},~\ref{asm:thetah}, and~\ref{asm:hessian}
with $r(\theta_0) = \bigot(n^{-1/2})$, $\delta(\theta_0)=\bigo(n^{-1})$, and $\eps(\theta_0)=\bigot(n^{-1/2})$.
\end{example}

\subsection{The multivariate t distribution with unknown covariance}\label{sec:ex4}
\begin{example}\label{ex:multivariate_t}\normalfont
Our last example will demonstrate that our methodology can be applied even in settings where the data is extremely heavy-tailed---specifically,
the multivariate t distribution. 
We consider a setting with $n$ i.i.d.~draws from a zero-mean multivariate t distribution,
\[X_i \iidsim t_\gamma(0,\theta^{-1}),\]
where $\theta^{-1}\in\R^{k\times k}$ is an unknown covariance matrix while $\gamma>0$ is the known degrees-of-freedom parameter.
(Breaking with standard notation, we will use a lowercase $\theta$ to denote a matrix parameter, to agree with our notation throughout this paper.)
Our family of distributions is therefore given by $\{P_\theta:\theta\in\Theta\}$, where
$\Theta\subseteq\R^{k\times k}$ is the set of positive definite $k\times k$ matrices. We can view $\Theta$ as a convex open subset of $\R^d$ with $d=\frac{k(k+1)}{2}$, 
by considering the upper triangle of a positive definite matrix $\theta$. The density is
\[f(x;\theta) = \prod_{i=1}^n c_{k,\gamma}\det(\theta)^{1/2} \left(\gamma + x_i^\top \theta x_i\right)^{-\frac{\gamma+k}{2}},\]
with respect to the Lebesgue measure on $\Xcal = (\R^k)^n$,
where $c_{k,\gamma}$ depends only on the dimension $k$ and the degrees-of-freedom parameter $\gamma$, and not on the unknown parameter $\theta$.
Unlike a GLM, we cannot write the log-density $\log f(x;\theta)$ in the form (function of $x$)$\cdot$(function of $\theta$).
In fact, we can see that, up to permutation and/or multiplication by $-1$ of the data points $i=1,\dots,n$, the data $X$ itself is a minimal sufficient statistic for $\theta$, so there is no sufficient statistic that would not
essentially fully specify the data. Thus for instance, CSS testing would be powerless against any i.i.d. alternative that is invariant to reflection through the origin.
However, the approximate sufficiency framework is well-suited for this example.

Suppose that the sample size $n$ tends to infinity,
while the degrees-of-freedom parameter $\gamma$ and the unknown matrix parameter $\theta_0$ are held constant (in particular, this implies that the dimension $k$ is held constant---we
leave the high-dimensional setting for future work).
Then, as we will show in Appendix~\ref{app:proofs_for_examples}, for an appropriately-chosen initial estimator
 this example satisfies Assumptions~\ref{asm:family},~\ref{asm:thetah}, and~\ref{asm:hessian}
with $r(\theta_0)=\bigot(n^{-1/2})$, $\delta(\theta_0) = \bigo(n^{-1})$, and $\eps(\theta_0)= \bigot(n^{-1/2})$.
\end{example}

\subsection{Simulations}\label{sec:examples_sims}
We now demonstrate the performance of aCSS for each of the four examples described above; code to reproduce
the simulations is available at \url{http://www.stat.uchicago.edu/~rina/code/aCSS.zip}.
We will first show two examples in Section~\ref{sec:examples_sim_with_score} with relatively simple parametric alternative models, for which competing methods exist; in these examples, we will see aCSS testing is as powerful as the most powerful established method, namely, the score test. Then, in Section~\ref{sec:examples_sim_without_score}, we will consider two more complex examples 
 exhibiting alternative models which elude standard approaches, and for which we are unaware of any existing test that would be powerful; we will see that aCSS testing can be powerful in such settings through the choice of a relatively sophisticated test statistic that fully leverages the particular alternative model. 
 
 For both types of examples, we will also see that the aCSS test is empirically valid (the rejection probability is almost exactly the nominal level $\alpha=0.05$ under the null hypothesis) and that it has only slightly less power than an oracle method---this oracle method is given extra information about the distribution that reduces the composite null to a simple null, and computes a p-value~\eqref{eqn:pval} by applying the same statistic function $T$ as aCSS to $M$ copies $\Xt^{(m)}$ drawn independently (unconditionally) from that simple null.

 \subsubsection{Simulations with a parametric alternative}\label{sec:examples_sim_with_score}

We use Examples~\ref{ex:behrens_fisher} (Behrens--Fisher) and \ref{ex:multivariate_t} (multivariate t) to demonstrate similar power between the aCSS test and the score test under parametric alternatives. The results, plotted in  Figure~\ref{fig:examples_with_score}, show the aCSS tests have very similar power to both the oracle and score tests. The simulation setups for the two examples are summarized below; the choice of the proposal distributions for the MCMC samplers, and chain lengths $L$, are described in detail in Appendix~\ref{app:compute}.

\begin{figure}\centering
\includegraphics[width=0.49\textwidth]{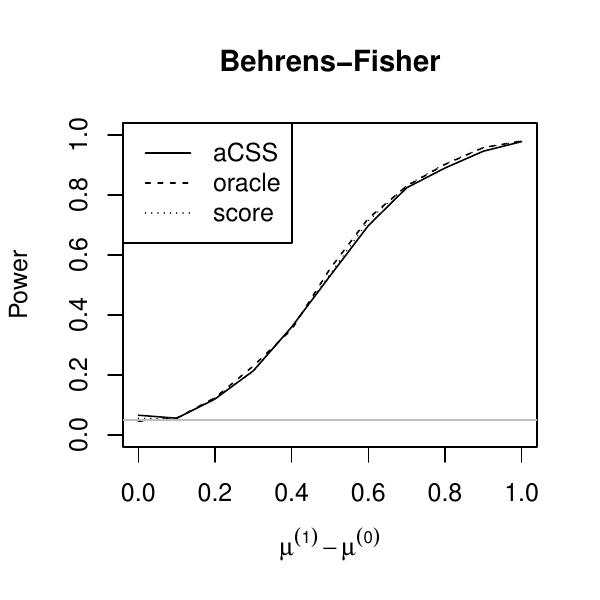}
\includegraphics[width=0.49\textwidth]{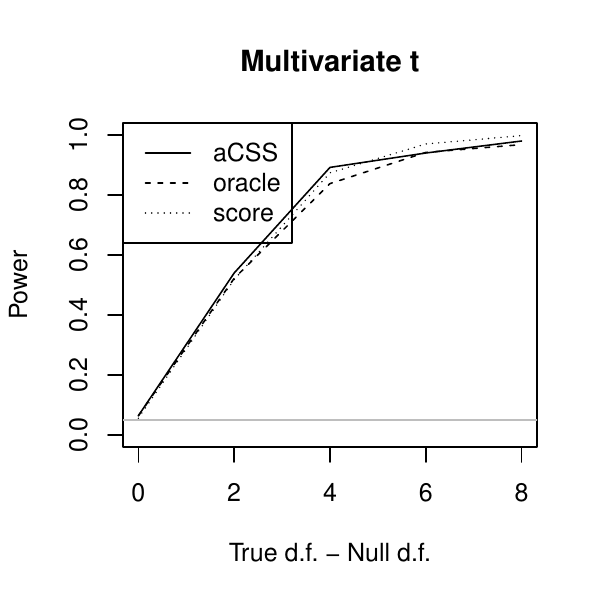}
\caption{Power of the aCSS test compared to an unconditional oracle that knows the (simple) null hypothesis, and compared also to the score test, for the two examples discussed in Section~\ref{sec:examples_sim_with_score}. The aCSS test controls the Type I error at the nominal 5\% level (dotted line) under the null (represented by 0 on the x-axis in each plot), and has very similar power to the oracle and score test under the alternatives. Each point represents 500 independent replications, with the maximum standard error $\approx 2\%$ and the standard error at the left edge of each plot (under the null) below $1\%$.}
\label{fig:examples_with_score}
\end{figure}

\paragraph{ Example~\ref{ex:behrens_fisher} (Behrens--Fisher)} For the Behrens--Fisher example,
the alternative model is as described in Section~\ref{sec:ex2} but with $(\mu^{(0)},\mu^{(1)},\gamma^{(0)}, \gamma^{(1)})$ unconstrained in $\R\times\R\times\R_+\times\R_+$.
\begin{itemize}
    \item To generate the data,
    we take $n^{(0)}=n^{(1)}=50$, $\mu^{(0)}=0$, $\gamma^{(0)}=1$, $\gamma^{(1)}=2$, and $\mu^{(1)} \in \{0,0.1,0.2,\dots,1\}$ (with $\mu^{(1)}=0$ corresponding to the case where the null hypothesis holds). 
    \item The test statistic $T$ (used both for aCSS and for the oracle)
 is given by the absolute difference in sample means between the two halves of the data. 
\item    aCSS is run with the hub-and-spoke sampler with parameters $\sigma^2 = 1$ and $M=500$. The oracle method is given all parameter values except for $\mu^{(1)}$, so that the null $\mu^{(1)}=0$ is simple.
    \end{itemize}
\paragraph{Example~\ref{ex:multivariate_t} (multivariate t)} For the multivariate t example, the alternative model is as described in Section~\ref{sec:ex4} but with the degrees-of-freedom parameter $\gamma$ unknown and unconstrained (aside from being positive). 

\begin{itemize}
    \item To generate the data, we
    let $n=100$, $\theta_0 =\left(\begin{array}{cc} 1 & -0.5\\ -0.5 & 2\end{array}\right)$, and $\gamma = 2$ be the assumed degrees of freedom under the null hypothesis (``d.f.${}_{\text{null}}$''). The distribution of the data is given by $t_{\text{d.f.}}(0,\theta_0^{-1})$, where the degrees of freedom ``d.f.'' is taken from $\{2,4,6,8,10\}$. Therefore d.f. $=2$ represents the case where the null is true, and $ \text{d.f.} - \text{d.f.${}_{\text{null}}$}$ measures the deviation from 
    the null hypothesis. 
    \item The test statistic $T$ (used both for aCSS and for the oracle) is chosen to be the same as for the score test.
   \item  aCSS is run with the hub-and-spoke sampler with parameters $\sigma^2 = 1$ and $M=100$. The oracle method is given all parameter values except for $\gamma$, so that the null $\gamma=2$ is simple.
\end{itemize}

 \subsubsection{Simulations without a parametric alternative}\label{sec:examples_sim_without_score}

We use Examples~\ref{ex:canonical_glm} and \ref{ex:gaussian_spatial} to demonstrate the power of aCSS testing under more complex alternative models for which no existing methods (including the score test) are suitable. The results, plotted in Figure~\ref{fig:examples_without_score}, show the aCSS tests have very similar power to the oracle.
 For the four examples, the settings of the simulation are as follows.  In each case, the choice of the proposal distribution for the MCMC sampler, and chain length $L$, are described in detail in Appendix~\ref{app:compute}.
 
 \begin{figure}\centering
\includegraphics[width=0.49\textwidth]{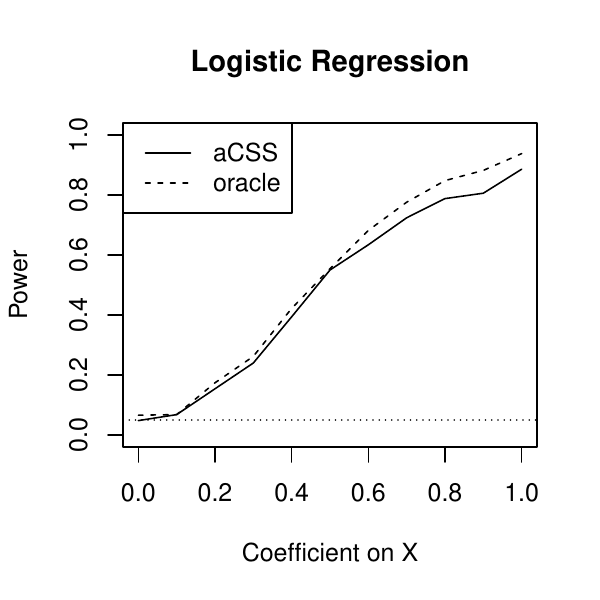}
\includegraphics[width=0.49\textwidth]{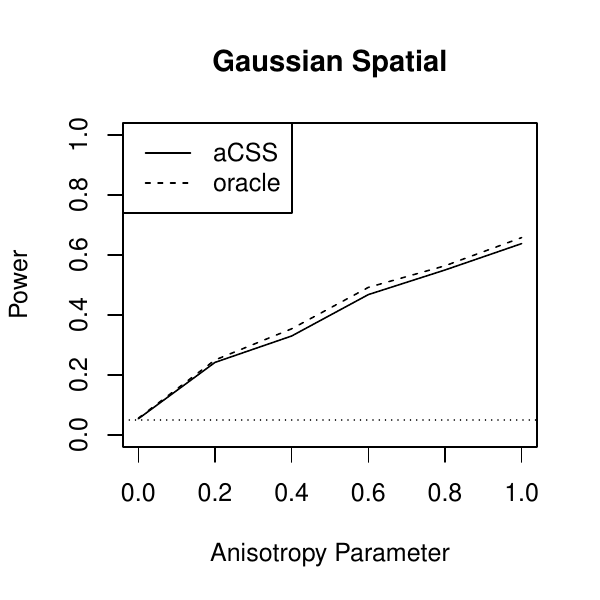}
\caption{Power of the aCSS test compared to an unconditional oracle that knows the (simple) null hypothesis, for the two examples discussed in Section~\ref{sec:examples_sim_with_score}. The aCSS test controls the Type I error at the nominal 5\% level (dotted line) under the null (represented by 0 on the x-axis in each plot), and has very similar power to the oracle and score test under the alternatives. Each point represents 500 independent replications, with the maximum standard error $\approx 2\%$ and the standard error at the left edge of each plot (under the null) below $1\%$.}
\label{fig:examples_without_score}
\end{figure}

\paragraph{Example~\ref{ex:canonical_glm} (logistic regression)} For the logistic regression example, we use aCSS to test a conditional independence hypothesis, so there is a response variable $Y$ that, under the alternative, changes the conditional distribution of $X\giv Z$ given in Section~\ref{sec:ex1}. $Y$ is drawn from a nonparametric model which is well approximated by a  single index model, but does not exactly follow this model.
\begin{itemize}
    \item To generate the data, we take
    $n=100$, and $X\giv Z$ follows 5-dimensional logistic regression with coefficient vector $\theta_0=0.2\cdot \bf{1}$. $Y$'s conditional distribution is given by: $Y\giv (Z,X=0) = f_0(g_0(Z) + \beta_0^\top Z) + \normal(0,1)$ and $Y\giv (Z,X=1) = f_1(g_1(Z) + \beta_1^\top Z) + \normal(0,1)$. We choose $f_0=f_1 = t\mapsto t+0.5t^3$, $g_0=g_1 = z\mapsto 0.5\sum_{j=1}^5 (z_j)_+$, $\beta_0 = c\cdot {\bf e}_1$, and $\beta_1 = c\cdot {\bf e}_5$, where $c\in\{0,0.1,0.2,\dots,1\}$ indicates the signal strength (with $c=0$ corresponding to the null hypothesis). The nonlinearity of $g_0$ and $g_1$ means that the single index model does not exactly describe the conditional distribution of $Y$.
    \item The test statistic $T$ (used both for aCSS and for the oracle) is computed by estimating the coefficient vector on $Z$ in a single index model via sliced inverse regression \citep{doi:10.1080/01621459.1991.10475035} separately on the data sets $\{(Y_i,Z_i) : X_i=0\}$ and $\{(Y_i,Z_i) : X_i=1\}$, respectively (though recall that the single index model does not strictly hold for either data set), and then computing the angle between these estimated coefficient vectors. 
    \item aCSS is run with the hub-and-spoke sampler with parameters $\sigma^2 = 10$ and $M=500$. To implement the oracle method in this example, the oracle is given the distribution of $X\giv Z$, i.e., the true coefficient vector $\theta_0$ for the logistic regression model; under the null hypothesis, $X\giv Z,Y$ follows the same distribution as $X\giv Z$, and thus the oracle is given full knowledge of the distribution of $X\giv Z,Y$ under the null.
\end{itemize}
\paragraph{Example~\ref{ex:gaussian_spatial} (Gaussian spatial)} For the Gaussian spatial process example, we take a 2-dimensional 10$\times$10 integer lattice $\{1,\dots,10\}^2$ for the spatial points.
\begin{itemize}
    \item The distribution of the data is as described in Example~\ref{ex:gaussian_spatial} with the exception that there exists a line $\mathcal{L}$ bisecting the lattice, and for two points $i$ and $j$ whose positions ($z_i$ and $z_j$, respectively) are on opposite sides of $\mathcal{L}$, instead of their covariance being given by $e^{-\theta_0\|z_i-z_j\|}$, it is instead given by $(1-c)e^{-\theta_0\|z_i-z_j\|}$. For instance, the data points could come from soil samples, and $\mathcal{L}$ might be a possible geological ridge reducing the dependence between points on either side of it. In our experiments, $\theta_0 = 0.2$, $\mathcal{L}$ is horizontal with intercept 5.5 so that 50 of the lattice points lie below it and the other 50 lie above it, and $c\in\{0,0.2,\dots,1\}$ is an anisotropy parameter, with $c=0$ indicating an isotropic spatial process so that the null hypothesis holds. 
    \item The test statistic $T$ (used both for aCSS and for the oracle) is computed as follows. We first compute a thresholded kernel matrix $\Delta\in\mathbb{R}^{n\times n}$ with entries $\Delta_{i,j} = e^{-|X_i-X_j|}\one{\|z_i-z_j\|=1}$ and then use $\Delta$ as the kernel matrix for spectral clustering with two clusters. Denoting the two clusters as $S$ and $S^c$, the value of $T$ is then computed as the normalized negative sum of kernel distances between the two groups:
\[ -\left( \frac{1}{\sum_{i\in S^c, j\in S\cup S^c} \Delta_{i,j}} + \frac{1}{\sum_{i\in S, j\in S\cup S^c} \Delta_{i,j}} \right)\sum_{i\in S, j\in S^c} \Delta_{i,j}. \]
    \item aCSS is run with the hub-and-spoke sampler with parameters $\sigma^2 = 1$ and $M=100$. The oracle method is given $\theta_0$, $\mathcal{L}$, and the functional form for $\Sigma$ in terms of $c$, so that the null $c=0$ is simple.
\end{itemize}

%%%%%%%%%%%%%%%%%%%%%%%%%%%%%%%%%%%%%%%%%%%%%%%%%%%%%%%%%%%
%% discussion
%%%%%%%%%%%%%%%%%%%%%%%%%%%%%%%%%%%%%%%%%%%%%%%%%%%%%%%%%%%

\section{Discussion}
Approximate co-sufficient sampling offers a new framework for 
inference on goodness-of-fit and related problems such as conditional independence testing and inference
on target parameters, under mild assumptions on a composite null model. In this section, we will first
revisit the construction of aCSS to develop a deeper intuition for the ideas behind the method,
and will then examine some open questions and directions that remain.

\subsection{The importance of conditioning: comparison to the parametric bootstrap}\label{sec:parboot}
Here we return to the construction of the aCSS method, with new insights obtained from the proof of our main result, Theorem~\ref{thm:main}.
In particular, why is it important to condition on $\thetah$ when we sample the copies?

In the construction of aCSS, after conditioning on $\thetah$, we sample copies $\Xt^{(m)}$ that are approximately exchangeable
with $X$ as long as it holds that $p_{\thetah}(\cdot\giv \thetah)\approx p_{\theta_0}(\cdot\giv \thetah)$. This is because, conditional on $\thetah$,
the copies are sampled from the density $p_{\thetah}(\cdot\giv \thetah)$, while the unknown true null density of $X\giv \thetah$ is instead $p_{\theta_0}(\cdot\giv \thetah)$;
we simply use $\thetah$ as a plug-in estimator of $\theta_0$ to define the distribution from which we sample the copies.
In our proofs, we saw
that aCSS leads to asymptotically valid tests as long as $\dtv(p_{\theta_0}(\cdot\giv \thetah),p_{\thetah}(\cdot\giv \thetah))$ is vanishing.

It is tempting to ask whether the same idea can be used without conditioning on $\thetah$.
That is, since the true data is distributed as $X\sim P_{\theta_0}$ under the null,
can we plug in $\thetah$ for $\theta_0$ and sample the copies $\Xt^{(m)}$ from $P_{\thetah}$? In fact, this non-conditional version of the procedure is simply recovering the parametric bootstrap---and, as we observed in Section~\ref{sec:intro}, the parametric bootstrap may result in inflated Type I error rates in certain settings, depending on the test statistic $T$ that we use.
This is because,
in general, it will not be the case that  $\dtv(P_{\theta_0},P_{\thetah})$ is vanishing, even for $\thetah$ chosen to be the MLE,
and therefore, if we define the copies $\Xt^{(1)},\dots,\Xt^{(M)}$ by sampling (unconditionally) from $P_{\thetah}$, rather than 
from the conditional distribution estimate $p_{\thetah}(\cdot\giv \thetah)$, it will generally be the case that, for some adversarially
chosen test statistic $T(X)$, we may have Type I error that exceeds the nominal level $\alpha$ by a nonvanishing amount.

\subsection{Can we condition on less information?}

More generally, what if we consider conditioning on a different statistic $S=S(X)$ (or a perturbed version $S=S(X,W)$), 
which contains strictly less information about the data $X$ than the (perturbed) MLE $\thetah$?
Of course, the above unconditional distribution is simply the extreme case of this idea, since it conditions
on no information at all. Can we choose $S$ so that it reveals less information about $X$ and thus yields potentially higher
power against the alternative, while still retaining  approximate validity of our test?
To run such a test, we would need to sample the copies
from the plug-in estimated distribution $P_{\thetah}(\cdot\giv S)$ rather than the true conditional null distribution $P_{\theta_0}(\cdot\giv S)$ of $X\giv S$,
and 
in order for the copies to be approximately exchangeable with $X$ under the null, 
we will need this plug-in estimate to be accurate, i.e., $P_{\thetah}(\cdot\giv S)\approx P_{\theta_0}(\cdot\giv S)$---in other words,
$S$ needs to be (approximately) sufficient. As discussed earlier 
in Section~\ref{sec:theory_asymptotic_view}, the perturbed MLE $\thetah$ is asymptotically sufficient under 
standard conditions; since $\thetah$ has the same dimension $d$ as the true parameter $\theta_0$, it is clear that
it is also (asymptotically) a {\em minimal} sufficient statistic.
Therefore, if we choose to condition on any other statistic $S$, if $S$ contains strictly less information about the data $X$ than $\thetah$,
the approximate validity of aCSS would no longer hold.

\subsection{Open questions}
Given our new framework for inference via approximate co-sufficient sampling, many open questions remain 
regarding the properties of this framework, and the settings in which it can be applied.
\begin{enumerate}
\item {\em Power.} How does the choice of statistic $T$ interact with the aCSS framework, to offer the best possible power?
In particular, might it be the case that the choice of $T$ that is most powerful under an aCSS test is not the same
as the $T$ that is most powerful
for an oracle test (with a known point null hypothesis, i.e., $\theta_0$ known)?
\item {\em Computation.} Are there particular algorithms
that enable efficient sampling of the copies $\Xt^{(m)}$, or are there statistics $T$
that allow us to calculate $T(\Xt^{(m)})$ without needing to fully observe $\Xt^{(m)}$---for example, through leveraging symmetries in the model
and the conditional distribution?
\item {\em Additional models.} In addition to the examples
described in this paper, can the aCSS framework be applied to similar problems such as non-canonical generalized linear
models, low-rank regression, or rank-based data? Moving to more challenging settings, does the aCSS
framework extend to latent variable models, errors-in-variables models, or models with missing data?
\item {\em Broader settings.} Can aCSS be applied in a nonparametric setting (perhaps with constraints on the 
statistics $T$ allowed)? Is aCSS robust to model misspecification?  
\item {\em Relaxing regularity conditions and extending to high dimensions.} Can aCSS be applied in settings where the null model is $d$-dimensional, 
but cannot be
represented as a convex and open subset of $\R^d$? For 
 instance, we may have sparsity constraints (with the parameter space given by all $s$-sparse vectors in $\R^p$) or rank constraints (with the parameter space consisting of all matrices with rank at most $r$ in $\R^{a\times b}$). It would appear that any extension of aCSS testing to high dimensions would require incorporating some such low-dimensional structure, in order to ensure the existence of a non-degenerate approximately sufficient statistic, as well as a consistent estimator $\thetah$.
\end{enumerate}

%%%%%%%%%%%%%%%%%%%%%%%%%%%%%%%%%%%%%%%%%%%%%%
%%  Appendix:                        
%%%%%%%%%%%%%%%%%%%%%%%%%%%%%%%%%%%%%%%%%%%%%%
\appendix
\section{Proofs of main results}

Before presenting the proofs of our theoretical results,
we first establish some notation that we will use throughout these proofs.
Let \[\Omega_{\textnormal{SSOSP}} = \left\{(x,w)\in\Xcal\times\R^d : \textnormal{$\thetah(x,w)$ is a SSOSP of $\Lcal(\theta;x,w)$}\right\},\]
and let
\[\Psi_{\textnormal{SSOSP}} = \left\{(x,\theta)\in\Xcal\times\Theta : x\in\Xcal_\theta\right\},\]
where $\Xcal_\theta$ is defined as in~\eqref{eqn:density_support}. The following lemma (proved in Appendix~\ref{app:proof_lem:bijection}) establishes a bijection between these sets:
\begin{lemma}\label{lem:bijection}
Under Assumption~\ref{asm:family},
the map 
\[\psi: (x,w)\mapsto (x,\thetah(x,w))\]
defines a bijection between $\Omega_{\textnormal{SSOSP}}$ and $\Psi_{\textnormal{SSOSP}}$, with inverse 
\[\psi^{-1}:(x,\theta)\mapsto \left(x,-\frac{\nabla_\theta\Lcal(\theta;x)}{\sigma}\right).\]
\end{lemma}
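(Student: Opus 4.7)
The plan is to verify directly that $\psi$ and the claimed inverse map are mutually inverse bijections between the two sets; the key mechanical input is the first-order stationarity condition, which, since the perturbation enters $\Lcal(\theta;x,w) = \Lcal(\theta;x) + \sigma w^\top \theta$ linearly in $\theta$, will pin down $w$ in terms of $x$ and $\thetah$.

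First I would check that $\psi$ maps $\Omega_{\textnormal{SSOSP}}$ into $\Psi_{\textnormal{SSOSP}}$. For $(x,w)\in\Omega_{\textnormal{SSOSP}}$ and $\theta := \thetah(x,w)$, the pair $(\theta,w)$ witnesses that $x\in\Xcal_\theta$ by the very definition~\eqref{eqn:density_support}, so $\psi(x,w)=(x,\theta)\in\Psi_{\textnormal{SSOSP}}$. Next I would derive the formula for the inverse: the FOSP condition for a SSOSP gives $\nabla_\theta\Lcal(\thetah;x,w)=0$, and since $\nabla_\theta\Lcal(\theta;x,w) = \nabla_\theta\Lcal(\theta;x)+\sigma w$, this forces $w = -\nabla_\theta\Lcal(\thetah;x)/\sigma$. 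Thus $\psi^{-1}(\psi(x,w)) = \psi^{-1}(x,\thetah(x,w)) = (x,w)$ for every $(x,w)\in\Omega_{\textnormal{SSOSP}}$, proving injectivity and showing that $\psi^{-1}$ acts as a left inverse.

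Finally I would show surjectivity and that $\psi^{-1}$ is a right inverse. Given $(x,\theta)\in\Psi_{\textnormal{SSOSP}}$, by definition of $\Xcal_\theta$ there exists some $w'\in\R^d$ with $\thetah(x,w')=\theta$ and with $\theta$ a SSOSP of $\Lcal(\cdot;x,w')$. The previous paragraph applied to $(x,w')\in\Omega_{\textnormal{SSOSP}}$ shows $w' = -\nabla_\theta\Lcal(\theta;x)/\sigma$, which is exactly $\psi^{-1}(x,\theta)$; in particular this $w'$ is unique. Hence $\psi^{-1}(x,\theta)\in\Omega_{\textnormal{SSOSP}}$ and $\psi(\psi^{-1}(x,\theta)) = (x,\thetah(x,w')) = (x,\theta)$, as required.

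There is no substantive obstacle here beyond bookkeeping, since Assumption~\ref{asm:family} ensures $\Lcal(\cdot;x)$ is continuously differentiable so that the gradient in the inverse formula is well-defined, and the SSOSP condition is precisely what lets us solve for $w$ uniquely from the first-order condition. The only minor subtlety is remembering that the SSOSP requirement is an essential part of the definition of $\Xcal_\theta$, so the witness $w$ in that definition is automatically the one produced by the FOSP relation, which is what guarantees the inverse is well-defined on all of $\Psi_{\textnormal{SSOSP}}$ rather than just on its image under $\psi$.
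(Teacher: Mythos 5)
Your proof is correct and follows the same approach as the paper's: both derive the inverse formula from the FOSP condition $\nabla_\theta\Lcal(\thetah;x,w)=0$, which pins down $w = -\nabla_\theta\Lcal(\thetah;x)/\sigma$, and both verify that $\Psi_{\textnormal{SSOSP}}$ is precisely the image of $\psi$ by unwinding the definitions of $\Omega_{\textnormal{SSOSP}}$ and $\Xcal_\theta$. The organization differs only cosmetically (you phrase it as checking left/right inverses, the paper as injectivity plus image characterization).
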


\subsection{Proof of Theorem~\ref{thm:main}}\label{app:proof_thm:main}
Define
 $P^*_{\theta_0}$ to be the distribution of $(X,W)\sim P_{\theta_0}\times\normal(0,\tfrac{1}{d}\ident_d)$ conditional on the event $(X,W)\in\Omega_{\textnormal{SSOSP}}$.
(If this event has probability 0 then the theorem holds trivially, so we can ignore this case.)
Consider the joint distribution
\[\textnormal{Distrib.\,(a):\quad\quad}\begin{cases}
(X,W)\sim P^*_{\theta_0},\\
\thetah = \thetah(X,W),\\
\Xt^{(1)},\dots,\Xt^{(M)}\giv X,\thetah \sim \Pt_M(\cdot;X,\thetah),\end{cases}\]
which is clearly equivalent to the aCSS procedure~\eqref{eqn:aCSS_alg} if we condition on the event $(X,W)\in\Omega_{\textnormal{SSOSP}}$.
On the other hand, on the event that $(X,W)\not\in\Omega_{\textnormal{SSOSP}}$, then by definition 
we set $\Xt^{(1)}=\dots=\Xt^{(M)}=X$, and so exchangeability can only be violated on the event $\Omega_{\textnormal{SSOSP}}$.
Therefore, we have
\begin{equation}\label{eqn:thm:main_dex_for_mixture}\dex(X,\Xt^{(1)},\dots,\Xt^{(M)})  \leq \dex\big(\textnormal{Distribution of $X,\Xt^{(1)},\dots,\Xt^{(M)}$ under Distrib.~(a)}\big).\end{equation}
(We formalize this intuition in Lemma~\ref{lem:dex_for_mixture} in Appendix~\ref{app:dexch_mixture_lem}.)

Next, let $Q^*_{\theta_0}$ be the marginal distribution of $\thetah(X,W)$ under $(X,W)\sim P^*_{\theta_0}$, and define
\[\textnormal{Distrib.\,(b):\quad\quad}\begin{cases}
\thetah\sim Q^*_{\theta_0},\\
X\giv\thetah\sim p_{\theta_0}(\cdot\giv \thetah),\\
\Xt^{(1)},\dots,\Xt^{(M)}\giv X,\thetah \sim \Pt_M(\cdot;X,\thetah).\end{cases}\]
where $p_{\theta_0}(\cdot\giv \thetah)$ is defined as in 
Lemma~\ref{lem:compute_conditional}. By definition of $Q^*_{\theta_0}$, together with Lemma~\ref{lem:compute_conditional},
we can see that the joint distribution of $(X,\Xt^{(1)},\dots,\Xt^{(M)})$
under Distrib.~(b), is equal to its joint distribution under Distrib.~(a), and therefore
\[\dex(X,\Xt^{(1)},\dots,\Xt^{(M)})  \leq \dex\big(\textnormal{Distribution of $X,\Xt^{(1)},\dots,\Xt^{(M)}$ under Distrib.~(b)}\big).\]
Finally, we define another distribution,
\[\textnormal{Distrib.\,(c):\quad\quad}\begin{cases}
\thetah\sim Q^*_{\theta_0},\\
X\giv\thetah\sim p_{\thetah}(\cdot\giv \thetah),\\
\Xt^{(1)},\dots,\Xt^{(M)}\giv X,\thetah \sim \Pt_M(\cdot;X,\thetah).\end{cases}\]
(As mentioned earlier, Lemma~\ref{lem:density_hat} in Appendix~\ref{app:check_density}  will verify that the density $ p_{\thetah}(\cdot\giv \thetah)$
exists almost surely over $\thetah$.) Since $\Pt_M(\cdot;X,\theta)$ was constructed to satisfy~\eqref{eqn:exch_some_theta}, it holds 
that under Distrib.~(c), the random variables $(X,\Xt^{(1)},\dots,\Xt^{(M)})$ are exchangeable (in fact, they are exchangeable conditional on $\thetah$).
Therefore, by definition of $\dex$, we have
\[\dex\big(\textnormal{Distribution of $X,\Xt^{(1)},\dots,\Xt^{(M)}$ under Distrib.~(b)}\big) \leq \dtv\big(\textnormal{Distrib.\,(b)},\textnormal{Distrib.\,(c)}\big),\]
and comparing the definitions of Distrib.~(b) and Distrib.~(c),  it is easy to verify that
\[\dtv(\textnormal{Distrib.\,(b)},\textnormal{Distrib.\,(c)}) = \Ewrt{Q^*_{\theta_0}}{\dtv\big(p_{\theta_0}(\cdot\giv \thetah),p_{\thetah}(\cdot\giv \thetah)\big)}.\]
Combining everything, we have shown that the aCSS procedure~\eqref{eqn:aCSS_alg} satisfies
\begin{equation}\label{eqn:relate_to_tv}\dex(X,\Xt^{(1)},\dots,\Xt^{(M)}) \leq   \Ewrt{Q^*_{\theta_0}}{\dtv\big(p_{\theta_0}(\cdot\giv \thetah),p_{\thetah}(\cdot\giv \thetah)\big)}.\end{equation}
We next need to bound this expected total variation.

We begin with the well known expression for total variation distance between two densities $g,h$, which is given by $\dtv(g,h) = \Ewrt{g}{\left(1 - \frac{h(X)}{g(X)}\right)_+}$. Therefore,
\begin{equation}\label{eqn:dtv_step1}\Ewrt{Q^*_{\theta_0}}{\dtv\big(p_{\theta_0}(\cdot\giv \thetah),p_{\thetah}(\cdot\giv \thetah)\big)}= \Ewrt{Q^*_{\theta_0}}{\Ewrt{p_{\theta_0}(\cdot\giv \thetah)}{\left(1 -\frac{p_{\thetah}(X\giv \thetah)}{p_{\theta_0}(X\giv \thetah)} \right)_+}}.\end{equation}
Recalling the definitions~\eqref{eqn:density_realX} and~\eqref{eqn:density_hat} (and noting in particular
that these two densities have the same support by definition), after calculating normalizing constants we can verify that
\begin{equation}\label{eqn:ratio_of_densities}\frac{p_{\thetah}(x\giv \thetah)}{p_{\theta_0}(x\giv \thetah)} =  \frac{\frac{f(x;\thetah)}{f(x;\theta_0)}}{\Ewrt{p_{\theta_0}(\cdot\giv \thetah)}{\frac{f(X;\thetah)}{f(X;\theta_0)}}}.\end{equation}
Next we take a Taylor series for the function $\theta\mapsto \log f(X;\theta)$. For any $x,\theta$ we can calculate
\[\log \left(\frac{f(x;\theta_0)}{f(x;\theta)}\right)= (\theta_0-\theta)^\top\nabla_\theta\log f(x;\theta)  + \int_{t=0}^1(1-t)\cdot (\theta_0-\theta)^\top \nabla^2_\theta\log f(x;\theta_t)(\theta_0-\theta)\;\mathsf{d}t,\]
where we write $\theta_t = (1-t)\theta_0+t\theta$. 
Therefore, 
 for any $x,x'$ we have
\begin{align*}
\frac{\frac{f(x';\theta)}{f(x';\theta_0)}}{\frac{f(x;\theta)}{f(x;\theta_0)}}
&=\exp\left\{\log \left(\frac{f(x;\theta_0)}{f(x;\theta)}\right) - \log \left(\frac{f(x';\theta_0)}{f(x';\theta)}\right)\right\}\\
&= \exp\bigg\{-(\theta_0-\theta)^\top\left(\nabla_\theta\log f(x';\theta) -\nabla_\theta\log f(x;\theta)\right)  \\
&\hspace{.5in}{}- \int_{t=0}^1(1-t)\cdot (\theta_0-\theta)^\top \left(\nabla^2_\theta\log f(x';\theta_t) - \nabla^2_\theta\log f(x;\theta_t)\right) (\theta_0-\theta)\;\mathsf{d}t\bigg\}\\
&= \exp\bigg\{(\theta_0-\theta)^\top\left(\nabla_\theta\Lcal(\theta;x')-\nabla_\theta\Lcal(\theta;x)\right)  \\
&\hspace{.5in}{}+ \int_{t=0}^1(1-t)\cdot (\theta_0-\theta)^\top \left(H(\theta_t;x') - H(\theta_t;x)\right) (\theta_0-\theta)\;\mathsf{d}t\bigg\}\\
&\leq \exp\bigg\{(\theta_0-\theta)^\top\left(\nabla_\theta\Lcal(\theta;x')-\nabla_\theta\Lcal(\theta;x)\right)  \\
&\hspace{.5in}{} + \frac{1}{2} \sup_{t\in[0,1]}  (\theta_0-\theta)^\top \left(H(\theta_t;x') - H(\theta_t;x)\right) (\theta_0-\theta)\bigg\},\end{align*}
where the  inequality  holds since $\int_{t=0}^1 (1-t)\cdot h(t)\;\mathsf{d}t \leq\int_{t=0}^1 (1-t)\;\mathsf{d}t\cdot \sup_{t\in[0,1]}h(t) = \frac{1}{2} \sup_{t\in[0,1]}h(t)$ for any function $h:\R\rightarrow\R$.
For any $\theta\in\ball(\theta_0,r(\theta_0))\cap\Theta$, it therefore holds that, for all $x,x'$,
\begin{multline*}
\frac{\frac{f(x';\theta)}{f(x';\theta_0)}}{\frac{f(x;\theta)}{f(x;\theta_0)}}
\leq \exp\bigg\{r(\theta_0) \left(\norm{\nabla_\theta\Lcal(\theta;x')}+\norm{\nabla_\theta\Lcal(\theta;x)}\right)  \\
{} + \frac{r(\theta_0)^2}{2} \sup_{\theta'\in \ball(\theta_0,r(\theta_0))\cap\Theta}  \lambda_{\max} \left(H(\theta';x') - H(\theta';x)\right)\bigg\}
\leq \exp\left\{\Delta_1(x,\theta) + \Delta'_1(x',\theta) \right\},\end{multline*}
where we define
\[\Delta_1(x,\theta) = r(\theta_0) \norm{\nabla_\theta\Lcal(\theta;x)} + \frac{r(\theta_0)^2}{2} \sup_{\theta'\in\ball(\theta_0,r(\theta_0))\cap\Theta} \big(\lambda_{\max}(H(\theta') - H(\theta';x))\big)_+,\]
and
\[\Delta'_1(x,\theta) = r(\theta_0) \norm{\nabla_\theta\Lcal(\theta;x)} + \frac{r(\theta_0)^2}{2} \sup_{\theta'\in\ball(\theta_0,r(\theta_0))\cap\Theta}  \big(\lambda_{\max}(H(\theta';x) - H(\theta'))\big)_+.\]
Applying this calculation with $x'=X$, we obtain
\[
 \frac{\frac{f(x;\theta)}{f(x;\theta_0)}}{\Ewrt{p_{\theta_0}(\cdot\giv \theta)}{\frac{f(X;\theta)}{f(X;\theta_0)}}}= \left(\Ewrt{p_{\theta_0}(\cdot\giv \theta)}{\frac{\frac{f(X;\theta)}{f(X;\theta_0)}}{\frac{f(x;\theta)}{f(x;\theta_0)}}}\right)^{-1}\\
\geq \frac{1}{\Ewrt{p_{\theta_0}(\cdot\giv \theta)}{e^{\Delta'_1(X,\theta)}}\cdot e^{\Delta_1(x,\theta)}}
\]
for all $x$ and for all $\theta\in\Theta$ such that $\norm{\theta-\theta_0}\leq r(\theta_0)$.
Returning to~\eqref{eqn:dtv_step1} and~\eqref{eqn:ratio_of_densities}  above, 
and defining $\Ecal_{\textnormal{ball}}$ to be the event that $\norm{\thetah-\theta_0}\leq r(\theta_0)$,
 we therefore have
\begin{align*}&\hspace{-.5in} \Ewrt{Q^*_{\theta_0}}{\dtv\big(p_{\theta_0}(\cdot\giv \thetah),p_{\thetah}(\cdot\giv \thetah)\big)}
=  \Ewrt{Q^*_{\theta_0}}{\Ewrt{p_{\theta_0}(\cdot\giv \thetah)}{\left(1 -\frac{p_{\thetah}(X\giv \thetah)}{p_{\theta_0}(X\giv \thetah)} \right)_+}}\\
&\leq   \prwrt{Q^*_{\theta_0}}{\Ecal_{\textnormal{ball}}^c}+ \Ewrt{Q^*_{\theta_0}}{\Ewrt{p_{\theta_0}(\cdot\giv \thetah)}{\one{\Ecal_{\textnormal{ball}}}\cdot \left(1 -\frac{p_{\thetah}(X\giv \thetah)}{p_{\theta_0}(X\giv \thetah)} \right)_+}}\\
&\leq \prwrt{Q^*_{\theta_0}}{\Ecal_{\textnormal{ball}}^c}+\Ewrt{Q^*_{\theta_0}}{\Ewrt{p_{\theta_0}(\cdot\giv \thetah)}{1 -\frac{1}{\Ewrt{p_{\theta_0}(\cdot\giv \thetah)}{e^{\Delta'_1(X,\thetah)}}\cdot e^{\Delta_1(X,\thetah)}}}} \\
&\leq \prwrt{Q^*_{\theta_0}}{\Ecal_{\textnormal{ball}}^c}+\Ewrt{Q^*_{\theta_0}}{\Ewrt{p_{\theta_0}(\cdot\giv \thetah)}{\Delta_1(X,\thetah)} + 1 -\frac{1}{\Ewrt{p_{\theta_0}(\cdot\giv \thetah)}{e^{\Delta'_1(X,\thetah)}}} },\end{align*}
where the last step holds since $1 - ab \leq (1-a)+(1-b)\leq \log(1/a) + (1-b)$ for any $a,b\in (0,1]$. (Note that, in
the next-to-last line,  the two random variables $X$ appearing in the denominator are different---they are sampled independently conditional on $\thetah$ from the distribution $p_{\theta_0}(\cdot\giv \thetah)$.)

Next, recall that by Lemma~\ref{lem:compute_conditional} together with the definition of $Q^*_{\theta_0}$, the joint distribution of $(X,\thetah)$
in this calculation above (i.e., $\thetah\sim Q^*_{\theta_0}$ and $X\giv\thetah \sim p_{\theta_0}(\cdot\giv \thetah)$), is equivalent to the joint distribution of $(X,\thetah(X,W))$ when $(X,W)\sim P^*_{\theta_0}$.
Therefore, our calculation above can be rewritten as follows (where we also apply Jensen's inequality to the last term):
\[\Ewrt{Q^*_{\theta_0}}{\dtv\big(p_{\theta_0}(\cdot\giv \thetah),p_{\thetah}(\cdot\giv \thetah)\big)}
\leq \prwrt{P^*_{\theta_0}}{\Ecal_{\textnormal{ball}}^c}+ \Ewrt{P^*_{\theta_0}}{\Delta_1(X,\thetah(X,W))} + \left(1 -\frac{1}{\Ewrt{P^*_{\theta_0}}{e^{\Delta'_1(X,\thetah(X,W))}}}\right).\]
Next let
\[\Delta_2(x,w) = r(\theta_0) \sigma\norm{w}  + \frac{r(\theta_0)^2}{2} \sup_{\theta\in\ball(\theta_0,r(\theta_0))\cap\Theta} \big(\lambda_{\max}(H(\theta) - H(\theta;x))\big)_+,\]
and
\[\Delta'_2(x,w) = r(\theta_0) \sigma\norm{w} + \frac{r(\theta_0)^2}{2} \sup_{\theta\in\ball(\theta_0,r(\theta_0))\cap\Theta} \big(\lambda_{\max}(H(\theta;x) - H(\theta))\big)_+,\]
and observe that $\Delta_1(x,\thetah(x,w))=\Delta_2(x,w)$ and $\Delta'_1(x,\thetah(x,w))=\Delta'_2(x,w)$
 for all $(x,w)\in\Omega_{\textnormal{SSOSP}}$, since $0 = \nabla_\theta\Lcal(\thetah(x,w);x,w) = \nabla_\theta\Lcal(\thetah(x,w);x) + \sigma w$
for all $(x,w)$ in this set by definition. Therefore, since $(X,W)\in\Omega_{\textnormal{SSOSP}}$ almost surely under $P^*_{\theta_0}$ by definition, we have
\[\Ewrt{Q^*_{\theta_0}}{\dtv\big(p_{\theta_0}(\cdot\giv \thetah),p_{\thetah}(\cdot\giv \thetah)\big)}
\leq \prwrt{P^*_{\theta_0}}{\Ecal_{\textnormal{ball}}^c}+ \Ewrt{P^*_{\theta_0}}{\Delta_2(X,W)} +  \left(1 -\frac{1}{\Ewrt{P^*_{\theta_0}}{e^{\Delta'_2(X,W)}}}\right).\]
Now let $\Ecal_{\textnormal{SSOSP}}$ be the event that $(X,W)\in\Omega_{\textnormal{SSOSP}}$. Recall that $P^*_{\theta_0}$
is the joint distribution of $(X,W)\sim P_{\theta_0}\times\normal(0,\tfrac{1}{d}\ident_d)$ conditional on $\Ecal_{\textnormal{SSOSP}}$.
Therefore, we can write this as follows where we now take all probabilities and expectations with respect to $(X,W)\sim P_{\theta_0}\times\normal(0,\tfrac{1}{d}\ident_d)$:
\begin{align*}&\hspace{-.25in}\Ewrt{Q^*_{\theta_0}}{\dtv\big(p_{\theta_0}(\cdot\giv \thetah),p_{\thetah}(\cdot\giv \thetah)\big)}\\
&\leq \prst{\Ecal_{\textnormal{ball}}^c}{\Ecal_{\textnormal{SSOSP}}}+ \Est{\Delta_2(X,W)}{\Ecal_{\textnormal{SSOSP}}} +
\left(1-\frac{1}{\Est{e^{\Delta'_2(X,W)}}{\Ecal_{\textnormal{SSOSP}}}}\right)\\
&\le \frac{ \pr{\Ecal_{\textnormal{ball}}^c\cap\Ecal_{\textnormal{SSOSP}}}+\E{\Delta_2(X,W)}}{\pr{\Ecal_{\textnormal{SSOSP}}}} +\left( 1-\frac{\pr{\Ecal_{\textnormal{SSOSP}}}}{\E{e^{\Delta'_2(X,W)}\cdot\one{\Ecal_{\textnormal{SSOSP}}}}}\right)\\
&\leq \frac{ \pr{\Ecal_{\textnormal{ball}}^c\cap\Ecal_{\textnormal{SSOSP}}}+\E{\Delta_2(X,W)}}{\pr{\Ecal_{\textnormal{SSOSP}}}} + \left(1-\frac{1-\pr{\Ecal_{\textnormal{SSOSP}}^c}}{\E{e^{\Delta'_2(X,W)}} - \pr{\Ecal_{\textnormal{SSOSP}}^c}}\right)\\
&\leq \frac{ \pr{\Ecal_{\textnormal{ball}}^c\cap\Ecal_{\textnormal{SSOSP}}}+\E{\Delta_2(X,W)} + \log\E{e^{\Delta'_2(X,W)}}}{\pr{\Ecal_{\textnormal{SSOSP}}}} ,\end{align*}
where the next-to-last step holds since $\Delta'_2(X,W)\geq 0$ by definition, and the last step holds since $1 - \frac{1-a}{b-a} \leq \frac{1-1/b}{1-a} \leq \frac{\log(b)}{1-a}$ for all $a\in[0,1)$ and  $b\ge 1$.
Finally, we apply our assumptions. By Assumption~\ref{asm:thetah},  we have
$\pr{\Ecal_{\textnormal{ball}}\cap\Ecal_{\textnormal{SSOSP}}}\geq 1 - \delta(\theta_0)$,
and so
\[\pr{\Ecal_{\textnormal{ball}}^c\cap\Ecal_{\textnormal{SSOSP}}} \leq \delta(\theta_0) - \pr{\Ecal_{\textnormal{SSOSP}}^c}.\]
Next,
\begin{multline*}
\E{\Delta_2(X,W)}
= \E{r(\theta_0) \sigma\norm{W}} + \E{\frac{r(\theta_0)^2}{2} \sup_{\theta\in\ball(\theta_0,r(\theta_0))\cap\Theta} \big(\lambda_{\max}(H(\theta) - H(\theta;X))\big)_+}\\
\leq  \frac{1}{2} \log\E{e^{2 r(\theta_0) \sigma \norm{W} }} + \frac{\eps(\theta_0)}{2},
\end{multline*}
where the last step holds by Jensen's inequality for the first term and by the bound~\eqref{eqn:asm:hessian_conc1} in Assumption~\ref{asm:hessian} for the second term.
And, by Cauchy--Schwarz,
\begin{multline*}
\log\E{e^{\Delta'_2(X,W)}}\leq \frac{1}{2} \log\E{e^{2 r(\theta_0) \sigma \norm{W} }} + \frac{1}{2}\log\E{e^{r(\theta_0)^2 \sup_{\theta\in\ball(\theta_0,r(\theta_0))\cap\Theta} \big(\lambda_{\max}(H(\theta;X) - H(\theta))\big)_+}}\\
\leq \frac{1}{2} \log\E{e^{2 r(\theta_0) \sigma \norm{W} }} + \frac{\eps(\theta_0)}{2},\end{multline*}
where the last step holds by the bound~\eqref{eqn:asm:hessian_conc2} in Assumption~\ref{asm:hessian}.
Finally, since $W\sim \normal(0,\tfrac{1}{d}\ident_d)$ we know that $\E{e^{t\norm{W}}}\leq e^{t+t^2/2d}$ for any $t>0$ (see, e.g., \cite[Theorem 5.5]{boucheron2013concentration}).
Therefore,
\[\log\E{e^{2r(\theta_0)\sigma \norm{W}}}
\leq 2\sigma\cdot r(\theta_0) + \frac{2\sigma^2\cdot r(\theta_0)^2}{d} \leq 3\sigma\cdot r(\theta_0),\]
where the last step holds since $d\geq 1$ and we can assume $2\sigma\cdot r(\theta_0) \leq 1$ (as otherwise, the result of the theorem holds trivially).
Combining everything, we have
\[\Ewrt{Q^*_{\theta_0}}{\dtv\big(p_{\theta_0}(\cdot\giv \thetah),p_{\thetah}(\cdot\giv \thetah)\big)}
\leq \frac{3\sigma\cdot r(\theta_0)+ \delta(\theta_0)+   \eps(\theta_0)  - \pr{\Ecal_{\textnormal{SSOSP}}^c}}{1 - \pr{\Ecal_{\textnormal{SSOSP}}^c}}.\]
Since total variation distance is bounded by 1, trivially we can relax this to
\[\Ewrt{Q^*_{\theta_0}}{\dtv\big(p_{\theta_0}(\cdot\giv \thetah),p_{\thetah}(\cdot\giv \thetah)\big)}
\leq 3\sigma\cdot r(\theta_0) + \delta(\theta_0) +   \eps(\theta_0) .\]
Returning to~\eqref{eqn:relate_to_tv}, we see that the aCSS procedure~\eqref{eqn:aCSS_alg} satisfies
\[\dex(X,\Xt^{(1)},\dots,\Xt^{(M)}) \leq 3\sigma\cdot r(\theta_0) + \delta(\theta_0) +   \eps(\theta_0) ,\]
as desired.

\subsection{Proof of Lemma~\ref{lem:compute_conditional}}\label{app:proof_lem:compute_conditional}
Consider the joint distribution $(X,W)\sim P_{\theta_0}\times\normal(0,\tfrac{1}{d}\ident_d)$ conditioned on the event that
$(X,W)\in\Omega_{\textnormal{SSOSP}}$, which is assumed to occur with positive probability. The joint density of $(X,W)$, after conditioning
on this event, is therefore proportional to the function
\begin{equation}\label{eqn:joint_density_XW}g_{\theta_0}(x,w) = f(x;\theta_0)\cdot\exp\left\{-\tfrac{d}{2}\norm{w}^2\right\}\cdot\one{(x,w)\in\Omega_{\textnormal{SSOSP}}},\end{equation}
with respect to the measure $\nu_\Xcal\times\leb$.
We will consider the induced joint distribution of $(X,\thetah(X,W))$, and will calculate its joint density.

Define $\psi$ and $\psi^{-1}$ as in Lemma~\ref{lem:bijection}. Fix any measurable subset $A\subseteq\Psi_{\textnormal{SSOSP}}$.
Then, writing $\psi^{-1}(A) = \{(x,w)\in \Omega_{\textnormal{SSOSP}}:\psi(x,w)\in A\}\subseteq \Omega_{\textnormal{SSOSP}}$,
\[\pr{(X,\thetah(X,W))\in A} = \pr{(X,W)\in \psi^{-1}(A)} \\=  \frac{\int_{\psi^{-1}(A)} g_{\theta_0}(x,w) \;\mathsf{d}\nu_\Xcal(x)\mathsf{d}w}{\int_{\Xcal\times\R^d}g_{\theta_0}(x',w') \;\mathsf{d}\nu_\Xcal(x')\mathsf{d}w'},\]
where the probability is taken with respect to
$(X,W)\sim P_{\theta_0}\times\normal(0,\tfrac{1}{d}\ident_d)$ conditioned on the event that
$(X,W)\in\Omega_{\textnormal{SSOSP}}$. (Note that, since $g_{\theta_0}$ is proportional to a density on $(X,W)$ with respect to $\nu_\Xcal\times\leb$,
this implies that the denominator $\int_{\Xcal\times\R^d}g_{\theta_0}(x',w') \;\mathsf{d}\nu_\Xcal(x')\mathsf{d}w'$ in the last expression above must be finite and positive.)

From this point on, the result essentially follows from a change-of-variables calculation, under the transformation $\theta = \thetah(x,w)$.
However, with our weak assumptions, we cannot assume standard conditions (such as, e.g., the support of $\thetah\giv X$ being an open subset of $\R^d$---it may even
be the case that this set does not contain any open subset), so we will need to be careful.
Fixing any $x\in\Xcal$, a change-of-variables calculation, proved formally in Appendix~\ref{app:lem:change_of_vars}  below,
establishes that
\begin{align}
\notag&\hspace{-.25in}\int_{\Theta} \exp\left\{-\tfrac{d}{2\sigma^2}\norm{\nabla_\theta\Lcal(\theta;x)}^2\right\}\cdot\det(\nabla_\theta^2\Lcal(\theta;x))\cdot\one{(x,\theta)\in A\cap\Psi_{\textnormal{SSOSP}}}\;\mathsf{d}\theta \\
\label{eqn:lem:change_of_vars}&=\sigma^d  \int_{\R^d}\exp\left\{-\tfrac{d}{2\sigma^2}\norm{\nabla_\theta\Lcal(\thetah(x,w);x)}^2\right\}\cdot\one{(x,\thetah(x,w))\in A}\cdot\one{(x,w)\in\Omega_{\textnormal{SSOSP}}}\;\mathsf{d}w\\
\notag&=\sigma^d  \int_{\R^d}\exp\left\{-\tfrac{d}{2}\norm{w}^2\right\}\cdot\one{(x,\thetah(x,w))\in A}\cdot\one{(x,w)\in\Omega_{\textnormal{SSOSP}}}\;\mathsf{d}w\\
\label{eqn:claim_for_joint_density}&=\sigma^d  \int_{\R^d}\exp\left\{-\tfrac{d}{2}\norm{w}^2\right\}\cdot\one{(x,w)\in \psi^{-1}(A)\cap \Omega_{\textnormal{SSOSP}}}\;\mathsf{d}w,
\end{align}
where the second step uses the fact that $w = -\frac{\nabla_\theta\Lcal(\thetah(x,w);x)}{\sigma}$ for any $(x,w)\in\Omega_{\textnormal{SSOSP}}$ by the SSOSP conditions, and the
 last step applies the definition of $\psi$ as in Lemma~\ref{lem:bijection}.
Now define the function
\[h_{\theta_0}(x,\theta) := \frac{ f(x;\theta_0)\exp\left\{-\tfrac{d}{2\sigma^2}\norm{\nabla_\theta\Lcal(\theta;x)}^2\right\}\cdot\det(\nabla_\theta^2\Lcal(\theta;x))\cdot\one{x\in\Xcal_\theta}}{\sigma^d\int_{\Xcal\times\R^d}g_{\theta_0}(x',w') \;\mathsf{d}\nu_\Xcal(x')\mathsf{d}w'} \]
on $(x,\theta)\in\Xcal\times\Theta$.
We then have
\begin{align*}
&\pr{(X,\thetah(X,W))\in A} 
=\frac{\int_{\psi^{-1}(A)} g_{\theta_0}(x,w) \;\mathsf{d}\nu_\Xcal(x)\mathsf{d}w}{\int_{\Xcal\times\R^d}g_{\theta_0}(x',w') \;\mathsf{d}\nu_\Xcal(x')\mathsf{d}w'}\\
&=\frac{\int_\Xcal \int_{\R^d}   f(x;\theta_0)\cdot\exp\left\{-\tfrac{d}{2}\norm{w}^2\right\} \cdot  \one{(x,w)\in \psi^{-1}(A)\cap \Omega_{\textnormal{SSOSP}}}\;\mathsf{d}w\;\mathsf{d}\nu_\Xcal(x)}{\int_{\Xcal\times\R^d}g_{\theta_0}(x',w') \;\mathsf{d}\nu_\Xcal(x')\mathsf{d}w'}\textnormal{\quad by~\eqref{eqn:joint_density_XW}}\\
&=\frac{\int_\Xcal   f(x;\theta_0)\int_{\Theta} \exp\left\{-\tfrac{d}{2\sigma^2}\norm{\nabla_\theta\Lcal(\theta;x)}^2\right\}\cdot\det(\nabla_\theta^2\Lcal(\theta;x))\cdot\one{(x,\theta)\in A\cap\Psi_{\textnormal{SSOSP}}}\;\mathsf{d}\theta\;\mathsf{d}\nu_\Xcal(x)}{\sigma^d\int_{\Xcal\times\R^d}g_{\theta_0}(x',w') \;\mathsf{d}\nu_\Xcal(x')\mathsf{d}w'}\textnormal{\quad by~\eqref{eqn:claim_for_joint_density}}\\
&=\int_A h_{\theta_0}(x,\theta)\;\mathsf{d}\nu_\Xcal(x)\;\mathsf{d}\theta,
\end{align*}
where the last step holds since $\one{x\in\Xcal_\theta}=\one{(x,\theta)\in \Psi_{\textnormal{SSOSP}}}$ for all $(x,\theta)$, by definition of $\Psi_{\textnormal{SSOSP}}$.
Therefore, this calculation establishes that, conditional on the event that $\thetah(X,W)$ is a SSOSP of $\Lcal(\theta;X,W)$, the joint distribution
of $(X,\thetah(X,W))$ has density $h_{\theta_0}(x,\theta)$ with respect to the base measure $\nu_\Xcal\times\leb$.

Finally, since $h_{\theta_0}(x,\theta)$ is the joint density of $(X,\thetah) = (X,\thetah(X,W))$, we therefore see that $X\giv\thetah$
has conditional density equal to
\[\frac{h_{\theta_0}(x,\thetah)}{\int_{x'} h_{\theta_0}(x',\thetah)\;\mathsf{d}\nu_\Xcal(x')} \propto  f(x;\theta_0)\exp\left\{-\tfrac{d}{2\sigma^2}\norm{\nabla_\theta\Lcal(\thetah;x)}^2\right\}\cdot\det(\nabla_\theta^2\Lcal(\thetah;x))\cdot\one{x\in\Xcal_{\thetah}},\]
which verifies the desired expression~\eqref{eqn:density_realX}.

\section{Additional proofs}\label{app:add_proofs}

\subsection{Proof of Lemma~\ref{lem:bijection}}\label{app:proof_lem:bijection}
First we check that $\psi$ is injective on $\Omega_{\textnormal{SSOSP}}$, which holds since for any $(x,\theta)$, if $\psi(x',w) = (x,\theta)$ then we must have $x=x'$ trivially
 and we must have $w=-\frac{\nabla_\theta\Lcal(\theta;x)}{\sigma}$ by definition of the SSOSP conditions.
This establishes that $\psi$ is injective and that the inverse function (on the image of $\psi$) is given by $\psi^{-1}(x,\theta) = 
 \left(x,-\frac{\nabla_\theta\Lcal(\theta;x)}{\sigma}\right)$ as claimed above.
 
Now we verify that $\Psi_{\textnormal{SSOSP}}$ is the image of $\psi$. Fix any $(x,\theta)\in\Xcal\times\Theta$.
First, suppose $(x,\theta)\in  \psi(\Omega_{\textnormal{SSOSP}})$, i.e., we have $\theta = \thetah(x,w)$
for some $w$ such that $(x,w)\in\Omega_{\textnormal{SSOSP}}$. Then 
by definition of $\Omega_{\textnormal{SSOSP}}$, $\theta$ is a SSOSP of $\Lcal(\theta;x,w)$, and so $x\in\Xcal_\theta$ 
and therefore $(x,\theta)\in \Psi_{\textnormal{SSOSP}}$. Conversely suppose that $(x,\theta)\in \Psi_{\textnormal{SSOSP}}$.
Then by definition, $x\in\Xcal_\theta$ and so there exists some $w$ such that $\theta = \thetah(x,w)$ and $\theta$ is a SSOSP of $\Lcal(\theta;x,w)$.
Therefore, for this choice of $w$, we have $(x,w)\in\Omega_{\textnormal{SSOSP}}$ and so $(x,\theta) = \psi(x,w) \in \psi(\Omega_{\textnormal{SSOSP}})$.

\subsection{Distance to exchangeability for mixture distributions}\label{app:dexch_mixture_lem}
In this section, we verify the claim~\eqref{eqn:thm:main_dex_for_mixture} that appears in the proof of Theorem~\ref{thm:main}.
Specifically, we need to show that the distance-to-exchangeability $\dex$ introduced in Definition~\ref{def:dex}
is convex on the space of distributions.
\begin{lemma}\label{lem:dex_for_mixture}
Consider any distributions $P_0,P_1$ on $(A_1,\dots,A_k)$, and any $c\in[0,1]$. Let $P = (1-c)\cdot P_0 + c\cdot P_1$ be the mixture distribution. 
Then 
\[\dex(P) \leq (1-c)\cdot \dex(P_0) + c\cdot \dex(P_1).\]
\end{lemma}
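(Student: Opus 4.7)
The plan is to exploit two simple facts: the set of exchangeable distributions is convex (a mixture of permutation-invariant distributions is permutation-invariant, hence exchangeable), and the total variation distance is convex in each argument (immediate from the $\sup$-over-events formulation, or from the triangle inequality via couplings).

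More concretely, I would proceed as follows. Fix any $\eta > 0$. By the definition of $\dex$ (an infimum), pick exchangeable distributions $Q_0$ and $Q_1$ on $(B_1,\dots,B_k)$ with
\[
\dtv(P_0, Q_0) \leq \dex(P_0) + \eta, \qquad \dtv(P_1, Q_1) \leq \dex(P_1) + \eta.
\]
Form the mixture $Q = (1-c)\, Q_0 + c\, Q_1$. Since each $Q_i$ is invariant under coordinate permutations, so is any convex combination, so $Q$ is exchangeable. Then by the joint convexity of total variation distance in its two arguments,
\[
\dtv(P, Q) \leq (1-c)\,\dtv(P_0, Q_0) + c\,\dtv(P_1, Q_1) \leq (1-c)\,\dex(P_0) + c\,\dex(P_1) + \eta.
\]
Since $Q$ is exchangeable, $\dex(P) \leq \dtv(P, Q)$, and taking $\eta \downarrow 0$ gives the desired bound.

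There is essentially no obstacle here; the two prerequisites (mixtures of exchangeable laws are exchangeable, and TV is jointly convex) are both one-line facts. The only mild subtlety is that the infimum defining $\dex$ may not be attained in general, which is why one argues with an arbitrary slack $\eta$ and passes to the limit. If one prefers to avoid even that, one can invoke standard compactness or lower-semicontinuity of $\dtv$ on the space of couplings, but the $\eta$-argument is cleaner and entirely sufficient.
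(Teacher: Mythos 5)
Your proof is correct and follows essentially the same route as the paper's: pick $\eta$-near-optimal exchangeable laws $Q_0,Q_1$, mix them, note the mixture is exchangeable, and bound $\dtv(P,Q)$ by convexity. The paper spells out the convexity of $\dtv$ via the sup-over-events formula rather than citing it, but the argument is identical.
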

With this lemma in place, we have
\begin{multline*}\dex(X,\Xt^{(1)},\dots,\Xt^{(M)})  \leq
\pr{(X,W)\in\Omega_{\textnormal{SSOSP}}} \cdot  \dex\left(\textnormal{\begin{tabular}{c}Distrib.~of $X,\Xt^{(1)},\dots,\Xt^{(M)}$ \\ 
condl.~on $(X,W)\in \Omega_{\textnormal{SSOSP}}$\end{tabular}}\right)\\
+\pr{(X,W)\not\in\Omega_{\textnormal{SSOSP}}} \cdot  \dex\left(\textnormal{\begin{tabular}{c}Distrib.~of $X,\Xt^{(1)},\dots,\Xt^{(M)}$ \\ 
condl.~on $(X,W)\not\in \Omega_{\textnormal{SSOSP}}$\end{tabular}}\right).\end{multline*}
Furthermore, we know that
\[\dex\left(\textnormal{\begin{tabular}{c}Distrib.~of $X,\Xt^{(1)},\dots,\Xt^{(M)}$ \\ 
condl.~on $(X,W)\not\in \Omega_{\textnormal{SSOSP}}$\end{tabular}}\right) = 0\]
since, on the event that $(X,W)\not\in \Omega_{\textnormal{SSOSP}}$, we set $\Xt^{(1)} = \dots = \Xt^{(M)} = X$ by definition of the method.
Therefore, 
the claim~\eqref{eqn:thm:main_dex_for_mixture} must hold.

\begin{proof}[Proof of Lemma~\ref{lem:dex_for_mixture}]
Fix any $\eps>0$. By definition of $\dex$, for each $\ell=0,1$ we can find some exchangeable distribution $Q_\ell$ on $(B_1,\dots,B_k)$ such that 
\[\dtv(P_\ell,Q_\ell) \leq \dex(P_\ell) + \eps.\]
Next define the mixture distribution
$Q = (1-c)\cdot Q_0 + c\cdot Q_1$.
Clearly $Q$ is exchangeable, inheriting this property from $Q_0$ and $Q_1$, and therefore
$\dex(P) \leq \dtv(P,Q)$. Furthermore, for any measurable subset $A$, we have
\begin{multline*}\big|P(A) - Q(A)\big| = \left|\big((1-c)\cdot P_0(A) + c\cdot P_1(A)\big) - \big((1-c)\cdot Q_0(A) + c\cdot Q_1(A)\big)\right|\\ \leq (1-c) \cdot |P_0(A)-Q_0(A)| + c\cdot |P_1(A) -Q_1(A)|\leq (1-c)\cdot \dtv(P_0,Q_0) + c\cdot \dtv(P_1,Q_1).\end{multline*}
This establishes that
$\dtv(P,Q) \leq (1-c)\cdot \dtv(P_0,Q_0) + c\cdot \dtv(P_1,Q_1)$, and therefore,
\[\dex(P) \leq (1-c) \cdot (\dex(P_0) +\eps) + c\cdot(\dex(P_1) + \eps).\] Since $\eps>0$ can be taken to be arbitrarily small, this proves the lemma.
\end{proof}

\subsection{Verifying that~\eqref{eqn:density_hat} defines a density}\label{app:check_density}
To ensure that our procedure is well defined, we need to check that
\[ p_{\thetah}(x\giv \thetah) \propto p^{\textnormal{un}}_{\thetah}(x)\]
defines a valid density with respect to $\nu_\Xcal$,
where the unnormalized function is given by
\[p^{\textnormal{un}}_\theta(x) := f(x;\theta)\cdot \exp\left\{ - \frac{\norm{\nabla_\theta\Lcal(\theta;x)}^2}{2\sigma^2/d}\right\}\cdot \det\left(\nabla^2_\theta \Lcal(\theta;x)\right)\cdot\one{x\in\Xcal_{\theta}}.\] The following lemma verifies all the necessary conditions:
\begin{lemma}\label{lem:density_hat} If Assumptions~\ref{asm:family} and~\ref{asm:hessian} hold, then
for all $\theta\in\Theta$ the function $x\mapsto p^{\textnormal{un}}_\theta(x)$ is nonnegative and integrable with respect to $\nu_\Xcal$.
Furthermore, if the event that $\thetah=\thetah(X,W)$ is a SSOSP of $\Lcal(\theta;X,W)$ has positive probability, then 
conditional on this event,
 \[\int_\Xcal p^{\textnormal{un}}_{\thetah}(x)\;\mathsf{d}\nu_\Xcal(x)>0.\]
holds almost surely.
\end{lemma}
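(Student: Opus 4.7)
My plan is to establish the three parts of the lemma in order: nonnegativity of $p^{\textnormal{un}}_\theta$ for every $\theta$, integrability for every $\theta$, and almost-sure positivity of the normalizing constant conditional on the SSOSP event. Nonnegativity is by inspection: $f(x;\theta)>0$ and the exponential factor are positive by Assumption~\ref{asm:family}, the indicator is nonnegative, and on $\Xcal_\theta$ the SSOSP condition forces $\nabla_\theta^2\Lcal(\theta;x)\succ 0$ (exactly the strict-convexity requirement applied to the $w$ witnessing membership in $\Xcal_\theta$), so the determinant is strictly positive on the support of the indicator.

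For integrability, fix $\theta\in\Theta$ and bound the exponential factor by $1$. On $\Xcal_\theta$, where $\nabla_\theta^2\Lcal(\theta;x)$ is PSD, we have $\det(\nabla_\theta^2\Lcal(\theta;x))\le \lambda_{\max}(\nabla_\theta^2\Lcal(\theta;x))^d$; Weyl's inequality combined with $a\le a_+$ gives
\[\lambda_{\max}(\nabla_\theta^2\Lcal(\theta;x))\le \bigl(\lambda_{\max}(H(\theta;x))\bigr)_++\bigl(\lambda_{\max}(\nabla^2\pen(\theta))\bigr)_+,\]
and $(a+b)^d\le 2^{d-1}(a^d+b^d)$ on nonnegatives. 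I would then apply Assumption~\ref{asm:hessian} specialized to $\theta_0:=\theta$: the bound~\eqref{eqn:asm:hessian_conc2} gives a finite exponential moment for $\bigl(\lambda_{\max}(H(\theta;X)-H(\theta))\bigr)_+$ under $X\sim P_\theta$, which via $e^t\ge t^k/k!$ yields finite polynomial moments of every order, and combining with the finite constant $\lambda_{\max}(H(\theta))$ controls $\bigl(\lambda_{\max}(H(\theta;X))\bigr)_+^d$. Together with the finite deterministic term $\bigl(\lambda_{\max}(\nabla^2\pen(\theta))\bigr)_+^d$, this bounds $\int p^{\textnormal{un}}_\theta\,d\nu_\Xcal$ by a finite quantity.

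For almost-sure positivity, I would invoke Lemma~\ref{lem:compute_conditional}: conditional on the SSOSP event under $(X,W)\sim P_{\theta_0}\times\normal(0,\tfrac{1}{d}\ident_d)$, the marginal of $\thetah$ admits a Lebesgue density on $\Theta$ proportional to
\[q_{\theta_0}(\theta)\propto\int_\Xcal f(x;\theta_0)\cdot\exp\bigl\{-\tfrac{d}{2\sigma^2}\norm{\nabla_\theta\Lcal(\theta;x)}^2\bigr\}\cdot\det(\nabla_\theta^2\Lcal(\theta;x))\cdot\one{x\in\Xcal_\theta}\,d\nu_\Xcal(x),\]
which differs from $\int p^{\textnormal{un}}_\theta\,d\nu_\Xcal$ only by swapping $f(x;\theta)$ for $f(x;\theta_0)$. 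Since both are strictly positive everywhere by Assumption~\ref{asm:family}, the zero sets of $q_{\theta_0}$ and of $\theta\mapsto\int p^{\textnormal{un}}_\theta\,d\nu_\Xcal$ coincide with $\{\theta:\nu_\Xcal(\Xcal_\theta)=0\}$, so the claim reduces to the fact that a density assigns zero probability to its own zero set. The main obstacle is the integrability step, where the subtlety is that Assumption~\ref{asm:hessian} is quantified over all $\theta_0\in\Theta$, allowing us to specialize to $\theta_0=\theta$ at the evaluation point and upgrade the exponential moment bound to the polynomial moments needed to dominate $\det(\nabla_\theta^2\Lcal(\theta;X))$.
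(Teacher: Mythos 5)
Your proof is correct and follows essentially the same path as the paper's: nonnegativity by inspection (strict positivity of $f$, the exponential factor, and the determinant on the SSOSP support), integrability by bounding the determinant by $\lambda_{\max}^d$, splitting the Hessian into $H(\theta;x)-H(\theta)$ plus remainder, and invoking Assumption~\ref{asm:hessian} specialized at $\theta_0=\theta$ (the paper keeps the bound in exponential form via $\lambda^d\le \frac{d!}{r^{2d}}e^{r^2\lambda}$, while you extract polynomial moments first, which is the same power-series identity). For positivity you route through the marginal density of $\thetah$ under $Q^*_{\theta_0}$ whereas the paper invokes the conditional density $p_{\theta_0}(\cdot\giv\thetah)$ from Lemma~\ref{lem:compute_conditional}, but both arguments rest on the same observation that multiplying the integrand by the everywhere-positive factor $f(x;\theta_0)/f(x;\thetah)$ cannot change the zero set, and a density is positive almost surely against itself.
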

\begin{proof}
First we check nonnegativity. For any $\theta$ and any $x$, $ f(x;\theta)> 0$ by Assumption~\ref{asm:family}. Furthermore, if $x\in\Xcal_\theta$ then $\nabla^2_\theta\Lcal(\theta;x)\succ 0$ and so $\det(\nabla^2_\theta\Lcal(\theta;x))>0$ by definition of the SSOSP conditions. This verifies that $p^{\textnormal{un}}_\theta(x)\geq 0$ for all $(x,\theta)$.
Next we check integrability. We have
\begin{align*}
&\int_\Xcal p^{\textnormal{un}}_{\thetah}(x)\;\mathsf{d}\nu_\Xcal(x)
\leq \int_\Xcal  f(x;\theta)\cdot \det\left(\nabla^2_\theta \Lcal(\theta;x)\right)\cdot \one{\nabla^2_\theta \Lcal(\theta;x) \succ 0}\;\mathsf{d}\nu_\Xcal(x)\\
&\leq \int_\Xcal  f(x;\theta)\cdot \big(\lambda_{\max}(\nabla^2_\theta \Lcal(\theta;x))\big)_+^d\;\mathsf{d}\nu_\Xcal(x)\\
&\leq \frac{d!}{r(\theta)^{2d}}\int_\Xcal  f(x;\theta)\cdot  \exp\left\{r(\theta)^2\big(\lambda_{\max}(\nabla^2_\theta \Lcal(\theta;x))\big)_+\right\}\;\mathsf{d}\nu_\Xcal(x)\\
&\leq \frac{d!}{r(\theta)^{2d}} \int_\Xcal  f(x;\theta)\cdot \exp\left\{ r(\theta)^2\big(\lambda_{\max}(H(\theta;x) - H(\theta)\big)_+ + r(\theta)^2\big(\lambda_{\max}(H(\theta) + \nabla^2_\theta \pen(\theta)\big)_+\right\}\;\mathsf{d}\nu_\Xcal(x)\\
&\leq \frac{d!}{r(\theta)^{2d}} \cdot e^{\eps(\theta)}\cdot \exp\left\{r(\theta)^2\big(\lambda_{\max}(H(\theta) + \nabla^2_\theta \pen(\theta)\big)_+\right\},
\end{align*}
where the last step holds by Assumption~\ref{asm:hessian}. This proves that $\int_\Xcal p^{\textnormal{un}}_{\thetah}(x)\;\mathsf{d}\nu_\Xcal(x)$ is finite.

Finally we check that $\int_\Xcal p^{\textnormal{un}}_{\thetah}(x)\;\mathsf{d}\nu_\Xcal(x)>0$ almost surely.
Since $f(x;\theta)>0$ for all $x,\theta$ by Assumption~\ref{asm:family}, it is equivalent to verify that
$\int_\Xcal \frac{f(x;\theta_0)}{f(x;\thetah)}p^{\textnormal{un}}_{\thetah}(x)\;\mathsf{d}\nu_\Xcal(x)>0$ almost surely.
Recalling from~\eqref{eqn:density_realX} that $p_{\theta_0}(x|\thetah)\propto  \frac{f(x;\theta_0)}{f(x;\thetah)}\cdot p^{\textnormal{un}}_{\thetah}(x)$ is the conditional
density of $X\giv \thetah$, this must be true.
\end{proof}

\subsection{Change of variables calculation}\label{app:lem:change_of_vars}
In this section, we verify the change-of-variables calculation needed in the proof of Lemma~\ref{lem:compute_conditional}.
Specifically, the step~\eqref{eqn:lem:change_of_vars} follows by applying the lemma below to the function
\[\rho(x,\theta) =  \exp\left\{-\tfrac{d}{2\sigma^2}\norm{\nabla_\theta\Lcal(\thetah(x,w);x)}^2\right\}\cdot\one{(x,\theta)\in A}.\]

\begin{lemma}\label{lem:change_of_vars}
Suppose Assumption~\ref{asm:family} holds.
For all nonnegative measurable functions $\rho:\Xcal\times\Theta \rightarrow\R$, it holds  for all $x\in\Xcal$ that
\[\int_{\Theta} \rho(x,\theta)\cdot\det(\nabla_\theta^2\Lcal(\theta;x))\cdot\one{(x,\theta)\in \Psi_{\textnormal{SSOSP}}}\;\mathsf{d}\theta \\=\sigma^d  \int_{\R^d}\rho(x,\thetah(x,w))\cdot\one{(x,w)\in\Omega_{\textnormal{SSOSP}}}\;\mathsf{d}w.\]
\end{lemma}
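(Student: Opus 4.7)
The plan is to reduce the identity to a one-variable (in $\theta$ versus $w$) change-of-variables calculation after fixing $x$. Fix $x \in \mathcal{X}$ and define the ``slices''
\[W_x = \{w \in \mathbb{R}^d : (x,w) \in \Omega_{\textnormal{SSOSP}}\}, \quad T_x = \{\theta \in \Theta : (x,\theta) \in \Psi_{\textnormal{SSOSP}}\}.\]
By Lemma~\ref{lem:bijection}, the map $\phi_x(\theta) := -\nabla_\theta \mathcal{L}(\theta;x)/\sigma$ is a bijection $T_x \to W_x$ whose inverse is $w \mapsto \hat{\theta}(x,w)$. The key point is that although $\hat{\theta}(x,\cdot)$ is only assumed measurable, its inverse $\phi_x$ extends to a $C^1$ function on the open set $\Theta$ by Assumption~\ref{asm:family}, with Jacobian $D\phi_x(\theta) = -\sigma^{-1}\nabla^2_\theta\mathcal{L}(\theta;x)$. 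On $T_x$ the SSOSP condition gives $\nabla^2_\theta\mathcal{L}(\theta;x) \succ 0$, so $D\phi_x(\theta)$ is invertible and $|\det D\phi_x(\theta)| = \sigma^{-d}\det(\nabla^2_\theta\mathcal{L}(\theta;x))$.

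Next I would invoke the injective form of the change-of-variables formula (the area formula applied to an injective $C^1$ map on a measurable subset of its open domain, with non-vanishing Jacobian): for any nonnegative measurable $g:\mathbb{R}^d \to \mathbb{R}$,
\[\int_{T_x} g(\phi_x(\theta))\,|\det D\phi_x(\theta)|\,\mathsf{d}\theta = \int_{W_x} g(w)\,\mathsf{d}w.\]
Applying this with $g(w) = \rho(x, \hat{\theta}(x,w))$ and using $g(\phi_x(\theta)) = \rho(x,\theta)$ for $\theta \in T_x$ (since $\hat{\theta}(x,\cdot)$ inverts $\phi_x$ on the relevant sets), I obtain
\[\int_{T_x} \rho(x,\theta)\cdot \sigma^{-d}\det(\nabla^2_\theta\mathcal{L}(\theta;x))\,\mathsf{d}\theta = \int_{W_x}\rho(x,\hat{\theta}(x,w))\,\mathsf{d}w.\]
Rewriting both integrals over all of $\Theta$ and $\mathbb{R}^d$ respectively via the indicators $\mathbf{1}\{(x,\theta)\in\Psi_{\textnormal{SSOSP}}\}$ and $\mathbf{1}\{(x,w)\in\Omega_{\textnormal{SSOSP}}\}$, and multiplying through by $\sigma^d$, yields the claim.

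The main obstacle is the measure-theoretic one: $T_x$ is not in general open (being a SSOSP is defined via the possibly discontinuous map $\hat{\theta}$, not just via local conditions on $\theta$), so the textbook change-of-variables statement for $C^1$ diffeomorphisms between open sets does not apply directly. One needs the stronger injective-map version of the formula, which only requires $\phi_x$ to be $C^1$ on an open neighborhood of $T_x$, injective on $T_x$, and to have non-vanishing Jacobian on $T_x$---all of which we have. Measurability of $W_x$ and $T_x$ is a minor side check inherited from measurability of $\hat{\theta}$ and continuity of $\nabla_\theta\mathcal{L}$ and $\nabla^2_\theta\mathcal{L}$ in $\theta$. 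Everything else is a routine Jacobian computation.
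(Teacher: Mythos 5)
Your proof is correct and reaches the same conclusion, but by invoking a stronger black-box theorem than the paper does. You correctly identify that the ``diffeomorphism between open sets'' form of change-of-variables does not apply (since $T_x$ need not be open), and you resolve this by appealing to the injective form of the change-of-variables formula (the area formula for an injective $C^1$ map restricted to a measurable subset of its open domain), a legitimate and standard result found in, e.g., Fremlin (Theorem 263D) or Evans--Gariepy's treatment of the area formula with a compact-exhaustion argument to pass from locally to globally Lipschitz. The application is clean: $\phi_x$ is globally $C^1$ on the open set $\Theta$, injective on $T_x$ by Lemma~\ref{lem:bijection}, and its Jacobian is $(-\sigma)^{-1}\nabla^2_\theta\Lcal(\theta;x)$ which is invertible with positive determinant on $T_x$ by the SSOSP condition, so $|\det D\phi_x| = \sigma^{-d}\det\nabla^2_\theta\Lcal(\theta;x)$ and the Jacobian bookkeeping is right.

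The paper's proof takes a more elementary route: it avoids the area formula entirely and instead builds up from the textbook change-of-variables theorem for diffeomorphisms between open sets. It does this via a two-layer limiting argument---first introducing nested open sets $A_{x,\lambda}$ and passing $\lambda\to 0$ by monotone convergence, then for each fixed $\lambda$ partitioning $\R^d$ into countably many cells of diameter $\le\lambda$ and proving that $\phi_x$ is an honest diffeomorphism on each cell intersected with $A_{x,\lambda}$ (the injectivity on each cell is a Taylor-expansion/strong-convexity argument that exploits the small diameter). The paper's approach is longer but more self-contained, requiring only the undergraduate change-of-variables theorem plus monotone convergence; yours is shorter and more conceptual but outsources the hard part to the injective/area-formula version of change of variables, which some readers would want stated precisely with a citation. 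Both are valid; your version could be made fully rigorous by pinning down the exact theorem statement you invoke (and noting, as you do in passing, the standard exhaustion-by-compacts step needed to pass from $C^1$ to Lipschitz if one is using the Lipschitz area formula).
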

\begin{proof}[Proof of Lemma~\ref{lem:change_of_vars}]
Define
\[A_x = \{\theta\in\Theta : \nabla^2\Lcal(\theta;x)\succ 0\},\]
which is an open set since $\Lcal(\theta;x)$ is continuously twice differentiable in $\theta$.
If this set is empty then the lemma is trivial (since the left- and right-hand side are both equal to zero),
so from this point on we will assume $A_x$ is nonempty. Let
\[B_x = \{w\in\R^d : \thetah(x,w)\in A_x\}.\]
 By definition, if $(x,\theta)\in \Psi_{\textnormal{SSOSP}}$ then we must have $\theta\in A_x$, and similarly if 
$(x,w)\in\Omega_{\textnormal{SSOSP}}$ then we must have $\thetah(x,w)\in A_x$ and so $w\in B_x$. Therefore, 
to prove the lemma, it is sufficient to show that
\begin{equation}\label{eqn:claim_for_joint_density_0}
 \int_{A_x} \rho(x,\theta)\cdot\det(\nabla_\theta^2\Lcal(\theta;x))\cdot\one{(x,\theta)\in \Psi_{\textnormal{SSOSP}}}\;\mathsf{d}\theta \\=\sigma^d  \int_{B_x}\rho(x,\thetah(x,w))\cdot\one{(x,w)\in\Omega_{\textnormal{SSOSP}}}\;\mathsf{d}w.\end{equation}

Next define nested sets 
\[A_{x,\lambda} = \{\theta\in A_x:\textnormal{$\ball(\theta,\lambda)\subseteq\Theta$ and $\nabla^2\Lcal(\theta';x)\succ 0$ for all $\theta'\in\ball(\theta,\lambda)$}\}\]
indexed by $\lambda>0$.
Since $\Theta$ is an open subset of $\R^d$, and $\Lcal(\theta;x)$ is continuously twice differentiable in $\theta$, 
we see that $A_x=\cup_{\lambda>0}A_{x,\lambda}$. 
Similarly we have $B_x = \cup_{\lambda>0} B_{x,\lambda}$ where
\[B_{x,\lambda} = \{w\in\R^d : \thetah(x,w)\in A_{x,\lambda}\}.\]
By the monotone
convergence theorem, this implies that
\[ \int_{A_x} \rho(x,\theta)\cdot\det(\nabla_\theta^2\Lcal(\theta;x))\cdot\one{(x,\theta)\in \Psi_{\textnormal{SSOSP}}}\;\mathsf{d}\theta
= \lim_{\lambda\rightarrow 0} \int_{A_{x,\lambda}} \rho(x,\theta)\cdot\det(\nabla_\theta^2\Lcal(\theta;x))\cdot\one{(x,\theta)\in \Psi_{\textnormal{SSOSP}}}\;\mathsf{d}\theta,\]
and similarly,
\[\int_{B_x}\rho(x,\thetah(x,w))\cdot\one{(x,w)\in\Omega_{\textnormal{SSOSP}}}\;\mathsf{d}w 
=\lim_{\lambda\rightarrow 0} \int_{B_{x,\lambda}}\rho(x,\thetah(x,w))\cdot\one{(x,w)\in\Omega_{\textnormal{SSOSP}}}\;\mathsf{d}w.\]
Therefore, to prove~\eqref{eqn:claim_for_joint_density_0}, it is sufficient to show that, for each $\lambda>0$,
\begin{equation}\label{eqn:claim_for_joint_density_1}
 \int_{A_{x,\lambda}} \rho(x,\theta)\cdot\det(\nabla_\theta^2\Lcal(\theta;x))\cdot\one{(x,\theta)\in \Psi_{\textnormal{SSOSP}}}\;\mathsf{d}\theta\\ = 
\sigma^d \int_{B_{x,\lambda}}\rho(x,\thetah(x,w))\cdot\one{(x,w)\in\Omega_{\textnormal{SSOSP}}}\;\mathsf{d}w.\end{equation}

From this point on we will treat $\lambda>0$ as fixed.
Let $S_1,S_2,\dots$ be a countable collection of disjoint open sets, each of diameter $\leq\lambda$, such that $\leb\big(\R^d\backslash(\cup_{k\geq 1}S_k)\big) =0$
(for example, we can partition $\R^d$ into countably many sufficiently small hypercubes). Then
\[ \int_{A_{x,\lambda}} \rho(x,\theta)\cdot\det(\nabla_\theta^2\Lcal(\theta;x))\cdot\one{(x,\theta)\in \Psi_{\textnormal{SSOSP}}}\;\mathsf{d}\theta 
= \sum_{k\geq 1} \int_{A_{x,\lambda,k}} \rho(x,\theta)\cdot\det(\nabla_\theta^2\Lcal(\theta;x))\cdot\one{(x,\theta)\in \Psi_{\textnormal{SSOSP}}}\;\mathsf{d}\theta,\]
where $A_{x,\lambda,k} = A_{x,\lambda}\cap S_k$, and similarly
\[ \int_{B_{x,\lambda}}\rho(x,\thetah(x,w))\cdot\one{(x,w)\in\Omega_{\textnormal{SSOSP}}}\;\mathsf{d}w
=\sum_{k\geq 1}  \int_{B_{x,\lambda,k}}\rho(x,\thetah(x,w))\cdot\one{(x,w)\in\Omega_{\textnormal{SSOSP}}}\;\mathsf{d}w,\]
where 
\[B_{x,\lambda,k} = \{w\in\R^d : \thetah(x,w)\in A_{x,\lambda,k}\}.\]
Therefore,  to prove~\eqref{eqn:claim_for_joint_density_1}, it is sufficient to show that, for each $\lambda>0$ and each $k\geq 1$,
\begin{multline}\label{eqn:claim_for_joint_density_2}
 \int_{A_{x,\lambda,k}} \rho(x,\theta)\cdot\det(\nabla_\theta^2\Lcal(\theta;x))\cdot\one{(x,\theta)\in \Psi_{\textnormal{SSOSP}}}\;\mathsf{d}\theta\\ = 
\sigma^d \int_{B_{x,\lambda,k}}\rho(x,\thetah(x,w))\cdot\one{(x,w)\in\Omega_{\textnormal{SSOSP}}}\;\mathsf{d}w.\end{multline}

From this point on we will treat both $\lambda>0$ and $k\geq 1$ as fixed, and will prove~\eqref{eqn:claim_for_joint_density_2}.
First, by definition of the SSOSP conditions, if $(x,\theta)\in \Psi_{\textnormal{SSOSP}}$ then we must have 
\[\theta = \thetah(x,\phi_x(\theta))\textnormal{\quad where \quad}\phi_x(\theta):= - \frac{\nabla_\theta \Lcal(\theta;x)}{\sigma},\]
and furthermore, $\nabla_\theta^2\Lcal(\theta;x)\succ 0$ and so $\det(\nabla_\theta^2\Lcal(\theta;x))>0$.
Therefore, we can calculate that
\[ \rho(x,\theta)\cdot\det(\nabla_\theta^2\Lcal(\theta;x)) =  \rho(x,\thetah(x,\phi_x(\theta)))\cdot \sigma^d |\det(\nabla_\theta\phi_x(\theta))|\]
for all $(x,\theta)\in\Psi_{\textnormal{SSOSP}}$, and so~\eqref{eqn:claim_for_joint_density_2} is equivalent to the claim that
\begin{multline}\label{eqn:claim_for_joint_density_3}
 \int_{A_{x,\lambda,k}} \rho(x,\thetah(x,\phi_x(\theta)))\cdot |\det(\nabla_\theta\phi_x(\theta))|\cdot\one{(x,\thetah(x,\phi_x(\theta)))\in \Psi_{\textnormal{SSOSP}}}\;\mathsf{d}\theta\\ = 
 \int_{B_{x,\lambda,k}}\rho(x,\thetah(x,w))\cdot\one{(x,w)\in \Omega_{\textnormal{SSOSP}}}\;\mathsf{d}w.\end{multline}

Next, we show that $\phi_x:A_{x,\lambda,k}\rightarrow \phi_x(A_{x,\lambda,k})$ is a diffeomorphism.
$\phi_x$ is clearly differentiable, and its derivative is invertible since $\nabla_\theta \phi_x(\theta) = (-\sigma)^{-d} \nabla^2_\theta\Lcal(\theta;x)$, and $ \nabla^2_\theta\Lcal(\theta;x)\succ 0$
on  $A_{x,\lambda,k}$ by definition. To check injectivity, if $\phi_x(\theta)=\phi_x(\theta')$ for some $\theta,\theta'\in A_{x,\lambda,k}$,
then by Taylor's theorem we must have $\nabla^2_\theta \Lcal((1-t)\theta + t\theta';x) \cdot(\theta'-\theta) = 0$ for some $t\in[0,1]$.
Since the diameter of $S_k$ (and therefore, of $A_{x,\lambda,k}$) is $\leq\lambda$,
we must have $\norm{\theta-\theta'}\leq \lambda$ and therefore $(1-t)\theta + t\theta'\in\ball(\theta,\lambda)$.
By definition of $A_{x,\lambda}$, this implies that $\nabla^2_\theta \Lcal((1-t)\theta + t\theta';x) \succ 0$, and we conclude that $\theta'-\theta =0$,
thus establishing injectivity. Therefore,  $\phi_x:A_{x,\lambda,k}\rightarrow \phi_x(A_{x,\lambda,k})$ is a diffeomorphism.
Since  $A_{x,\lambda,k}\subseteq \R^d$ is an open set, by the change-of-variables formula we therefore have
\begin{multline*}
 \int_{A_{x,\lambda,k}} \rho(x,\thetah(x,\phi_x(\theta)))\cdot |\det(\nabla_\theta\phi_x(\theta))|\cdot\one{(x,\thetah(x,\phi_x(\theta)))\in \Psi_{\textnormal{SSOSP}}}\;\mathsf{d}\theta\\ = 
 \int_{\phi_x(A_{x,\lambda,k})}\rho(x,\thetah(x,w))\cdot\one{(x,\thetah(x,w))\in \Psi_{\textnormal{SSOSP}}}\;\mathsf{d}w,\end{multline*}
 Therefore, to prove~\eqref{eqn:claim_for_joint_density_3}, we now only need to check that
\[\One{w\in\phi_x(A_{x,\lambda,k}), \ (x,\thetah(x,w))\in \Psi_{\textnormal{SSOSP}}} = \One{w\in B_{x,\lambda,k},\ (x,w)\in\Omega_{\textnormal{SSOSP}}}\]
for all $(x,w)$.
First suppose $w\in\phi_x(A_{x,\lambda,k})$ and $(x,\thetah(x,w))\in \Psi_{\textnormal{SSOSP}}$.
Then we have $w = \phi_x(\theta)$ for some $\theta\in A_{x,\lambda,k}$. By definition, this means $(x,\theta)\in\Psi_{\textnormal{SSOSP}}$,
and so we must have some $w'$ such that $\theta = \thetah(x,w')$ and $\theta$ is a SSOSP of $\Lcal(\theta;x,w')$.
By the SSOSP conditions, this implies that $0=\nabla_\theta\Lcal(\theta;x,w')$ and so $w'=\phi_x(\theta)$, and therefore $w=w'$. Therefore, $(x,w)\in\Omega_{\textnormal{SSOSP}}$, 
and $\thetah(x,w) \in A_{x,\lambda,k}$ which implies $w\in B_{x,\lambda,k}$.
Conversely, suppose that $w\in B_{x,\lambda,k}$ and $(x,w)\in\Omega_{\textnormal{SSOSP}}$.
Then by definition of $B_{x,\lambda,k}$, we have $\thetah(x,w)\in A_{x,\lambda,k}$. 
Furthermore, by the SSOSP conditions we must have $0=\nabla_\theta\Lcal(\thetah(x,w);x,w)$ and so $w=\phi_x(\thetah(x,w))$,
and therefore, $w\in \phi_x(A_{x,\lambda,k})$ and  $(x,\thetah(x,w))\in \Psi_{\textnormal{SSOSP}}$.
This completes the proof of~\eqref{eqn:claim_for_joint_density_3}, and therefore proves the lemma. 
\end{proof}

\section{Proofs for examples}\label{app:proofs_for_examples}

We now turn to establishing that  our examples all satisfy the assumptions needed for aCSS to control Type I error.
The regularity conditions (Assumption~\ref{asm:family}) hold by definition for all of our examples,
so we only need to verify 
the properties of the estimator $\thetah$ (Assumption~\ref{asm:thetah}) and the Hessian conditions (Assumption~\ref{asm:hessian}).

\subsection{Checking Assumption~\ref{asm:hessian}}
The Hessian conditions~\eqref{eqn:asm:hessian_conc1} and~\eqref{eqn:asm:hessian_conc2} are immediately implied 
by the stronger condition
\begin{equation}\label{eqn:asm:hessian_conc_strong}\Ewrt{\theta_0}{\exp\left\{\sup_{\theta\in\ball(\theta_0,r(\theta_0))\cap\Theta}r(\theta_0)^2\cdot\norm{H(\theta;X)-H(\theta)}\right\}} \leq e^{\eps(\theta_0)}.\end{equation}
We will check that this stronger condition holds for each of our examples.
Specifically, fixing $\theta_0\in\Theta$ we will prove that, for any $c>0$ we can find $c'>0$ such that
\begin{equation}\label{eqn:asm:hessian_conc_strong_for_examples}\Ewrt{\theta_0}{\exp\left\{\sup_{\theta\in\ball(\theta_0,c\sqrt{\frac{\log n}{n}})\cap\Theta}\tfrac{c^2\log n}{n}\cdot\norm{H(\theta;X)-H(\theta)}\right\}} \leq e^{c' \eps_n}\end{equation}
for all sufficiently large $n$, where $\eps_n$ is some vanishing term (specified below for each example)  that does not depend on our choice of $c$. Since later on we will verify that Assumption~\ref{asm:thetah} holds with $r(\theta_0)\asymp \sqrt{\frac{\log n}{n}}$,
this will be sufficient to verify that~\eqref{eqn:asm:hessian_conc_strong} holds.

\subsubsection{Checking Assumption~\ref{asm:hessian} for Example~\ref{ex:canonical_glm}}
For the canonical GLM setting (Example~\ref{ex:canonical_glm}), we can calculate
\[H(\theta;x) =  \sum_{i=1}^n Z_iZ_i^\top \cdot a''(Z_i^\top\theta),\]
which does not depend on $x$. Therefore, $H(\theta) = H(\theta;x)$ for all $x$, or in other words, $\norm{H(\theta;X) - H(\theta)}=0$ almost surely.
Therefore~\eqref{eqn:asm:hessian_conc_strong_for_examples} holds trivially with $\eps_n = 0$.

\subsubsection{Checking Assumption~\ref{asm:hessian} for Example~\ref{ex:behrens_fisher}}
For the Behrens--Fisher problem (Example~\ref{ex:behrens_fisher}), we can calculate
\[H(\theta;x) = \left(\begin{array}{ccc}\frac{n^{(0)}}{\gamma^{(0)}}+\frac{n^{(1)}}{\gamma^{(1)}} &\frac{  \sum_{i=1}^{n^{(0)}} (x^{(0)}_i - \mu)}{(\gamma^{(0)})^2}&\frac{  \sum_{i=1}^{n^{(1)}} (x^{(1)}_i - \mu)}{(\gamma^{(1)})^2}\\ 
\frac{  \sum_{i=1}^{n^{(0)}} (x^{(0)}_i - \mu)}{(\gamma^{(0)})^2}& -\frac{n^{(0)}}{2(\gamma^{(0)})^2} + \sum_{i=1}^{n^{(0)}}\frac{ (x^{(0)}_i-\mu)^2}{(\gamma^{(0)})^3} &0\\
\frac{  \sum_{i=1}^{n^{(1)}} (x^{(1)}_i - \mu)}{(\gamma^{(1)})^2}&0&-\frac{n^{(1)}}{2(\gamma^{(1)})^2}  + \sum_{i=1}^{n^{(1)}}\frac{ (x^{(1)}_i-\mu)^2}{(\gamma^{(1)})^3} \end{array}\right),\]
which we can rewrite in the form
\[H(\theta;x) = A(\theta) +  \sum_{k=0,1} \sum_{\ell = 1,2} \left(\sum_{i=1}^{n^{(k)}} (x_i^{(k)} - \mu)^\ell \right) \cdot A_{k,\ell}(\theta),\]
where $A(\theta)\in\R^{3\times 3}$ and each $A_{k,\ell}(\theta)\in\R^{3\times 3}$ are all continuous matrix-valued functions of $\theta$.
Therefore, we can calculate
\[H(\theta;x) - H(\theta) = \sum_{k=0,1} \left(\sum_{i=1}^{n^{(k)}} (x_i^{(k)} - \mu)  \right) \cdot A_{k,1}(\theta) +  \sum_{k=0,1} \left(\sum_{i=1}^{n^{(k)}} ((x_i^{(k)}- \mu)^2 - \gamma^{(k)})  \right) \cdot A_{k,2}(\theta),\]
and so
\[\norm{H(\theta;x) - H(\theta)} \leq \sum_{k=0,1} \left|\sum_{i=1}^{n^{(k)}} (x_i^{(k)} - \mu) \right| \cdot \norm{A_{k,1}(\theta)} + 
\sum_{k=0,1} \left|\sum_{i=1}^{n^{(k)}} ((x_i^{(k)}- \mu)^2 - \gamma^{(k)})  \right| \cdot \norm{A_{k,2}(\theta)}.\]
Now let $r>0$ be any constant so that $\ball(\theta_0,r)\subseteq\Theta$, and let \[c_r = \sup_{\theta\in\ball(\theta_0,r)}\max_{k=0,1}\max_{\ell=1,2}\norm{A_{k,\ell}(\theta)},\]
which is finite since the $A_{k,\ell}$'s are continuous functions of $\theta$. Then
\[\sup_{\theta \in \ball(\theta_0,r)\cap\Theta}\norm{H(\theta;x) - H(\theta)} \leq c_r \left(\sum_{k=0,1} \left|\sum_{i=1}^{n^{(k)}} (x_i^{(k)} - \mu) \right| + 
\sum_{k=0,1} \left|\sum_{i=1}^{n^{(k)}} ((x_i^{(k)}- \mu)^2 - \gamma^{(k)})  \right| \right).\]
By definition of the distribution of the data we see that the terms $(x_i^{(k)} - \mu) $ are independent and Gaussian,
while the terms $((x_i^{(k)}- \mu)^2 - \gamma^{(k)})$ are independent centered and scaled $\chi^2$ (and therefore subexponential). An elementary calculation then verifies
that
\[\Ewrt{\theta_0}{\exp\left\{\sup_{\theta \in \ball(\theta_0,r)\cap\Theta} t\cdot \norm{H(\theta;X) - H(\theta)}\right\}} \leq e^{c''t^2 n}\textnormal { for all $|t|\leq c'''$,}\]
where $c''$ is chosen to be sufficiently large and $c'''>0$ is chosen to be sufficiently small. Taking $t = \frac{c^2\log n}{n}$, and choosing $n$ sufficiently large so that $c\sqrt{\frac{\log n}{n}}\leq r$ and $t\leq c'''$,
we have established the desired bound~\eqref{eqn:asm:hessian_conc_strong_for_examples} with $\eps_n = \frac{\log^2 n}{n}$ and $c'$ chosen appropriately.

\subsubsection{Checking Assumption~\ref{asm:hessian} for Example~\ref{ex:gaussian_spatial}}
For the Gaussian spatial process (Example~\ref{ex:gaussian_spatial}),
we can calculate
\[H(\theta;x) = \frac{1}{2} x^\top \left(\frac{\partial^2}{\partial \theta^2}\Sigma_\theta^{-1}\right) x + \frac{1}{2}\frac{\partial^2}{\partial \theta^2}\log \det(\Sigma_\theta),\]
and therefore writing $\tilde{x} = \Sigma_{\theta_0}^{-1/2}x$, we have
\begin{align*}
&\norm{H(\theta;x) - H(\theta)}
 = \frac{1}{2}\left\langle xx^\top - \Sigma_{\theta_0}, {\frac{\partial^2}{\partial \theta^2}\Sigma_\theta^{-1}} \right\rangle\\
& = \frac{1}{2}\left\langle \tilde{x}\tilde{x}^\top - \ident_d, \Sigma_{\theta_0}^{1/2}\cdot{\frac{\partial^2}{\partial \theta^2}\Sigma_\theta^{-1}}\cdot\Sigma_{\theta_0}^{1/2} \right\rangle\\
&\leq \frac{1}{2}\left\langle \tilde{x}\tilde{x}^\top - \ident_d, \Sigma_{\theta_0}^{1/2}\cdot{\frac{\partial^2}{\partial \theta^2}\Sigma_{\theta_0}^{-1}}\cdot\Sigma_{\theta_0}^{1/2} \right\rangle
+ \bignorm{\tilde{x}\tilde{x}^\top - \ident_d}
\cdot\frac{1}{2}\norm{\Sigma_{\theta_0}}\Bignorm{\frac{\partial^2}{\partial \theta^2}\Sigma_\theta^{-1} - \frac{\partial^2}{\partial \theta^2}\Sigma_{\theta_0}^{-1}}\\
&\leq  \left\langle\tilde{x}\tilde{x}^\top - \ident_d, \frac{1}{2}\Sigma_{\theta_0}^{1/2} \cdot{\frac{\partial^2}{\partial \theta^2}\Sigma_{\theta_0}^{-1}}\cdot\Sigma_{\theta_0}^{1/2} \right\rangle + \bignorm{\tilde{x}\tilde{x}^\top - \ident_d}
\cdot\frac{1}{2}\norm{\Sigma_{\theta_0}}|\theta - \theta_0|\cdot\sup_{t\in[0,1]}\Bignorm{\frac{\partial^3}{\partial \theta^3}\Sigma_{(1-t)\theta_0+t\theta}^{-1}}.
\end{align*}
Therefore, taking $n$ sufficiently large so that $\ball(\theta_0,c\sqrt{\frac{\log n}{n}})\subseteq\Theta$,
\begin{multline*}
\sup_{\theta\in\ball(\theta_0,c\sqrt{\frac{\log n}{n}})}
\norm{H(\theta;x) - H(\theta)}\\
\leq  \left\langle\tilde{x}\tilde{x}^\top - \ident_d, \frac{1}{2}\Sigma_{\theta_0}^{1/2}\cdot{\frac{\partial^2}{\partial \theta^2}\Sigma_{\theta_0}^{-1}}\cdot\Sigma_{\theta_0}^{1/2} \right\rangle + \bignorm{\tilde{x}\tilde{x}^\top - \ident_d}
\cdot \frac{c}{2}\sqrt{\frac{\log n}{n}}\norm{\Sigma_{\theta_0}}\cdot\sup_{\theta\in\ball(\theta_0,c\sqrt{\frac{\log n}{n}})}\Bignorm{\frac{\partial^3}{\partial \theta^3}\Sigma_\theta^{-1}}.
\end{multline*}
By  \cite[Proposition D.7]{bachoc2014asymptotic}, 
the eigenvalues of $\Sigma_{\theta_0}^{1/2}\cdot{\frac{\partial^2}{\partial \theta^2}\Sigma_{\theta_0}^{-1}}\cdot\Sigma_{\theta_0}^{1/2} $ are bounded
above  by  a constant not depending on $n$, and furthermore $\norm{\Sigma_{\theta_0}}$
and (for sufficiently large $n$) $\sup_{\theta\in\ball(\theta_0,c\sqrt{\frac{\log n}{n}})}\bignorm{\frac{\partial^3}{\partial \theta^3}\Sigma_\theta^{-1}}$ are bounded
by  constants  not depending on $n$. Since $\tilde{x}\sim\normal(0,\ident_n)$, 
 standard tail bounds on the $\chi^2$ distribution
(e.g., \cite[Lemma 1]{laurent2000adaptive}) establish that 
\[\Ewrt{\theta_0}{\exp\left\{t \cdot \sup_{\theta\in\ball(\theta_0,c\sqrt{\frac{\log n}{n}})}
\norm{H(\theta;x) - H(\theta)}\right\} }\leq \exp\left\{c'' \cdot t^2 n + t\cdot \sqrt{\frac{\log n}{n}} \cdot n\right\}\textnormal{ for all $|t|\leq c'''$},\]
where $c''$ is chosen to be sufficiently large and $c'''>0$ is chosen to be sufficiently small. Taking $t = \frac{c^2\log n}{n}$, and choosing $n$ sufficiently large,
we have established the desired bound~\eqref{eqn:asm:hessian_conc_strong_for_examples} with $\eps_n \asymp \sqrt{\frac{\log^3 n}{n}}$
 and $c'$ chosen appropriately.

\subsubsection{Checking Assumption~\ref{asm:hessian} for Example~\ref{ex:multivariate_t}}
For the multivariate t distribution (Example~\ref{ex:multivariate_t}), we first note that since $\theta\in\R^{k\times k}$ is a matrix parameter, the Euclidean norm 
is given by the matrix Frobenius norm, $\fronorm{M} = \sqrt{\sum_{ij}M_{ij}^2}$.
To avoid confusion, when discussing Example~\ref{ex:multivariate_t} we will write $\opnorm{M}$ for the operator
norm on matrices (both for a $k\times k$ matrix, such as the parameter $\theta$ itself, or for a $k^2\times k^2$ linear operator from $\R^{k\times k}$
to $\R^{k\times k}$, such as the Hessian). 

We can first calculate the Hessian, which in this setting will be a linear operator mapping from $\R^{k\times k}$ to $\R^{k\times k}$.
We calculate $H(\theta;x)$ applied to any $A,B\in\R^{k\times k}$ as
\[\big[H(\theta;x)\big](A,B)=   \frac{n}{2}\left\langle \theta^{-1/2} A \theta^{-1/2} ,\theta^{-1/2} B \theta^{-1/2}  \right\rangle  - \frac{\gamma + k}{2} \sum_{i=1}^n \frac{(x_i^\top A x_i)\cdot (x_i^\top B x_i)}{(\gamma + x_i^\top\theta x_i)^2}.\]
For any $\theta$ and any $a\in(0,\tfrac{1}{2})$, if
$(1-a)\theta_0 \preceq \theta \preceq (1+a)\theta_0$, we can verify that
\[ (1-a)^2\cdot (\gamma + z^\top\theta z)^2 \leq (\gamma + z^\top\theta z)^2 \leq (1+a)^2 \cdot (\gamma + z^\top\theta z)^2\]
for all $z\in\R^k$, and therefore 
\[\left| \sum_{i=1}^n \frac{(x_i^\top A x_i)\cdot (x_i^\top B x_i)}{(\gamma + x_i^\top\theta x_i)^2} - \sum_{i=1}^n \frac{(x_i^\top A x_i)\cdot (x_i^\top B x_i)}{(\gamma + x_i^\top\theta_0 x_i)^2}\right| 
\leq n \cdot  \frac{2a+a^2}{1-2a}\cdot \lambda_{\min}(\theta_0)^{-2} \cdot\opnorm{A} \cdot\opnorm{B} .\]
for all $A,B$, where $\lambda_{\min}(\theta_0)>0$ is the minimum eigenvalue of $\theta_0$. Since $\opnorm{}\leq\fronorm{}$,
we have
\[\left| \sum_{i=1}^n \frac{(x_i^\top A x_i)\cdot (x_i^\top B x_i)}{(\gamma + x_i^\top\theta x_i)^2} - \sum_{i=1}^n \frac{(x_i^\top A x_i)\cdot (x_i^\top B x_i)}{(\gamma + x_i^\top\theta_0 x_i)^2}\right| 
\leq n \cdot  \frac{2a+a^2}{1-2a}\cdot \lambda_{\min}(\theta_0)^{-2}\]
for all $A,B$ with $\fronorm{A},\fronorm{B}\leq 1$.
This is sufficient to verify that
\[\sup_{\theta\in\ball(\theta_0,c\sqrt{\frac{\log n}{n}})} \opnorm{H(\theta;x) - H(\theta)}
\leq \opnorm{H(\theta_0;x) - H(\theta_0)} + \sqrt{n \log n} \cdot 3c\lambda_{\min}(\theta_0)^{-2}\]
for all sufficiently large $n$. Therefore, for sufficiently large $n$,
\begin{multline*}
\Ewrt{\theta_0}{\exp\left\{\sup_{\theta\in\ball(\theta_0,c\sqrt{\frac{\log n}{n}})\cap\Theta}\tfrac{c^2\log n}{n}\cdot\opnorm{H(\theta;X)-H(\theta)}\right\}}\\
 \leq \exp\left\{\tfrac{c^2\log n}{n} \cdot  \sqrt{n \log n} \cdot 3c\lambda_{\min}(\theta_0)^{-2}\right\} \cdot \Ewrt{\theta_0}{\exp\left\{\tfrac{c^2\log n}{n}\cdot\opnorm{H(\theta_0;X)-H(\theta_0)}\right\}}.
\end{multline*}
Next, $H(\theta_0;X)$ is equal to a constant plus a sum of $n$ i.i.d.~terms, with each term bounded uniformly, since
\[\left|\frac{\gamma+k}{2}  \frac{(X_i^\top A X_i)\cdot (X_i^\top B X_i)}{(\gamma + X_i^\top\theta_0 X_i)^2}\right| \leq \frac{\gamma+k}{2}\cdot \lambda_{\min}(\theta_0)^{-2}\]
holds for all $A,B$ with $\fronorm{A},\fronorm{B}\leq 1$,  almost surely over $X_i$.
Therefore, by the matrix Hoeffding inequality \citep[Theorem 1.3]{tropp2012user}, we have
\[\prwrt{\theta_0}{n^{-1/2}\opnorm{H(\theta_0;X) - H(\theta_0)} > t} \leq  2k^2\exp\left\{ - \frac{t^2}{8\cdot (\frac{\gamma+k}{2})^2\cdot \lambda_{\min}(\theta_0)^{-4}} \right\}\]
for any $t>0$. In other words, $n^{-1/2}\opnorm{H(\theta_0;X) - H(\theta_0)}$ is subgaussian with parameter not depending on $n$
and therefore
\[\Ewrt{\theta_0}{\exp\left\{\tfrac{c^2\log n}{n}\cdot\opnorm{H(\theta_0;X)-H(\theta_0)}\right\}} \leq \exp\left\{\frac{c''\log^2 n}{n}\right\}\]
for an appropriately chosen $c''$.

Combining everything, we have established that
the bound~\eqref{eqn:asm:hessian_conc_strong_for_examples} holds with $\eps_n \asymp \sqrt{\frac{\log^3 n}{n}}$
 and $c'$ chosen appropriately.

\subsection{Checking Assumption~\ref{asm:thetah}}
Before giving proofs for our specific examples, we pause to discuss  Assumption~\ref{asm:thetah} more generally,
 to see that this assumption will be plausible for many common settings (beyond the few that we study here).
We consider the following general scenario. Suppose that we have access to a consistent initial estimate $\thetahi(X)$ 
of $\theta_0$. Then under some standard conditions on the negative log-likelihood surface, 
by constraining $\thetah(X,W)$ to a neighborhood of $\thetahi(X)$,
we can ensure that $\thetah(X,W)$ will satisfy the needed assumptions.

\begin{lemma}\label{lem:theta_init}
Let
\[\thetahi:\Xcal\rightarrow\Theta\textnormal{\quad and\quad }
\widehat{r}_{\textnormal{init}}:\Xcal\rightarrow\R_+\]
be any maps such that $\ball(\thetahi(x),\widehat{r}_{\textnormal{init}}(x))\subseteq\Theta$ for all $x\in\Xcal$.
Suppose that, under the distribution $X\sim P_{\theta_0}$,
 the following statements all hold with probability at least $1-\delta_{\textnormal{init}}(\theta_0)$:
\begin{equation}\label{eqn:theta_init}\begin{cases}
\norm{\thetahi(X)-\theta_0}\leq r_{\textnormal{init}}(\theta_0),\\
\textnormal{$\Lcal(\theta;X)$ has a FOSP in $\ball(\theta_0, r_{\textnormal{init}}(\theta_0))$},\\
\textnormal{$\nabla_\theta^2\Lcal(\theta;X)\succeq \lambda_{\textnormal{cvx}}(\theta_0)\ident_d$ for all $\theta\in\ball(\theta_0,r_{\textnormal{cvx}}(\theta_0))$},\\
3r_{\textnormal{init}}(\theta_0) \leq \widehat{r}_{\textnormal{init}}(X) \leq  r_{\textnormal{cvx}}(\theta_0) - r_{\textnormal{init}}(\theta_0) ,
\end{cases}\end{equation}
for some constants $r_{\textnormal{init}}(\theta_0) ,r_{\textnormal{cvx}}(\theta_0),\lambda_{\textnormal{cvx}}(\theta_0)>0$.
If $\thetah:\Xcal\times\R^d\mapsto\Theta$ is any function that 
 maps each point $(x,w)$ to some FOSP of the constrained optimization problem
\[\argmin_{\theta\in \ball(\thetahi(x),r(\thetahi(x)))} \Lcal(\theta;x,w),\]
then Assumption~\ref{asm:thetah} is satisfied with 
\[r(\theta_0)=2r_{\textnormal{init}}(\theta_0)\textnormal{\ and \ }\delta(\theta_0)=\delta_{\textnormal{init}}(\theta_0) + \exp\left\{ -\frac{1}{2} \max\left\{\frac{r_{\textnormal{init}}(\theta_0)\lambda_{\textnormal{cvx}}(\theta_0)}{\sigma} - 1,0\right\}^2\right\}.\]
\end{lemma}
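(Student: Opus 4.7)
The plan is to work on a high-probability event on which both the guarantees in~\eqref{eqn:theta_init} and a tail bound on $\|W\|$ hold, and then to show that the perturbed constrained problem has a unique FOSP which coincides with a SSOSP of $\Lcal(\cdot;X,W)$ lying within $2r_{\textnormal{init}}(\theta_0)$ of $\theta_0$. Let $\Ecal_1$ denote the event in~\eqref{eqn:theta_init} and set $\Ecal_2=\{\sigma\|W\|\le r_{\textnormal{init}}(\theta_0)\lambda_{\textnormal{cvx}}(\theta_0)\}$. Since $w\mapsto\|w\|$ is $1$-Lipschitz and $\E{\|W\|}\le\sqrt{\E{\|W\|^2}}=1$, Gaussian concentration for Lipschitz functions of $W\sim\normal(0,\tfrac{1}{d}\ident_d)$ combined with $d\ge 1$ yields $\pr{\|W\|>a}\le e^{-(a-1)^2/2}$ for every $a\ge 1$, hence
\[
\pr{\Ecal_2^c}\;\le\;\exp\!\left\{-\tfrac{1}{2}\max\!\left\{\tfrac{r_{\textnormal{init}}(\theta_0)\lambda_{\textnormal{cvx}}(\theta_0)}{\sigma}-1,\;0\right\}^2\right\},
\]
and a union bound gives $\pr{\Ecal_1\cap\Ecal_2}\ge 1-\delta(\theta_0)$ as claimed.

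Working on $\Ecal_1\cap\Ecal_2$, let $\theta^\star$ be the unperturbed FOSP in $\ball(\theta_0,r_{\textnormal{init}}(\theta_0))$ guaranteed by~\eqref{eqn:theta_init}. Strong convexity of $\Lcal(\cdot;X)$ on $\ball(\theta_0,r_{\textnormal{cvx}}(\theta_0))$ promotes $\theta^\star$ to the unique minimizer on that ball, and $\Lcal(\cdot;X,W)$ inherits the same modulus $\lambda_{\textnormal{cvx}}(\theta_0)$ because $\sigma w^\top\theta$ is linear. Let $\theta^{\star\star}$ be the unique minimizer of $\Lcal(\cdot;X,W)$ on the compact ball $\overline{\ball(\theta_0,r_{\textnormal{cvx}}(\theta_0))}$. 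The suboptimality inequality $\Lcal(\theta^{\star\star};X,W)\le\Lcal(\theta^\star;X,W)$, combined with $\lambda_{\textnormal{cvx}}(\theta_0)$-strong convexity and $\nabla_\theta\Lcal(\theta^\star;X)=0$, yields $\|\theta^{\star\star}-\theta^\star\|\le 2\sigma\|W\|/\lambda_{\textnormal{cvx}}(\theta_0)\le 2r_{\textnormal{init}}(\theta_0)$, so $\|\theta^{\star\star}-\theta_0\|\le 3r_{\textnormal{init}}(\theta_0)<r_{\textnormal{cvx}}(\theta_0)$ (using $3r_{\textnormal{init}}\le\widehat{r}_{\textnormal{init}}\le r_{\textnormal{cvx}}-r_{\textnormal{init}}$, which forces $r_{\textnormal{cvx}}\ge 4r_{\textnormal{init}}$). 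Thus $\theta^{\star\star}$ is interior and $\nabla_\theta\Lcal(\theta^{\star\star};X,W)=0$, and the strong monotonicity of $\nabla_\theta\Lcal(\cdot;X)$ implied by strong convexity then sharpens the estimate to $\|\theta^{\star\star}-\theta^\star\|\le\sigma\|W\|/\lambda_{\textnormal{cvx}}(\theta_0)\le r_{\textnormal{init}}(\theta_0)$, hence $\|\theta^{\star\star}-\theta_0\|\le 2r_{\textnormal{init}}(\theta_0)=r(\theta_0)$.

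To conclude I would identify $\theta^{\star\star}$ with $\thetah(X,W)$. The triangle inequality gives $\|\theta^{\star\star}-\thetahi(X)\|\le\|\theta^{\star\star}-\theta_0\|+\|\theta_0-\thetahi(X)\|\le 3r_{\textnormal{init}}(\theta_0)\le\widehat{r}_{\textnormal{init}}(X)$, so $\theta^{\star\star}$ lies in the constraint ball $\ball(\thetahi(X),\widehat{r}_{\textnormal{init}}(X))$, and strong convexity of $\Lcal(\cdot;X,W)$ on this convex set makes the constrained FOSP unique; any FOSP-returning $\thetah(X,W)$ therefore equals $\theta^{\star\star}$. The Hessian satisfies $\nabla_\theta^2\Lcal(\theta^{\star\star};X,W)=\nabla_\theta^2\Lcal(\theta^{\star\star};X)\succeq\lambda_{\textnormal{cvx}}(\theta_0)\ident_d\succ 0$ by the third bullet of~\eqref{eqn:theta_init}, so $\thetah(X,W)$ is a SSOSP, establishing~\eqref{eqn:asm:thetah}. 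The main obstacle is the two-step bootstrap: the coarse suboptimality bound $2\sigma\|W\|/\lambda_{\textnormal{cvx}}$ is only enough to place $\theta^{\star\star}$ strictly inside $\ball(\theta_0,r_{\textnormal{cvx}})$, and one must then exploit the resulting unconstrained stationarity to upgrade to the sharper $\sigma\|W\|/\lambda_{\textnormal{cvx}}$ bound needed for the target radius $r(\theta_0)=2r_{\textnormal{init}}(\theta_0)$; the chain of triangle inequalities closes only because of the precise gap $3r_{\textnormal{init}}\le\widehat{r}_{\textnormal{init}}\le r_{\textnormal{cvx}}-r_{\textnormal{init}}$ imposed in~\eqref{eqn:theta_init}.
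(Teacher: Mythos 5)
Your proof is correct, and it reaches the same conclusion with the same bound on $\delta(\theta_0)$, but by a genuinely different route from the paper. The paper works directly with the observed $\thetah(X,W)$: it observes that $\theta_*$ (the unperturbed FOSP) lies in the constraint set $\ball(\thetahi(X),\widehat{r}_{\textnormal{init}}(X))$, so the variational-inequality form of the constrained first-order optimality condition, $(\theta_*-\thetah)^\top\nabla_\theta\Lcal(\thetah;X,W)\ge 0$, holds even if $\thetah$ sits on the boundary of the constraint set. Feeding this into strong monotonicity of $\nabla_\theta\Lcal(\cdot;X)$ on $\ball(\theta_0,r_{\textnormal{cvx}})$ gives $\|\thetah-\theta_*\|<r_{\textnormal{init}}$ in a single pass, and interiority (hence SSOSP status) follows. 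You instead introduce an auxiliary $\theta^{\star\star}$, the minimizer of the perturbed objective on the larger $\theta_0$-centered ball, and use the suboptimality/strong-convexity chain to get the coarse bound $\|\theta^{\star\star}-\theta^\star\|\le 2\sigma\|W\|/\lambda_{\textnormal{cvx}}$; this places $\theta^{\star\star}$ strictly inside $\ball(\theta_0,r_{\textnormal{cvx}})$, whence $\nabla_\theta\Lcal(\theta^{\star\star};X,W)=0$ and strong monotonicity upgrades the bound to $\sigma\|W\|/\lambda_{\textnormal{cvx}}$; only then do you identify $\theta^{\star\star}=\thetah(X,W)$ by uniqueness of the constrained minimizer. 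Your two-step bootstrap is exactly the price of working through $\nabla=0$ instead of the variational inequality; the paper's route avoids it entirely. Both are valid. One small point of hygiene: you take $\Ecal_2=\{\sigma\|W\|\le r_{\textnormal{init}}\lambda_{\textnormal{cvx}}\}$ with a non-strict inequality, which propagates to $\|\theta^{\star\star}-\thetahi(X)\|\le 3r_{\textnormal{init}}\le\widehat{r}_{\textnormal{init}}(X)$ rather than a strict inequality; to guarantee feasibility/interiority cleanly you should use the strict event $\{\sigma\|W\|<r_{\textnormal{init}}\lambda_{\textnormal{cvx}}\}$ as the paper does (it has the same probability since $\|W\|$ is continuous). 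Also, when you invoke uniqueness of the constrained FOSP, you should note explicitly that $\ball(\thetahi(X),\widehat{r}_{\textnormal{init}}(X))\subseteq\ball(\theta_0,r_{\textnormal{cvx}}(\theta_0))$, which follows from $\|\thetahi(X)-\theta_0\|\le r_{\textnormal{init}}$ and $\widehat{r}_{\textnormal{init}}\le r_{\textnormal{cvx}}-r_{\textnormal{init}}$, so that strong convexity indeed holds on the whole constraint set.
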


With this lemma in place, we will now turn to verifying that its conditions hold for each of our four examples.
Specifically, for each example, we will propose an initial estimator $\thetah_{\textnormal{init}}(X)$ such that the 
conditions of the lemma are satisfied with $r_{\textnormal{init}}(\theta_0)\asymp \sqrt{\frac{\log n}{n}}$ and $r_{\textnormal{cvx}}(\theta_0)\asymp 1$
and $\lambda_{\textnormal{cvx}}(\theta_0)\asymp n$.

\subsubsection{Checking the conditions of Lemma~\ref{lem:theta_init}: general recipe}
After fixing some $\theta_0\in\Theta$, each proof will follow the same general recipe:
\begin{itemize}
\item We will verify that 
\begin{equation}\label{eqn:recipe1}H(\theta_0)\succeq C_1 n \mathbf{I}_d,\end{equation}
where $C_1>0$ does not depend on $n$.
Combined with Assumption~\ref{asm:hessian} (which we verified above for each of our examples), this means that
for sufficiently large $n$ it holds that
$\nabla_\theta^2\Lcal(\theta;X) = H(\theta;X) \succeq  C_2n\ident_3$ for all $\theta\in\ball(\theta_0,C_3)$
for appropriately chosen $C_2,C_3>0$, with probability at least $1-n^{-1}$. Thus we can take $\lambda_{\textnormal{cvx}} = C_2$ and  $r_{\textnormal{cvx}} = C_3$.
\item We will define an initial estimator $\thetahi(x) $ and will prove that we can find a constant $C_4$ not depending on $n$ such that
\begin{equation}\label{eqn:recipe2}\prwrt{\theta_0}{\norm{\thetahi(X) - \theta_0} \leq C_4\sqrt{\frac{\log n}{n}}} \geq 1- n^{-1}\end{equation}
for all sufficiently large $n$. Thus we can take $r_{\textnormal{init}} = C_4\sqrt{\frac{\log n}{n}}$.
Furthermore, choosing $\widehat{r}_{\textnormal{init}}(x)$ to be any function of $n$ that vanishes slower than $\sqrt{\frac{\log n}{n}}$ (e.g., $\widehat{r}_{\textnormal{init}}(x)\equiv n^{-1/4}$),
we have verified that $3r_{\textnormal{init}}(\theta_0) \leq \widehat{r}_{\textnormal{init}}(X) \leq  r_{\textnormal{cvx}}(\theta_0) - r_{\textnormal{init}}(\theta_0)$ holds.
\item Finally we will show that we can find a constant $C_5$ not depending on $n$ such that
\begin{equation}\label{eqn:recipe3}\prwrt{\theta_0}{\norm{\nabla_\theta \log f(X;\theta_0)}\leq C_5\sqrt{n\log n}}\geq 1-n^{-1},\end{equation}
for all sufficiently large $n$. Combined with the bound $\nabla_\theta^2\Lcal(\theta;X) \succeq  C_2n\ident_3$ for all $\theta\in\ball(\theta_0,C_3)$
that is already established, this means that $\Lcal(\theta;X) = -\log f(X;\theta)$ has a FOSP in $\ball(\theta_0,C_5C_2^{-1}\sqrt{\frac{\log n}{n}})$
and so we can take $r_{\textnormal{init}}(\theta_0) = C_5C_2^{-1}\sqrt{\frac{\log n}{n}}$.
\end{itemize}

\subsubsection{Checking the conditions of Lemma~\ref{lem:theta_init} for Example~\ref{ex:canonical_glm}}
For the canonical GLM setting (Example~\ref{ex:canonical_glm}), first we have
\[H(\theta_0;x) = \sum_{i=1}^n Z_iZ_i^\top \cdot a''(Z_i^\top\theta_0) \succeq n \cdot C_1 \cdot \mathbf{I}\]
for some $C_1>0$ that does not depend on $n$, since we have assumed $\max_{ij}|Z_{ij}|$ is bounded by a constant
and $\frac{1}{n}\sum_i Z_iZ_i^\top\succeq \lambda_0 \ident_d$. Thus~\eqref{eqn:recipe1} holds.
Next we verify~\eqref{eqn:recipe2}. Since the negative log-likelihood is strictly convex everywhere,
we can define $\thetahi(x)$ to equal a global minimizer of $-\log f(x;\theta)$, if one exists (i.e., finding a global minimizer is computationally feasible
since it is a differentiable and strictly convex minimization problem). 
Therefore, if a FOSP exists in a $\bigo\Big(\sqrt{\frac{\log n}{n}}\Big)$ neighborhood of $\theta_0$ (as we will establish next), then~\eqref{eqn:recipe2} is satisfied.
Finally we check~\eqref{eqn:recipe3} to verify the existence of the FOSP. We calculate
\[\nabla_\theta [-\log f(x;\theta_0)] = \sum_{i=1}^n Z_i \left(a'(Z_i^\top\theta_0) - x_i\right),\]
and by standard calculations for GLMs, $X$ is subexponential with
\[\Ewrt{\theta_0}{e^{tX_i}} = e^{a(Z_i^\top\theta_0 + t) - a(Z_i^\top\theta_0)}\]
for any $t\in\R$ and for each $i=1,\dots,n$.
Since we have assumed $\max_{ij}|Z_{ij}|$ is bounded by a constant, proving~\eqref{eqn:recipe3} is a standard calculation.

\subsubsection{Checking the conditions of Lemma~\ref{lem:theta_init} for Example~\ref{ex:behrens_fisher}}
For the Behrens--Fisher problem (Example~\ref{ex:behrens_fisher}), write $\theta_0 = (\mu_0,\gamma^{(0)}_0,\gamma^{(1)}_0)$.
We first calculate
\[H(\theta_0) = \Ewrt{\theta_0}{H(\theta_0;X)} =\left(\begin{array}{ccc}\frac{n^{(0)}}{\gamma^{(0)}_0}+\frac{n^{(1)}}{\gamma^{(1)}}_0 &0&0\\
0& \frac{n^{(0)}}{2(\gamma^{(0)}_0)^2} &0\\
0&0&\frac{n^{(1)}}{2(\gamma^{(1)}_0)^2}  \end{array}\right)\succeq  cn \cdot \frac{\min\{n^{(0)},n^{(1)}\}}{\max\{n^{(0)},n^{(1)}\}} \cdot\ident_3,\]
where the  inequality holds for some $c>0$ depending only on $\theta_0$. Recalling that we have assumed $\frac{\max\{n^{(0)},n^{(1)}\}}{\min\{n^{(0)},n^{(1)}\}}$ 
is bounded by a constant, this means that
\[H(\theta_0) \succeq c' n \ident_3\]
for some $c'>0$ that does not depend on $n$, which verifies~\eqref{eqn:recipe1}.

Next we define an initial estimator
\[\thetahi(x) = \big(\widehat\mu_{\textnormal{init}}(x), \widehat\gamma^{(0)}_{\textnormal{init}}(x), \widehat\gamma^{(1)}_{\textnormal{init}}(x)\big)\]
where
\[\widehat\mu_{\textnormal{init}}(x) = \frac{1}{n}\left(\sum_{i=1}^{n^{(0)}}x^{(0)}_i + \sum_{i=1}^{n^{(1)}}x^{(1)}_i\right)\]
for $n=n^{(0)}+n^{(1)}$,
and
\[\widehat\gamma^{(k)}_{\textnormal{init}}(x) =  \frac{1}{n^{(k)}}\sum_{i=1}^{n^{(k)}}(x^{(k)}_i -\widehat\mu_{\textnormal{init}}(x))^2\]
for each $k=0,1$. By standard Gaussian and $\chi^2$ tail bounds, we can easily see that for sufficiently large $c''$ (not depending on $n$) it holds that
\[\prwrt{\theta_0}{\norm{\thetahi(X) - \theta_0} \leq c''\sqrt{\frac{\log n}{n}}} \geq 1- n^{-1}\]
for sufficiently large $n$, which verifies~\eqref{eqn:recipe2}.

Finally, 
we calculate
\[\nabla \log f(x;\theta) = - \left(\begin{array}{c}-\sum_{k=0,1} \frac{  \sum_{i=1}^{n^{(k)}} (x^{(k)}_i - \mu)}{\gamma^{(k)}} \\ 
 \frac{n^{(0)}}{2\gamma^{(0)}_0} - \sum_{i=1}^{n^{(0)}}\frac{ (x^{(0)}_i-\mu_0)^2}{2(\gamma^{(0)}_0)^2} \\
\frac{n^{(1)}}{2\gamma^{(1)}_0}  - \sum_{i=1}^{n^{(1)}}\frac{ (x^{(1)}_i-\mu_0)^2}{2(\gamma^{(1)}_0)^2} \end{array}\right),\]
and therefore each entry of $\nabla \log f(X;\theta_0)$ is a sum of $n$ or $n^{(0)}$ or $n^{(1)}$ many i.i.d.~zero-mean subexponential terms.
Therefore,  we can find a constant $c'''$ such that
\[\prwrt{\theta_0}{\norm{\nabla \log f(X;\theta_0) } \leq c'''\sqrt{n\log n}}\geq 1-n^{-1}\]
for sufficiently large $n$, which verifies~\eqref{eqn:recipe3} and thus completes the proof.

\subsubsection{Checking the conditions of Lemma~\ref{lem:theta_init} for Example~\ref{ex:gaussian_spatial}}
For the Gaussian spatial process (Example~\ref{ex:gaussian_spatial}),
first, recall our calculation
\[H(\theta;x) = \frac{1}{2} x^\top \left(\frac{\partial^2}{\partial \theta^2}\Sigma_{\theta_0}^{-1}\right) x + \frac{1}{2}\frac{\partial^2}{\partial \theta^2}\log \det(\Sigma_{\theta_0}),\]
which we can calculate explicitly as
\begin{multline*}H(\theta;X)
=  \frac{1}{2}x^\top\left( - \Sigma_\theta^{-1}(D\circ D\circ \Sigma_\theta)\Sigma_\theta^{-1} + 2\Sigma_\theta^{-1}(D\circ \Sigma_\theta)\Sigma_\theta^{-1}(D\circ \Sigma_\theta)\Sigma_\theta^{-1}  \right)x \\
 + \frac{1}{2}\textnormal{trace}(\Sigma_\theta^{-1/2}(D\circ D\circ \Sigma_\theta) \Sigma_\theta^{-1/2})
 -\frac{1}{2}\norm{\Sigma_\theta^{-1/2}(D\circ\Sigma_\theta)\Sigma_\theta^{-1/2}}^2,
\end{multline*}
and so since $\Ewrt{\theta_0}{XX^\top}=\Sigma_{\theta_0}$, we have
\[H(\theta_0) = \Ewrt{\theta_0}{H(\theta_0;X)} = 
\frac{1}{2}\norm{\Sigma_{\theta_0}^{-1/2}(D\circ\Sigma_{\theta_0})\Sigma_{\theta_0}^{-1/2}}^2
\geq \frac{1}{2}\lambda_{\min}(\Sigma_{\theta_0})^{-2}\norm{D\circ\Sigma_{\theta_0}}^2.\]
We know from \cite[Proposition D.7]{bachoc2014asymptotic} that $\Sigma_{\theta_0}$ has eigenvalues bounded above and below by positive constants.
Furthermore,
\[\norm{D\circ\Sigma_{\theta_0}}^2 = \sum_{i=1}^n\sum_{j=1}^n D_{ij}^2\cdot(\Sigma_{\theta_0})_{ij}^2 \geq \sum_{(i,j)\in E} D_{ij}^2\cdot(\Sigma_{\theta_0})_{ij}^2 =\sum_{(i,j)\in E} 1 \cdot e^{-2\theta_0} = e^{-2\theta_0}\cdot|E|,\]
where  $E\subseteq \{1,\dots,n\}\times\{1,\dots,n\}$ be the set of all pairs $(i,j)$ such that $D_{ij} = 1$.
Since $|E|\geq n$, we have shown that~\eqref{eqn:recipe1} holds for some appropriately
chosen $C_1$ that does not depend on $n$.

Next we need to define our initial estimator to satisfy~\eqref{eqn:recipe2}. We will define a simple choice for intuition (this choice is of course not necessarily
optimal in any sense). Define
\[\thetahi(x) = - \log\left(\frac{1}{|E|} \sum_{(i,j)\in E} x_i x_j\right).\]
We need to check that, with  probability at least $1-n^{-1}$, $|\thetahi(X) - \theta_0|\leq C_4\sqrt{\frac{\log n}{n}}$ for some constant $C_4$ not depending on $n$. 
Since $\theta_0>0$, it is equivalent to check that, with probability at least $1-n^{-1}$,
\[\left|\frac{1}{|E|} \sum_{(i,j)\in E} X_i X_j - e^{-\theta_0}\right| \leq C'\sqrt{\frac{\log n}{n}}\]  for some constant $C'$ not depending on $n$. 
Let $A$ be the adjacency matrix, with entry $A_{ij} = \One{(i,j)\in E}$, and  let $U\Lambda U^\top = \Sigma_{\theta_0}^{1/2} A  \Sigma_{\theta_0}^{1/2} $ be an eigendecomposition. Then
\[\left|\frac{1}{|E|} \sum_{(i,j)\in E} X_i X_j - e^{-\theta_0}\right| = \frac{1}{|E|}\left|\left\langle XX^\top - \Sigma_{\theta_0} , A\right\rangle\right|
=\frac{1}{|E|} \left|\left\langle (U^\top \Sigma_{\theta_0}^{-1/2} X)(U^\top \Sigma_{\theta_0}^{-1/2} X)^\top - \mathbf{I}_n , \Lambda\right\rangle\right|.\]
Since $U^\top \Sigma_{\theta_0}^{-1/2} X\sim \normal(0,\ident_n)$, while $|E|\geq n$,
the desired bound holds as long as the values $\Lambda_{11},\dots,\Lambda_{nn}$ (i.e., the eigenvalues of $\Sigma_{\theta_0}^{1/2} A  \Sigma_{\theta_0}^{1/2}$)
are bounded by some constant $C''$ not depending on $n$. 
Since the eigenvalues of $\Sigma_{\theta_0}$ are bounded by a constant not depending on $n$ by \cite[Proposition D.7]{bachoc2014asymptotic},
equivalently we need to verify that $\norm{A}\leq C'''$ for some constant $C'''$ not depending on $n$---in
 fact, since $A$ is the adjacency matrix of a graph where each vertex has at most $2k$ many neighbors,
we have $\norm{A}\leq 2k$. This establishes~\eqref{eqn:recipe2}.

Finally we verify~\eqref{eqn:recipe3}.  We calculate
\begin{multline*}\frac{\partial}{\partial \theta}\log f(x;\theta_0) = -  \frac{1}{2}x^\top \left(\frac{\partial}{\partial \theta}\Sigma_{\theta_0}^{-1}\right)x  - \frac{1}{2}\frac{\partial}{\partial\theta}\log\det(\Sigma_{\theta_0}) \\
 = -  \frac{1}{2}(\Sigma_{\theta_0}^{-1/2}x)^\top \cdot\Sigma_{\theta_0}^{1/2} \left(\frac{\partial}{\partial \theta}\Sigma_{\theta_0}^{-1}\right)\Sigma_{\theta_0}^{1/2}\cdot (\Sigma_{\theta_0}^{-1/2}x)  - \frac{1}{2}\frac{\partial}{\partial\theta}\log\det(\Sigma_{\theta_0}) 
 \end{multline*}
We know that $\Ewrt{\theta_0}{\frac{\partial}{\partial \theta}\log f(X;\theta_0)}=0$, and moreover,
 $\Sigma_{\theta_0}^{-1/2}X\sim \normal(0,\ident_n)$ and so this quantity has distribution equal  to a weighted sum of centered $\chi^2$ random variables.
By \cite[Proposition D.7]{bachoc2014asymptotic} we know that the eigenvalues of the matrix $\Sigma_{\theta_0}^{1/2} \left(\frac{\partial}{\partial \theta}\Sigma_{\theta_0}^{-1}\right)\Sigma_{\theta_0}^{1/2}$
are bounded by a constant that does not depend on $n$, 
standard $\chi^2$ tail bounds (see, e.g., \cite[Lemma 1]{laurent2000adaptive}) establish that~\eqref{eqn:recipe3} holds for an appropriately chosen $C_5$ not depending on $n$.

\subsubsection{Checking the conditions of Lemma~\ref{lem:theta_init} for Example~\ref{ex:multivariate_t}}
For the multivariate t distribution (Example~\ref{ex:multivariate_t}), 
calculations in \cite[Appendix B]{lange1989robust} show that
\[\big[H(\theta_0)\big]\big(M,M\big)
 = \frac{n}{2} \left(\frac{\gamma + k}{\gamma + k + 2}\fronorm{\theta_0^{-1/2}M\theta_0^{-1/2}}^2 - \frac{1}{\gamma+k + 2}\textnormal{trace}(\theta_0^{-1/2}M\theta_0^{-1/2})^2\right).\]
 Since $\textnormal{trace}(A)\leq \sqrt{k}\fronorm{A}$ for any $A\in\R^{k\times k}$, then we have
\[\big[H(\theta_0)\big]\big(M,M\big)\geq\frac{n}{2} \cdot \frac{\gamma}{\gamma + k + 2}\fronorm{\theta_0^{-1/2}M\theta_0^{-1/2}}^2 ,\]
and so~\eqref{eqn:recipe1} holds with $C_1 = \frac{1}{2}\cdot \frac{\gamma}{\gamma + k + 2}\cdot \opnorm{\theta_0}^{-2}$.

Next, we define our initial estimator.
We will work with the Kendall's $\tau$ correlation: given a data point $x\in(\R^k)^n$, for each $j,j'\in\{1,\dots,k\}$ define
\[T_{jj'} = \frac{1}{{n\choose 2}}\sum_{1\leq i < i' \leq n} \textnormal{sign}\left((x_{ij} - x_{i'j})\cdot(x_{ij'} - x_{i'j'})\right),\]
let $S = S(x)\in\R^{k\times k}$ be defined with entries
$S_{jj'} = \sin\left(\frac{\pi}{2}\cdot T_{jj'}\right)$.
It is well known that for a continuous elliptical distribution (such as the multivariate t), this transformation 
yields an unbiased estimate of the correlation matrix.
We will also estimate $V_j = \frac{\textnormal{Median of $|x_{1j}|,\dots,|x_{nj}|$}}{\textnormal{$0.75$-quantile of $t_{\gamma}$}}$, and let $\Sigma = \Sigma(x)$ have entries
\[\Sigma_{jj'} = S_{jj'}\sqrt{V_j V_{j'}}.\]
Next let $\thetahi(x) = \Sigma(x)^{-1}$ (or define it to take any value if $\Sigma(x)$ is not invertible).

By \cite[Corollary 4.8]{barber2018rocket},  if $n\geq k\log n$, then with probability at least $1-n^{-1}$,
\[\opnorm{S(X) - S_*} \leq C  \sqrt{\frac{k\log n}{n}} \]
for a universal constant $C$,
where $S_*$ is the true correlation matrix, i.e.,
\[(S_*)_{jj'} = \frac{(\theta_0^{-1})_{jj'}}{\sqrt{(\theta_0)^{-1}_{jj}(\theta_0^{-1})_{j'j'}}}.\]
We also have $X_{ij}\iidsim \sqrt{(\theta_0^{-1})_{jj}}\cdot t_\gamma$ (a univariate $t$ distribution) and so we can easily verify that
\[\max_{j=1,\dots,k}|V_j - (\theta_0^{-1})_{jj}| \leq C' \sqrt{\frac{\log n}{n}}\]
with probability at least $1-n^{-1}$.
Combining these bounds, this means that
\[\opnorm{\thetahi(X) - \theta_0} \leq C''  \sqrt{\frac{k\log n}{n}} .\]
Since this is a $k\times k$ matrix, therefore
\[\fronorm{\thetahi(X) - \theta_0} \leq C''\sqrt{\frac{k^2\log n}{n}} = C'''\sqrt{\frac{\log n}{n}} ,\]
which verifies~\eqref{eqn:recipe2}.

Finally we check~\eqref{eqn:recipe3}.
We calculate
\[\nabla\log f(X;\theta_0)
 = \frac{n}{2}\theta_0^{-1} - \frac{\gamma+k}{2} \sum_{i=1}^n \frac{X_iX_i^\top}{\gamma + X_i^\top \theta_0 X_i},\]
 which is a sum of $n$ i.i.d.~mean-zero terms.
 Observe also that for any $z\in\R^k$, $\opnorm{ \frac{zz^\top}{\gamma + z^\top \theta_0 z}} \leq \lambda_{\min}(\theta_0)^{-1}$, so the terms
are uniformly bounded. 
By the matrix Hoeffding inequality \citep[Theorem 1.3]{tropp2012user} along with the bound  
$\fronorm{\nabla\log f(X;\theta_0)}\leq \sqrt{k}\opnorm{\nabla\log f(X;\theta_0)}$,
 we therefore have
\[\prwrt{\theta_0}{\fronorm{\nabla\Lcal(\theta_0;X)} \geq t\sqrt{k}} \leq 2k\exp\left\{-\frac{t^2}{8n\cdot (\frac{\gamma+k}{2})^2 \cdot \lambda_{\min}(\theta_0)^{-2}}\right\}\]
for any $t>0$. Taking $t\asymp \sqrt{n\log n}$ is sufficient to establish~\eqref{eqn:recipe3}.

\subsection{Proof of Lemma~\ref{lem:theta_init}}
Suppose that the statements~\eqref{eqn:theta_init} all hold, which is satisfied with probability at least $1-\delta_{\textnormal{init}}(\theta_0)$ by assumption.
Suppose also that the random vector $W$ satisfies
 \[\norm{W}< \frac{r_{\textnormal{init}}(\theta_0)\lambda_{\textnormal{cvx}}(\theta_0)}{\sigma}.\]
 Since $\norm{W}^2\sim \tfrac{1}{d}\chi^2_d$ by definition, using standard $\chi^2$ tail bounds (see, e.g., \cite[Lemma 1]{laurent2000adaptive})
we can calculate
\[\pr{\norm{W} <  \frac{r_{\textnormal{init}}(\theta_0)\lambda_{\textnormal{cvx}}(\theta_0)}{\sigma}} \geq 1- \exp\left\{ -\frac{1}{2} \max\left\{\frac{r_{\textnormal{init}}(\theta_0)\lambda_{\textnormal{cvx}}(\theta_0)}{\sigma} - 1,0\right\}^2\right\}.\]
Therefore, with probability at least $1-\delta(\theta_0)$ (where $\delta(\theta_0)$ is defined as in the statement of the lemma),
the bounds~\eqref{eqn:theta_init} all hold and $\norm{W}$ satisfies the bound above.
From this point on we will assume these bounds all hold.

Let $\theta_*\in\ball(\theta_0,r_{\textnormal{init}}(\theta_0))$ be a FOSP of $\Lcal(\theta;X)$,
and let $\thetah=\thetah(X,W)\in\ball(\thetahi(X),\widehat{r}_{\textnormal{init}}(X))$ be a FOSP of the constrained problem
\[\min_{\theta\in\ball(\thetahi(X),\widehat{r}_{\textnormal{init}}(X))} \Lcal(\theta;X,W).\]
Then \[\norm{\theta_* - \thetahi(X)}\leq \norm{\thetahi(X)-\theta_0}+\norm{\theta_*-\theta_0}\leq 2r_{\textnormal{init}}(\theta_0)\leq \widehat{r}_{\textnormal{init}}(X),\]
and so $\theta_*$ also lies in the convex constraint set $\ball(\thetahi(X),\widehat{r}_{\textnormal{init}}(X))$.
Since $\norm{\thetahi(X)-\theta_0}\leq r_{\textnormal{init}}(\theta_0) \leq  r_{\textnormal{cvx}}(\theta_0) - \widehat{r}_{\textnormal{init}}(X)$, this means that $\thetah$ and $\theta_*$
both lie in $\ball(\theta_0,r_{\textnormal{cvx}}(\theta_0))$, and so we have $\lambda_{\textnormal{cvx}}(\theta_0)$-strong convexity in this region.
Therefore we have
\begin{align*}
0&\leq (\theta_*-\thetah)^\top \nabla_\theta\Lcal(\thetah;X,W)\\
&=  (\theta_*-\thetah)^\top \nabla_\theta\Lcal(\thetah;X) + \sigma (\theta_*-\thetah)^\top W\\
&\leq  (\theta_*-\thetah)^\top \nabla_\theta\Lcal(\theta_*;X)  - \lambda_{\textnormal{cvx}}(\theta_0)\norm{\theta_*-\thetah}^2+ \sigma \norm{\theta_*-\thetah}\norm{W}\\
&\leq  - \lambda_{\textnormal{cvx}}(\theta_0)\norm{\theta_*-\thetah}^2 + \sigma \norm{\theta_*-\thetah}\norm{W}\\
&<  - \lambda_{\textnormal{cvx}}(\theta_0)\norm{\theta_*-\thetah}^2 + r_{\textnormal{init}}(\theta_0)\lambda_{\textnormal{cvx}}(\theta_0)\norm{\theta_*-\thetah},
\end{align*}
where the next-to-last step holds since $\theta_*$ is a FOSP of the unconstrained problem $\min_\theta\Lcal(\theta;X)$,
and the last step holds
as long as $\norm{\theta_*-\thetah}>0$ by our bound on $\norm{W}$. Therefore, we must have
\[\norm{\thetah-\theta_*} <  r_{\textnormal{init}}(\theta_0).\]
In particular this implies
\[\norm{\thetah-\theta_0}\leq \norm{\thetah-\theta_*}+\norm{\theta_*-\theta_0} <  2r_{\textnormal{init}}(\theta_0)\leq \widehat{r}_{\textnormal{init}}(X).\]
Finally, we need to check that $\thetah$ is a SSOSP.
We have
\[\norm{\thetah - \thetahi(X)}\leq \norm{\thetah-\theta_0} + \norm{\thetahi(X)-\theta_0} <3r_{\textnormal{init}}(\theta_0) \leq \widehat{r}_{\textnormal{init}}(X),\]
which means that $\thetah$ is in the interior of the constraint set $\ball(\thetahi(X),\widehat{r}_{\textnormal{init}}(X))$. Therefore, $\thetah$ must be a FOSP
of the unconstrained problem $\min_\theta\Lcal(\theta;X,W)$.
Finally, since $\thetah\in\ball(\theta_0,r_{\textnormal{cvx}}(\theta_0))$ as calculated above, $\Lcal(\theta;X)$ (and therefore also $\Lcal(\theta;X,W)$) has strong convexity at $\theta=\thetah$.
This completes the proof.

%%%%%%%%%%%%%%%%%%%%%%%%%%%%%%%%%%%%%%%%%%%%%%%%%%%%%%%%%%%
%% appendix (computation)
%%%%%%%%%%%%%%%%%%%%%%%%%%%%%%%%%%%%%%%%%%%%%%%%%%%%%%%%%%%

\section{Computational considerations}\label{app:compute}

\subsection{Optimization of \eqref{eqn:thetah_def}}\label{app:optimization}
If the unperturbed penalized maximum likelihood problem is (strongly) convex, then \eqref{eqn:thetah_def} is also (strongly) convex. Since the linear perturbation only changes the gradient by a fixed constant and does not affect the Hessian, any convex solver that relies on first- and second-order derivatives to solve the unperturbed problem can be immediately adapted to run on \eqref{eqn:thetah_def}. Note that even strong convexity does not guarantee the unperturbed penalized maximum likelihood problem has any local optima, since $\Theta$ could be constrained to a region with no minima. However, as long as the unperturbed problem is convex and has a local optimum, the perturbation can only lead to a lack of local optima if there exists a direction $z\in\R^d$ such that 
\begin{equation}\label{eq:tipover}
-\sigma W^\top z \ge \max_{\theta\in\Theta}\{\nabla_\theta\Lcal(\theta; X)^\top z\}.
\end{equation}
Since we control $\sigma$ in the aCSS algorithm, we can always choose it to be sufficiently small as to make \eqref{eq:tipover} very unlikely (and
moreover, if $\Theta=\R^d$ is unconstrained and the unperturbed problem is strongly convex, then~\eqref{eq:tipover} cannot occur at any $\sigma$).
 Indeed, when $X$ is composed of $n$ i.i.d. samples, the right-hand side of \eqref{eq:tipover} will grow at a rate of $\sqrt{n}$, while in Section~\ref{sec:theory_asymptotic_view}, we noted that Theorem~\ref{thm:main} required $\sigma$, and hence the left-hand side of \eqref{eq:tipover}, to grow at a rate that is vanishing compared to $\sqrt{n}$. The same story holds for non-convex functions locally for a well-behaved basin of attraction: the random perturbation can cause problems but not if you choose it sufficiently small. Note that the cost of $\thetah$ failing to return a SSOSP of \eqref{eqn:thetah_def} is conservativeness of the aCSS test (but not loss of validity!), since when it fails to return a SSOSP the test will return a $p$-value of 1.

Although $\sigma$ can always be chosen to be very small, this can incur a different computational cost in terms of sampling the copies $\Xt^{(m)}$. In particular, as we will see in the next subsection, reducing $\sigma$ leads to ``smaller" MCMC steps, i.e., starting at some state $X'$ and taking a single step in the reversible Markov chain we will use for sampling will produce a state that is highly-related to $X'$ or may even be identical to it with high probability. One solution to this is to simply take $L$, the number of steps we take in the Markov chain between samples, to be very large, so at least with sufficient computational resources it should always be possible to choose $\sigma$ sufficiently small so as to not adversely affect the optimization of \eqref{eqn:thetah_def} relative to the unperturbed maximum likelihood problem.

\subsection{Sampling the conditional randomizations}\label{app:sampling_randomizations}
Due to the conditioning on $\thetah$, the solution to an optimization problem, we only expect to be able to perform the exact sampling i.i.d. from Equation~\eqref{eqn:density_hat} in special cases when both the conditional distribution of $X$ is very simple and $\thetah$ can be found in closed form. Aside from very special cases, we expect almost any model and/or estimator to require one of the MCMC samplers.

Recall the density we are targeting in Equation~\eqref{eqn:density_hat}:
\begin{equation*} 
p_{\thetah}(x\giv \thetah) \propto f(x;\thetah)\cdot \exp\left\{ - \frac{\norm{\nabla_\theta\Lcal(\thetah;x)}^2}{2\sigma^2/d}\right\}\cdot \det\left(\nabla^2_\theta \Lcal(\thetah;x)\right)\cdot\one{x\in\Xcal_{\thetah}},
\end{equation*}
with respect to the base measure $\nu_\Xcal$.
Both MCMC sampling schemes assume the ability to take steps in a reversible Markov chain
whose stationary distribution has the above density.
We will now show that it is feasible to construct an efficient sampling scheme using Metropolis--Hastings (MH).

Given $\thetah$, we first choose a proposal distribution $q_{\thetah}(x\giv x')$---we will discuss this choice below.
Fixing $q_{\thetah}(x\giv x')$, 
 we can write the MH acceptance probability for a proposal $x$ from a previous iteration $x'$ as
\[
A_{\thetah}(x\giv x')
:= \min\left\{1, \frac{p_{\thetah}(x\giv\thetah)\,q_{\thetah}(x'\giv x)}{p_{\thetah}(x'\giv\thetah) \,q_{\thetah}(x\giv x')} \right\}.\]
Our reversible MCMC is then given by the following: 
\begin{itemize}
\item Starting at state $x'$, generate a proposal $x$ according to the proposal distribution $q_{\thetah}(\cdot\giv x')$.
\item With probability $A_{\thetah}(x\giv x')$, set the next state to equal $x$. Otherwise, the next state is set to equal $x'$.
\end{itemize}

To verify that this yields a computationally feasible method, we need to check two things: first,
that the acceptance probability $A_{\thetah}(x\giv x')$ is not too low (i.e., its average value is bounded away from zero), in order to ensure
that our chain length $L$ does not need to be taken to be too large,
and second, that the acceptance probability $A_{\thetah}(x\giv x')$ can be calculated efficiently.
The first consideration, ensuring that $A_{\thetah}(x\giv x')$ is not too low, will be specific to the problem and will discuss
this for specific examples below. To check that we can efficiently calculate the acceptance probability $A_{\thetah}(x\giv x')$, 
by definition of $p_{\thetah}(\cdot\giv\thetah)$ we see that $A_{\thetah}(x\giv x')$ can be written as
\[
A_{\thetah}(x\giv x')
=\min\left\{1, \ \frac{q_{\thetah}(x'\giv x)}{ q_{\thetah}(x\giv x')} 
\cdot
 \frac{f(x;\thetah)\exp\left\{ - \frac{\norm{\nabla_\theta\Lcal(\thetah;x)}^2}{2\sigma^2/d}\right\}\det\left(\nabla^2_\theta \Lcal(\thetah;x)\right)}
{f(x';\thetah)\exp\left\{-\frac{\norm{\nabla_\theta\Lcal(\thetah;x')}^2}{2\sigma^2/d}\right\}\det\left(\nabla^2_\theta \Lcal(\thetah;x')\right)} 
\cdot
 \frac{\one{x\in\Xcal_{\thetah}}}
{\one{x'\in\Xcal_{\thetah}}} \right\}.
\]
We consider the three fractions appearing in this expression. The first two are generally straightforward
to calculate, but the
 last ratio, with the indicator variables, requires more careful consideration.
In the denominator, we will have $\one{x'\in\Xcal_{\thetah}}=1$ always, since $x'$ denotes the current state which is therefore
a draw from the density~\eqref{eqn:density_hat} supported on $\Xcal_{\thetah}$. Turning to the numerator, however,
we see that we do need to verify that our proposed state $x$
also lies in $\Xcal_{\thetah}$. To do so, we observe that for any $\theta$,
\begin{align*} 
\one{x\in\Xcal_{\theta}}&= \One{\text{for some }w\in\R^d\text{, }\thetah(x,w)=\theta\text{ and }\theta\text{ is a SSOSP of }\Lcal(\theta;x,w)} \\
&=\One{ \thetah\left(x, -\frac{\nabla_\theta\Lcal(\theta;x)}{\sigma}\right) = \theta\text{, and }\theta\text{ is a SSOSP of }\Lcal\left(\theta;x,-\frac{\nabla_\theta\Lcal(\theta,x)}{\sigma}\right)}\\
&=\One{ \thetah\left(x, -\frac{\nabla_\theta\Lcal(\theta;x)}{\sigma}\right) = \theta\text{, and }\nabla^2_\theta\Lcal(\theta;x)\succ 0}.
\end{align*}
In other words, given the proposed state $x$, we need only verify (1) that $\nabla^2_\theta\Lcal(\thetah;x)\succ 0$, which is a
simple calculation, and (2) that the estimator $(x,w)\mapsto \thetah(x,w)$, when calculated with this proposed $x$
and with $w=  -\frac{\nabla_\theta\Lcal(\thetah;x)}{\sigma}$, indeed returns the observed value $\thetah$.
We note that, in the special case that $\Lcal$ is strictly convex, then
this verification is trivial---if we take the map $(x,w)\mapsto\thetah(x,w)$ to be the output of any solver guaranteed to return the unique FOSP (if it exists), then (2) is automatically verified since we know that $\thetah$ is a FOSP of $\Lcal(\theta;x,w)$ by definition of $w$, while (1) holds by strict convexity of $\Lcal$.

\subsubsection{Choosing the proposal distribution}
To choose the proposal distribution $q_{\thetah}(x\giv x')$, we will bear in mind the following considerations.
First, we need to be able to efficiently draw a sample from $q_{\thetah}(\cdot\giv x')$.
Second, we need to trade off between the following two goals:
given our current state $X_{\textnormal{curr}}$ and a proposed state $X_{\textnormal{prop}}\sim q_{\thetah}(\cdot\giv X_{\textnormal{curr}})$,
\begin{itemize}
\item The acceptance probability $A_{\thetah}(X_{\textnormal{prop}}\giv X_{\textnormal{curr}})$ should not be too close to zero.
\item There should not be too much similarity or dependence between $X_{\textnormal{curr}}$ and $X_{\textnormal{prop}}$.
\end{itemize}
To illustrate this tradeoff, if we define $q_{\thetah}(\cdot\giv X_{\textnormal{curr}})$ as the point mass at $X_{\textnormal{curr}}$ (i.e., we never move to a new state),
then the  acceptance probability $A_{\thetah}(X_{\textnormal{prop}}\giv X_{\textnormal{curr}})$ will be equal to 1 almost surely,
but the algorithm will return copies $\Xt^{(1)} = \dots = \Xt^{(M)} = X$, leading to a powerless procedure.
On the other hand, if we define $q_{\thetah}(\cdot\giv X_{\textnormal{curr}})$ to draw
$X_{\textnormal{prop}}$ to be independent or nearly independent of $X_{\textnormal{curr}}$ (for example, $X_{\textnormal{prop}}\sim P_{\thetah}$),
then it may be hard to ensure that $p_{\thetah}(X_{\textnormal{prop}}\giv\thetah)$ is sufficiently large
to bound $A_{\thetah}(X_{\textnormal{prop}}\giv X_{\textnormal{curr}})$ away from zero.

Given the well-known challenge of hyperparameter tuning in the field of MCMC \citep{roberts2009examples}, we can expect that this will be highly non-trivial and problem-dependent. But one appealing aspect of aCSS testing is that we can tune the MCMC hyperparameters \emph{after} looking at $\thetah$ without violating any of our theory. We demonstrate how we did so in our four examples below.

\paragraph{Examples~\ref{ex:canonical_glm},~\ref{ex:behrens_fisher}, and~\ref{ex:multivariate_t}}
First, we consider the three examples where our model $P_\theta$ for $X$ consists of $n$ independent draws---that is,
$P_\theta$ is a product distribution with density
\[f_\theta(x) = \prod_{i=1}^n f^{(i)}_\theta(x_i).\]
In this setting, we begin by fixing a parameter $s\in\{1,\dots,n\}$ (we will discuss the choice of $s$ shortly).
Then the proposal distribution $q_{\thetah}(x|x')$ is defined as follows:
\begin{itemize}
\item Draw a subset $\mathcal{S}\subseteq\{1,\dots,n\}$ of size $s$, uniformly at random.
\item For each $i=1,\dots,n$,
\begin{itemize}
\item If $i\in \mathcal{S}$, draw $x_i \sim f^{(i)}_{\thetah}(\cdot)$.
\item If $i\not\in \mathcal{S}$, set $x_i = x'_i$.
\end{itemize}
\end{itemize}
We can see that the parameter $s$ controls the tradeoff---a larger $s$ ensures
then the proposed state $x=X_{\textnormal{prop}}$ will not be too similar to the previous state $x'=X_{\textnormal{curr}}$,
but a smaller $s$ ensures that the acceptance ratio $A_{\thetah}(X_{\textnormal{prop}}\giv X_{\textnormal{curr}})$ will not be too low (since,
when most entries $i=1,\dots,n$ of $X_{\textnormal{prop}}$ coincide with those of $X_{\textnormal{curr}}$,
the ratio $ \frac{p_{\thetah}(X_{\textnormal{prop}}\giv\thetah)}{p_{\thetah}(X_{\textnormal{curr}}\giv\thetah)}$ 
should be close to 1).

Next, how can we choose $s$ to balance between these two considerations?
For these examples, we will choose $s$ from the data itself. First, we observe that allowing $s$ to depend on $\thetah$ does not violate
the validity of our procedure. This is because the mechanism $\Pt_M(\cdot\giv X,\thetah)$
for sampling the copies is only required to satisfy assumption~\eqref{eqn:exch_some_theta}; 
it is allowed to depend arbitrarily on $\thetah$, as long as exchangeability between $X$ and $\Xt^{(1)},\dots,\Xt^{(M)}$ is not violated. 
(In particular,
this means that we {\em cannot} use the data $X$ itself to choose $s$.)
We will choose $s$ by simulating the procedure with $\thetah$ in place of $\theta_0$:
\begin{itemize}
\item Let $\theta_0^{\textnormal{sim}} = \thetah$.
\item Draw $X^{\textnormal{sim}}\sim P_{\theta_0^{\textnormal{sim}}}$.
\item For each candidate choice of $s$, run Metropolis--Hastings initialized at $X^{\textnormal{sim}}$, 
and compute the average acceptance probability.
\item Repeat for many draws of $X^{\textnormal{sim}}$ to get an average acceptance probability $\bar{A}_s$ for each $s$, and among all values of $s$ such that $\bar{A}_s\ge 0.2$, choose the value of $s$
that maximizes $s\bar{A}_s$ (thus maximizing the expected number of elements that change at each MH step).
\end{itemize}
With this choice of $s$, 
we have completed our $\thetah$-dependent definition of the proposal distribution $q_{\thetah}(x\giv x')$ for this setting. Then we choose $L$ to be at least $\frac{n}{s\bar{A}_s}$ to ensure that (most) entries will be resampled within $L$ steps; in our simulations we chose $L$ to be $\min\{500,\frac{2n}{s\bar{A}_s}\}$ (rounded to an integer).

\paragraph{Example~\ref{ex:gaussian_spatial}}
Next we consider the Gaussian spatial process.
Here we will again define a parametrized proposal distribution, and will then choose the parameter
by simulation. For any $\rho\in(0,1)$, define the proposal distribution 
 $q_{\thetah}(x|x')$ as follows:
\begin{itemize}
\item Draw $x_{\textnormal{tmp}} \sim \normal(0,\Sigma_{\thetah})$.
\item Set $x =  \rho\cdot x' + \sqrt{1-\rho^2}\cdot x_{\textnormal{tmp}}$.
\end{itemize}
As for the examples above, the value of $\rho$ governs the tradeoff---in this case,
a smaller $\rho$ ensures
then the proposed state $x=X_{\textnormal{prop}}$ will not be too similar to the previous state $x'=X_{\textnormal{curr}}$,
but a larger $\rho$ ensures that the acceptance ratio $A_{\thetah}(X_{\textnormal{prop}}\giv X_{\textnormal{curr}})$ will not be too low.
In each trial, we will choose $\rho$ with a simulation, analogous to the choice of $s$ for the other examples:
\begin{itemize}
\item Let $\theta_0^{\textnormal{sim}} = \thetah$.
\item Draw $X^{\textnormal{sim}}\sim P_{\theta_0^{\textnormal{sim}}}$, $W\sim\normal(0,\tfrac{1}{d}\ident_d)$, 
and calculate $\thetah=\thetah(X^{\textnormal{sim}},W)$.
\item For each candidate choice of $\rho$, run one step of Metropolis--Hastings initialized at $X^{\textnormal{sim}}$,
to generate $X^{\textnormal{new}}$.
\item Repeat for 500 draws of $X^{\textnormal{sim}}$ (discarding any draws for which $\thetah(X^{\textnormal{sim}},W)$ is not a SSOSP). Among all values of $\rho$
that achieve average acceptance probability $\geq 0.05$, find the value of $\rho$ that minimizes the average
correlation between $X^{\textnormal{sim}}$ and $X^{\textnormal{new}}$.
\end{itemize}
With this choice of $\rho$, writing $\hat\rho$ to denote  the average
correlation between $X^{\textnormal{sim}}$ and $X^{\textnormal{new}}$, we then set $L=\min\{500,\frac{20}{1-(\hat\rho)_+}\}$ (rounded to an integer).

\section{Details for Figure~\ref{fig:parboot}}\label{app:details_parboot}

In this section we give details for the simulation that generated Figure~\ref{fig:parboot}, comparing the parametric bootstrap versus co-sufficient sampling for a Gaussian linear model setting as described in Section~\ref{sec:intro}. Recall that the null hypothesis for this example is the model
\[X = \theta\cdot Z + \normal(0,\ident_n)\]
for some $\theta\in\R$, where $Z\in\R^n$ is a fixed covariate vector. We are interested in testing the alternative hypothesis that $X$ is in fact more strongly associated with some other covariate $Y\in\R^n$, and so our test statistic is given by 
\[T(X) = \frac{(X^\top Y)^2}{(X^\top Z)^2}.\]

To generate the data, we choose sample size $n=100$, and then independently for each $i=1,\dots,n$, we generate the triple $(X_i,Y_i,Z_i)$ by taking
\[(Y_i,Z_i) \sim \normal\left(0, \left(\begin{array}{cc} 1 &\ \rho \\ \rho &\ 1 \end{array} \right) \right),\]
with correlation parameter $\rho =0.97$, and define
\[X_i = \theta_0 \cdot Z_i + \normal(0,1),\]
where the true parameter is chosen as $\theta_0 =0$.

Next we run parametric bootstrap and CSS to generate copies $\Xt^{(1)},\dots,\Xt^{(M)}$ of the data $X\in\R^n$, for $M=500$. For both methods, the MLE is given by $\thetah = (Z^\top Z)^{-1}Z^\top X$. To run the parametric bootstrap, we generate the copies from the distribution with parameter $\theta=\thetah$, that is, we define the copies as
\[\Xt^{(m)}_{\textnormal{boot}} = \thetah \cdot Z + V_m,\]
where $V_m \iidsim \normal(0,\mathbf{I}_n)$.
To run CSS, we instead condition on the MLE $\thetah$, and the copies can therefore be generated as
\[\Xt^{(m)}_{\textnormal{CSS}} = \thetah \cdot Z + \textnormal{Proj}_Z^{\perp} \cdot V_m,\]
where again $V_m \iidsim \normal(0,\mathbf{I}_n)$.

Finally, we repeat the simulation for 10,000 independent trials to generate the histograms of p-values for each method, as shown in Figure~\ref{fig:parboot}.

%%%%%%%%%%%%%%%%%%%%%%%%%%%%%%%%%%%%%%%%%%%%%%
%% Acknowledgements
%%%%%%%%%%%%%%%%%%%%%%%%%%%%%%%%%%%%%%%%%%%%%%
\section*{Acknowledgements}
The authors would like to thank Michael Bian for help with some of the computation.
The first author was supported by the National Science Foundation via grant DMS--1654076, and by the Office of Naval Research via grant N00014-20-1-2337.

\bibliographystyle{plainnat}
\bibliography{bibliography}

\end{document}